\documentclass[11pt]{article}
\usepackage{times}              
\usepackage{latexsym}
\usepackage{amsmath}
\usepackage{amsfonts}
\usepackage{amsthm}
\usepackage{epsfig}
\usepackage{cite}
\usepackage{url}
\usepackage{graphicx}
\usepackage{fancyheadings}
\usepackage{algorithmic}
\usepackage{algorithm}
\usepackage{subfigure}
\usepackage{fullpage}
\usepackage{xcolor,colortbl}
\usepackage{epstopdf}
\usepackage{bbm}
\usepackage{xspace}
\usepackage{hyperref}

\newcommand{\yell}[1]{{\color{red} \textbf{#1}}}
\newcommand{\red}[1]{\textbf{\textcolor{red}{[#1]}}}

\newcommand{\spn}{\mathrm{span}}

\newcommand{\argmin}{\mathop{\textnormal{argmin}}}

\newcommand{\R}{\mathbb{R}}

\newcommand{\pr}{\mathop{\textnormal{Prob}}}

\newcommand{\be}{\begin{enumerate}}
\newcommand{\ee}{\end{enumerate}}
\newcommand{\bi}{\begin{itemize}}
\newcommand{\ei}{\end{itemize}}
\newcommand{\beq}{\begin{equation}}
\newcommand{\eeq}{\end{equation}}

\newcommand{\bp}{\begin{proof}}
\newcommand{\ep}{\end{proof}}
\newcommand{\bcor}{\begin{cor}}
\newcommand{\ecor}{\end{cor}}
\newcommand{\bthm}{\begin{thm}}
\newcommand{\ethm}{\end{thm}}
\newcommand{\blmm}{\begin{lmm}}
\newcommand{\elmm}{\end{lmm}}
\newcommand{\bdefn}{\begin{defn}}
\newcommand{\edefn}{\end{defn}}
\newcommand{\bprop}{\begin{prop}}
\newcommand{\eprop}{\end{prop}}
\newcommand{\bconj}{\begin{conj}}
\newcommand{\econj}{\end{conj}}
\newcommand{\bopm}{\begin{opm}}
\newcommand{\eopm}{\end{opm}}
\newcommand{\brmk}{\begin{rmk}}
\newcommand{\ermk}{\end{rmk}}

\newcommand{\inner}[1]{\left\langle #1 \right\rangle}

\newcommand{\rank}[1]{\mathop{\mathrm{rank}}\left(#1\right)}
\newcommand{\spark}[1]{\mathop{\mathrm{spark}}\left(#1\right)}
\newcommand{\mv}[1]{\mathbf{#1}}

\theoremstyle{plain}                   
\newtheorem{thm}{Theorem}[section]
\newtheorem{lmm}[thm]{Lemma}
\newtheorem{clm}[thm]{Claim}
\newtheorem{prop}[thm]{Proposition}
\newtheorem{cor}[thm]{Corollary}

\theoremstyle{definition}              

\newtheorem{opm}{Open Problem}
\newtheorem{conj}[thm]{Conjecture}

\newtheorem{defn}[thm]{Definition}

\newtheorem{rmk}[thm]{Remark}

\newcommand{\bbox}{
\begin{center}
\begin{tabular}{|c|}
\hline
}
\newcommand{\ebox}{
\\
\hline
\end{tabular}
\end{center}
}

\newlength{\toppush}
\setlength{\toppush}{2\headheight}

\newcommand{\eps}{\epsilon}

\newcommand{\vb}{\mv{b}}
\newcommand{\vA}{\mv{A}}
\newcommand{\Z}{\mathbb{Z}}
\newcommand{\C}{\mathbb{C}}

\newcommand{\iconst}{{\rm i}}
\newcommand{\Kerdock}{\mathcal{K}}
\newcommand{\inabs}[1]{\left|#1\right|}
\newcommand{\vbeta}{{\boldsymbol\beta}}
\newcommand{\mspn}[1]{\mathrm{span}(#1)}
\newcommand{\epsbound}{\sqrt{1-17\mu}}

\newcommand{\wand}{\quad\textnormal{\textbf{and}}\quad}
\newcommand{\wor}{\quad\textnormal{\textbf{or}}\quad}

\newcommand{\sparse}{\textsc{Sparse}\xspace}
\newcommand{\listsparse}{\textsc{List-Sparse}\xspace}
\newcommand{\listapprox}{\textsc{List-Approx}\xspace}

\newcommand{\cL}{\mathcal{L}}

\newcommand{\msupp}[1]{\mathrm{supp}\left(#1\right)}

\setcounter{page}{0}


\title{\textbf{Sparse Approximation, List Decoding, and Uncertainty Principles}\footnote{MAK, HQN and AR's research is partially supported by NSF grant CCF-1161196. ACG's research is partially supported by NSF grant CCF-1161233.}}
\author{\textsc{Mahmoud Abo Khamis}\footnotemark[2]
\and
\textsc{Anna C.~Gilbert}\footnotemark[3]
\and
\textsc{Hung Q.~Ngo}\footnotemark[2]
\and
\textsc{Atri Rudra}\footnotemark[2]
}

\date{\today\\
\vspace*{4mm}
\footnotemark[2]~ Department of Computer Science and Engineering,\\
University at Buffalo, SUNY.\\
\texttt{\{mabokham,hungngo,atri\}@buffalo.edu}\\
\vspace*{2mm}
\footnotemark[3]~ Department of Mathematics,\\
University of Michigan.\\
\texttt{annacg@umich.edu}}

\begin{document}

\maketitle

\thispagestyle{empty}

\begin{abstract}
We consider list versions of sparse approximation problems, where unlike the existing results in sparse approximation that consider situations with unique solutions, we are interested in multiple solutions. We introduce these problems and present the first combinatorial results on the output list size. These generalize and enhance some of the existing results on threshold phenomenon and uncertainty principles in sparse approximations. Our definitions and results are inspired by similar results in list decoding. We also present lower bound examples that bolster our results and show they are of the appropriate size.

\end{abstract}

\newpage


\section{Introduction} \label{sec:intro}
One of the fundamental mathematical problems in modern signal processing, data compression, dimension reduction for large data sets, and streaming algorithms is the \emph{sparse approximation problem}: Given a matrix or a redundant dictionary $\mv A \in \R^{m \times N}$ and a vector $\mv b \in \R^m$, find
\begin{equation}
\label{eqn:exact}
	\argmin \|\mv x\|_0 \quad\text{such that}\quad \mv A \mv x = \mv b.
\end{equation}
In other words, find the sparsest $\mv x \in \R^N$ (i.e., the vector with the fewest non-zero entries) such that $\mv A \mv x = \mv b$ or that represents $\mv b$ exactly. In general, this problem is NP-complete. There are several variations of this problem, including $k$-sparse approximation
\begin{equation}
	\label{eqn:k-sparse}
	\min \| \mv A \mv x - \mv b\|_2 \quad\text{such that}\quad \|\mv x\|_0 \leq k
\end{equation}
(i.e., find the $k$-sparse vector $\mv x$ that minimizes $\|\mv A \mv x - \mv b\|_2$), as well as the convex relaxation of Equation~\eqref{eqn:exact}
\begin{equation}
	\label{eqn:l1min}
	\argmin \|\mv x\|_1 \quad\text{such that}\quad \mv A \mv x = \mv b.
\end{equation}
All of these variations capture different aspects of the applications of sparse approximation in signal processing, streaming algorithms, etc.\footnote{We remark that these problems are related to the sparse recovery/compressed sensing but differ in the way in which the vector $\mv x$ is evaluated and the conditions on $\mv A$. We refer the reader to the survey by Gilbert and Indyk~\cite{cs-survey} for more details.} 

All of these problems exhibit fundamental trade-offs and threshold phenomena. We highlight two such examples: (i) a phase transition in the convex optimization in Equation~\eqref{eqn:l1min} and (ii) a simple relationship between the sparsity $k$ of $\mv x$ and the coherence $\mu$ of $\mv A$ (the maximum, in absolute value,  dot-product between any two pairs of columns) when $\mv A$ is the union of two orthonormal bases for $\R^m$ that ensures the uniqueness of the solution to Equation~\eqref{eqn:exact}. Donoho and Tanner~\cite{DonohoTanner:Thresholds2006,DonohoTanner:Neighborliness2005} first observed that if we define the convex program successful when it returns the unique optimal vector $\hat{\mv x}$ that equals the true unknown solution $\mv x$, then the observed probability of success (taken over the random choice of $\mv A$) exhibits a sharp phase transition as the sparsity ratio $k/N$ and the redundancy ratio $m/N$ range from 0 to 1. Many other authors have analyzed and demonstrated empirically this phase transition in a variety of other convex programs, including Amelunxen et al.~\cite{Tropp:Edge2013} who provide a geometric theory of this ubiquitous phenomenon. However, these results are not applicable to the original sparse approximation problem \eqref{eqn:k-sparse}, which is not a convex optimization problem.

The second trade-off or threshold phenomenon was first realized by Donoho and Huo~\cite{DonohoHuo:UP2001} and then expanded upon in a series of papers by Bruckstein, Elad, and others~\cite{DonohoElad:2003,GribonvalNielsen:2007,EladBruckstein:UPPairs2002,BrucksteinDonohoElad:Review2009}. In its simplest form, we assume that $\mv A = [\Phi, \Psi]$ is the union of two orthonormal bases in $\R^m$ and that the coherence 
 of $\mv A$ is $\mu$.
\begin{thm}~\cite{EladBruckstein:UPPairs2002}
	Assuming
\begin{equation}
	\label{eqn:cohbound}
	k < \frac{1}{\mu},
\end{equation}
the exact sparse representation problem~\eqref{eqn:exact} has a unique solution.
\end{thm}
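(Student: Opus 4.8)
The plan is to reduce the uniqueness claim to an \emph{uncertainty principle}: no nonzero signal can be simultaneously sparse in both $\Phi$ and $\Psi$. Suppose, for contradiction, that the sparsest representation $\mv x_1$ of $\mv b$ satisfies $\|\mv x_1\|_0 = k < 1/\mu$ yet is not unique, so that there is a distinct $\mv x_2 \ne \mv x_1$ with $\mv A \mv x_2 = \mv b$ and $\|\mv x_2\|_0 \le k$. Then $\mv e = \mv x_1 - \mv x_2$ is a nonzero vector in the null space of $\mv A$, and $\|\mv e\|_0 \le \|\mv x_1\|_0 + \|\mv x_2\|_0 \le 2k < 2/\mu$. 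I will derive a contradiction by showing that every nonzero null-space vector of $\mv A$ must have at least $2/\mu$ nonzero entries.

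First I would split the coordinates of $\mv e$ according to the two bases, writing $\mv e = (\vbeta, \gamma)$ so that $\mv A \mv e = \Phi \vbeta + \Psi \gamma = 0$, i.e. $\Phi \vbeta = -\Psi \gamma =: \mv v$. Since $\Phi$ and $\Psi$ are orthonormal and hence invertible, $\mv e \ne 0$ forces $\mv v \ne 0$; moreover $\vbeta = \Phi^{T} \mv v$ and $\gamma = -\Psi^{T}\mv v$, so $\mv v$ is a single nonzero signal expanded in each basis, with $\|\vbeta\|_2 = \|\gamma\|_2 = \|\mv v\|_2 \ne 0$.

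The heart of the argument is the product uncertainty bound $\|\vbeta\|_0 \cdot \|\gamma\|_0 \ge 1/\mu^2$. To obtain it, I would use the change-of-basis relation $\gamma = -\Psi^{T}\Phi\,\vbeta$, whose entries satisfy $|(\Psi^{T}\Phi)_{ij}| = |\langle \psi_i, \phi_j\rangle| \le \mu$ by the definition of coherence. This gives $\|\gamma\|_\infty \le \mu\|\vbeta\|_1 \le \mu\sqrt{\|\vbeta\|_0}\,\|\vbeta\|_2$ via Cauchy--Schwarz, while $\|\gamma\|_\infty \ge \|\gamma\|_2 / \sqrt{\|\gamma\|_0}$; cancelling the common factor $\|\mv v\|_2 = \|\vbeta\|_2 = \|\gamma\|_2$ then yields $1 \le \mu^2\|\vbeta\|_0\|\gamma\|_0$. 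Applying AM--GM gives the sum bound $\|\mv e\|_0 = \|\vbeta\|_0 + \|\gamma\|_0 \ge 2\sqrt{\|\vbeta\|_0\,\|\gamma\|_0} \ge 2/\mu$, which contradicts $\|\mv e\|_0 < 2/\mu$ and completes the proof.

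I expect the main obstacle to be the product uncertainty bound, and in particular the careful interplay of the $\ell_0$, $\ell_1$, $\ell_2$, and $\ell_\infty$ norms together with the invocation of coherence at exactly the right step; the null-space reduction and the final AM--GM step are routine once that inequality is in hand.
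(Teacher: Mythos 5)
Your proof is correct, and it takes precisely the route the paper itself gestures at: the paper states this theorem only as a citation to Elad--Bruckstein and remarks that it follows from the discrete uncertainty principle $k_{\Phi}+k_{\Psi}\ge 2/\mu$, which is exactly how you conclude, via the nonzero null-space vector $\mv e=\mv x_1-\mv x_2$ split across the two bases and a final AM--GM step. The only difference is that you additionally supply a full proof of the uncertainty principle itself --- the product bound $\|\vbeta\|_0\cdot\|\gamma\|_0\ge 1/\mu^2$ obtained from the cross-Gram entry bound $|\langle\psi_i,\phi_j\rangle|\le\mu$ together with the $\ell_\infty$, $\ell_1$, $\ell_2$ estimates and the norm preservation $\|\vbeta\|_2=\|\gamma\|_2=\|\mv v\|_2$ --- a step the paper leaves entirely to the cited reference; that derivation is also correct.
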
 
This result is proven using a (discrete) \emph{uncertainty principle}. Suppose that $\mv b$ has two different representations in each basis $\Phi$ and $\Psi$ and that each representation has sparsity $k_{\Phi}$ and $k_{\Psi}$, respectively, then
$
	k_{\Phi} + k_{\Psi} \geq \frac{2}{\mu}$.

Despite the intense scrutiny of these thresholds, scant attention has been paid to solutions of Equation~\eqref{eqn:exact} beyond them, nor is there a rigorous analysis of combinatorial bounds for Equation~\eqref{eqn:k-sparse}. We highlight three relevant results. In~\cite{Tropp:SparsityGap2010}, Tropp describes the gap in sparsity between two non-unique exact representations of a random vector $\mv b$ over a matrix $\mv A$ with uniform coherence $\mu$. 
\begin{thm}~\cite{Tropp:SparsityGap2010}
\label{thm:Tropp-Sparsity-Gap}
	We draw a support set $S$ of linearly independent columns of $\mv A$ and set $\mv b$ equal to a random linear combination of these columns. Then, almost surely, we cannot represent $\mv b$ with a set $T$, disjoint from $S$, unless
	$
		|T| > |S| \Big( \frac{\mu^{-1}}{\sqrt{|S|}} - 1  \Big)$.
	
\end{thm}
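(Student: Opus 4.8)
The plan is to strip away the randomness first, reducing the probabilistic claim to a deterministic statement about subspace containment, and then to extract the quantitative bound from a rank-constrained inequality on the Gram matrix of the combined index set.

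First I would dispose of the randomness. Write $s=|S|$, $t=|T|$, and $\vb=\vA_S\mv c$, where $\vA_S$ is the column submatrix indexed by $S$ and $\mv c$ is the random coefficient vector. Since the columns of $\vA_S$ are linearly independent, $\vb$ is a generic vector of the $s$-dimensional space $\spn(\vA_S)$. Under the isomorphism $\mv c\mapsto\vA_S\mv c$, the event that $\vb$ also lies in $\spn(\vA_T)$ pulls back to the event that $\mv c$ lies in the subspace $\vA_S^{-1}\!\left(\spn(\vA_S)\cap\spn(\vA_T)\right)$. If $\spn(\vA_S)\not\subseteq\spn(\vA_T)$, this is a \emph{proper} subspace of $\R^s$ and hence has measure zero under any continuous law on $\mv c$. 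Thus, almost surely, a representation of $\vb$ on $T$ can exist only when $\spn(\vA_S)\subseteq\spn(\vA_T)$, and it suffices to prove the stated inequality under this deterministic hypothesis.

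Next I would set up the Gram matrix of the combined columns. Let $\mv M=[\vA_S\mid\vA_T]$ and $\mv G=\mv M^{*}\mv M$, an $(s+t)\times(s+t)$ positive semidefinite matrix whose diagonal entries equal $1$ (unit-norm columns) and whose off-diagonal entries, being inner products of distinct columns, have absolute value at most $\mu$. The containment forces $\rank{\mv G}=\rank{\mv M}=\dim\spn(\vA_T)\le t$. Two scalar invariants then drive the argument: $\trace{\mv G}=s+t$, while $\trace{\mv G^2}=\sum_{i,j}|\mv G_{ij}|^2\le (s+t)+(s+t)(s+t-1)\mu^2$, using that $\mv G$ is symmetric.

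The crux is a rank-constrained Cauchy--Schwarz step: because $\mv G$ has at most $t$ nonzero eigenvalues, $(\trace{\mv G})^2\le\rank{\mv G}\cdot\trace{\mv G^2}\le t\,\trace{\mv G^2}$. Substituting the two invariants and simplifying gives $\mu^2\ge s/\!\left(t(s+t-1)\right)$; since $t(s+t-1)<(s+t)^2$ for $s\ge 1$, this yields $\mu(s+t)>\sqrt{s}$, which rearranges to $t>|S|\!\left(\mu^{-1}/\sqrt{|S|}-1\right)$. I expect the main obstacle to be pinpointing where the $\sqrt{|S|}$ actually enters: it is produced precisely by the factor $\rank{\mv G}$ in the Cauchy--Schwarz inequality, i.e.\ by the rank deficiency that span containment imposes on $\mv G$. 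The delicate point is to check that replacing the true rank by the crude bound $\rank{\mv G}\le t$ still gives the sharp-looking constant; this is legitimate because the right-hand side $\left((s+t)-r\right)/\!\left(r(s+t-1)\right)$ of the general inequality $\mu^2\ge\left((s+t)-r\right)/\!\left(r(s+t-1)\right)$ is decreasing in $r$, so overestimating the rank only weakens the bound in a controlled way.
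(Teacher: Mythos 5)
Your proof is correct, but there is nothing in this paper to check it against: Theorem~\ref{thm:Tropp-Sparsity-Gap} is quoted from Tropp~\cite{Tropp:SparsityGap2010} purely as background, and the paper contains no proof of it (it is not among the results proved in the appendices). Judged on its own, every step of your argument holds. The probabilistic reduction is sound: for a fixed $T$, the set of coefficient vectors $\mv c$ with $\mv A_S\mv c\in\spn(\mv A_T)$ is, when $\spn(\mv A_S)\not\subseteq\spn(\mv A_T)$, a proper linear subspace of $\R^{s}$ and hence null for any continuous law; you should add the (trivial) union bound over the finitely many candidate sets $T$ below the size threshold to get the ``for all $T$'' form of the statement. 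The deterministic core is also right: span containment forces $\rank{\mv G}\le t$ for the Gram matrix $\mv G$ of the $s+t$ unit-norm columns, the inequality $(\trace{\mv G})^{2}\le\rank{\mv G}\cdot\trace{\mv G^{2}}$ is exactly Cauchy--Schwarz on the nonzero eigenvalues, and the algebra $s\le t(s+t-1)\mu^{2}$, then $\mu(s+t)>\sqrt{s}$, then $t>|S|\left(\mu^{-1}/\sqrt{|S|}-1\right)$ is exact, including the strictness of the final inequality. Compared with Tropp's original argument, which exploits the same rank deficiency by noting that $\mv G-\mv I$ has eigenvalue $-1$ with multiplicity at least $s$ and hence $s\le\trace{(\mv G-\mv I)^{2}}\le(s+t)(s+t-1)\mu^{2}$, your rank-constrained Cauchy--Schwarz yields the slightly stronger intermediate bound $s\le t(s+t-1)\mu^{2}$; both collapse to the same stated conclusion once you relax $t(s+t-1)<(s+t)^{2}$, so the difference is cosmetic. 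In short: a correct, self-contained reconstruction of the cited result, by essentially the standard mechanism.
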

Dragotti and Lu~\cite{DragottiLu:ProSparse2013} begin with a specific pair of orthonormal bases, the canonical basis and the Fourier basis (referred to informally as ``spikes and sines''), and construct a polynomial time algorithm that can return a list\footnote{However, the analysis of the list size is not fully fleshed out.} of exact $k$-term representations, assuming $k < \sqrt{2}/\mu$.

Finally, Donoho and Elad~\cite{DonohoElad:2003} give a general uniqueness result for the exact problem~\eqref{eqn:exact} that uses the spark\footnote{The spark of a matrix is the minimum number of dependent columns in the matrix.}  of $\mv A$ and that is considerably more general than~\eqref{eqn:cohbound}.

\paragraph{Our Contributions.} All of these results above consider the regime where one is interested in a unique solution. In particular, existing work has stopped at this threshold of unique solutions.
We propose a rigorous interpretation and analysis of sparse approximation beyond these thresholds, using list decoding of error correcting codes as an analogy. List decoding was proposed by Elias~\cite{elias} and Wozencraft~\cite{wozencraft} in coding theory (for discrete alphabets and Hamming distance). List decoding is a relaxation of the usual ``unique decoding" paradigm, where the decoder is allowed to output a small list of codewords with the guarantee that the transmitted codeword is in the list. (Unique decoding is the special case where the list can only have one codeword.) This allows one to go beyond the well-known half the distance bound for the number of correctable worst-case errors with unique decoding. In many situations, list decoding allows for correcting twice as many errors as one can with unique decoding. In particular, one can correct close to $100\%$ of errors (as opposed to the $50\%$ bound for unique decoding) with constant sized lists. This remarkable fact found many surprising applications in complexity theory (see e.g. the survey by Sudan~\cite{madhu-survey} and Guruswami's thesis~\cite{venkat-thesis} for more on these connections). Motivated by these applications, much progress has been made in algorithmic aspects of list decoding (for more details see~\cite{book}).

We now briefly focus on combinatorial list decoding. In particular, we are interested in sufficient conditions that allow codes to have small output list sizes. Perhaps the most general such result in literature is the Johnson bound, which states that a code with relative distance $\delta$ can be list decoded to $1-\sqrt{1-\delta}$ fraction of errors with small list size~\cite{johnson-1,johnson-2}. Note that as $\delta\to 1$, the fraction of correctable error approaches $100\%$. As the natural analogue of distance for sparse approximation is coherence, one would hope to prove similar results in list sparse approximation. In this work, we do so. 

Informally, we define \emph{list sparse approximation} as the problem of returning a list of $k$-term representations $\mv x$ such that $\mv A \mv x = \mv b$ or $\|\mv A \mv x - \mv b\|_2$ is minimized. Just as list decoding is a more flexible definition of decoding, list sparse approximation is a more flexible notion of representing a vector sparsely over a redundant dictionary and one that, we will show, permits us to move ``beyond'' the traditional bounds for sparse approximation. More formally, we define two variations of list sparse approximation: \textsc{List-Approx} and \textsc{List-Sparse}. The differences among the variations cover whether we require the representations in the list  only to \emph{approximate} or to \emph{equal} the input vector $\mv b$.

Several results addressed the \emph{uniqueness} of \emph{exact} sparse representations. A collection of such results can be found in \cite{BrucksteinDonohoElad:Review2009}. Just like unique decoding is a special case of list decoding (where list size is 1) and exact representation is a special case of approximate representation (where error tolerance is zero), we will show that some of those results are special cases of our results. Our results extend those results in two directions: \emph{approximation} and \emph{multiplicity} of solutions.

We believe that the list sparse approximation questions that we study are interesting mathematically in their own right. In addition, we are hopeful that this notion of list sparse approximation will find other theoretical applications (just as the notion of list decoding found many applications). In general, this notion should find applications in situations where, in addition to computing an approximation close to the vector $\vb$, one also has a secondary objective. In such a scenario, having a list of approximations that are the same with respect to error of approximation and sparsity would be beneficial. We leave the problem of finding other applications of list sparse approximation where it strictly outperforms the traditional unique sparse approximation as a tantalizing open question from our work.

We illustrate the potential application of these concepts with a simple image compression example.  We use the Haar wavelet packet redundant dictionary to define three different classes of $k$-sparse approximations, each of which compresses the image to one fifth its original size with a relative error of no more than 0.01; i.e., we provide examples of three different solutions to the \textsc{List-Approx} problem.  Figure~\ref{fig:blobs} shows the original image and its three different list sparse approximations. We can see that one of the approximations (Class 1) is not quite as accurate as the other two although it is easier to compute than the others. Also Class 2 and 3 are results of trying to optimize different auxiliary objectives. Thus, this illustrates a scenario where having a list of solutions that contains all three classes could be more beneficial than just trying to output a single solution that matches the bounds on sparsity $k$ and error $\eps$. Details of this particular computation can be found in the Appendix~\ref{sec:image_appendix}.
\begin{figure}[ht!]
		\begin{center}
			\includegraphics[height=2in,width=2.75in]{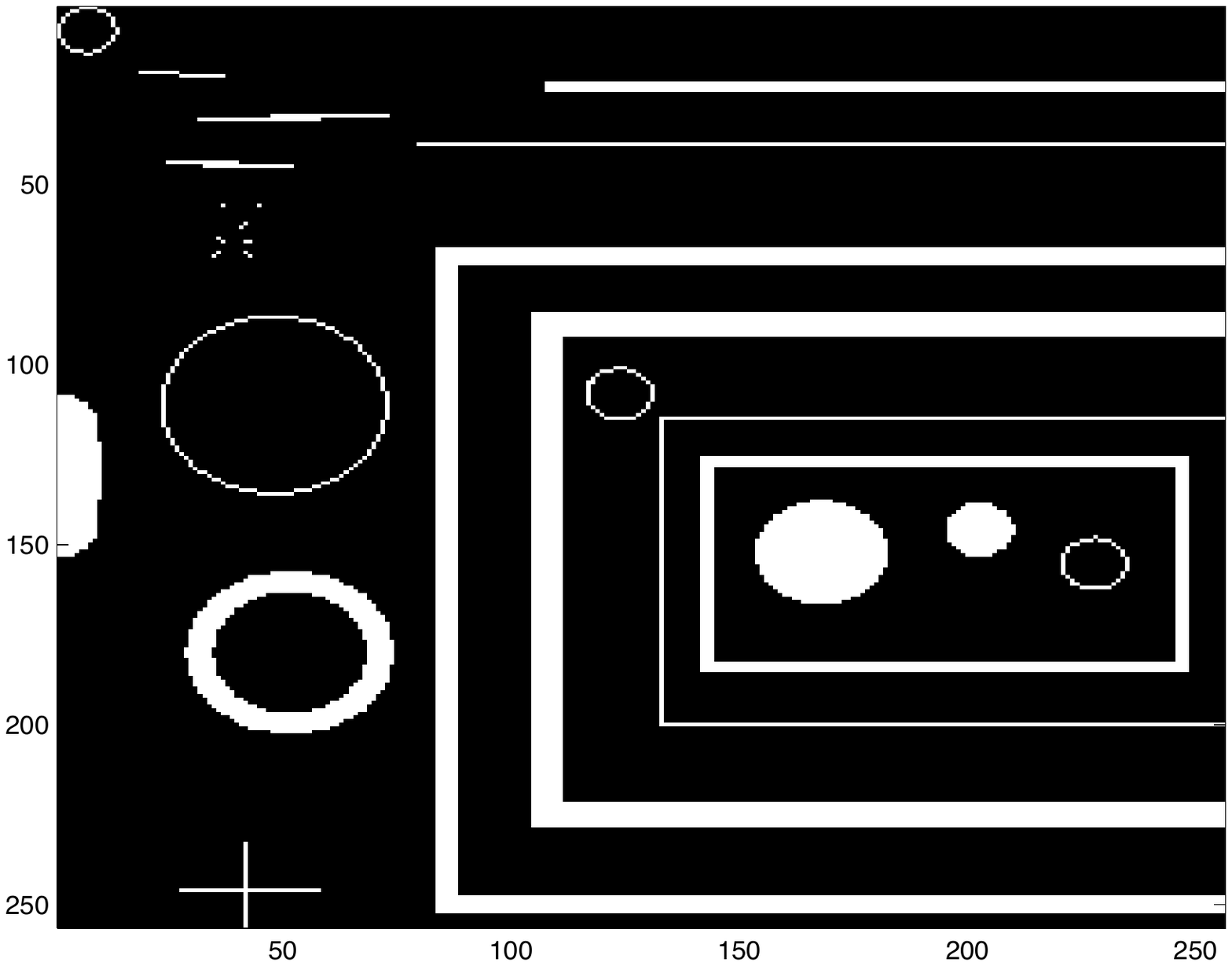}
			\includegraphics[height=2in,width=2.75in]{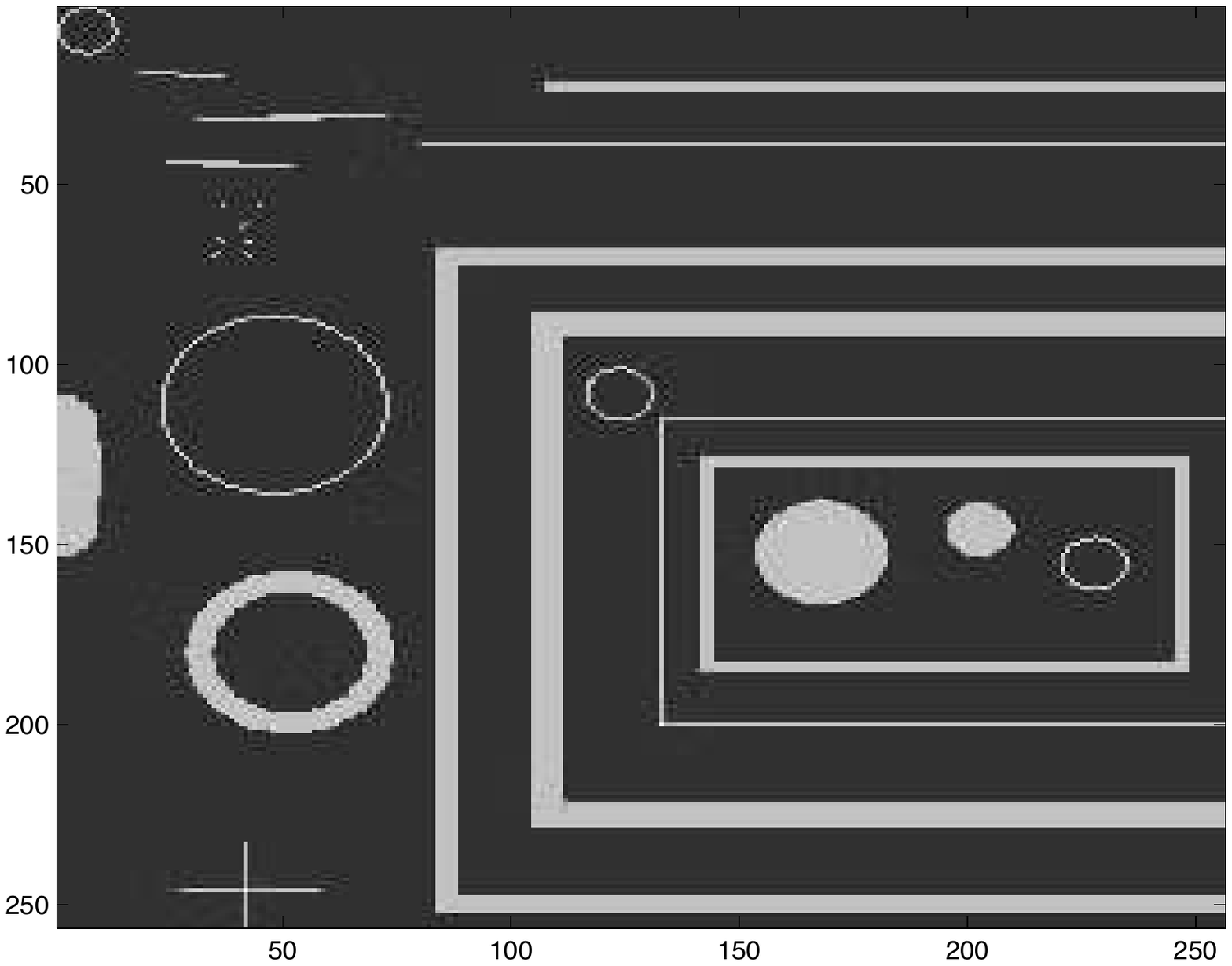}
			\includegraphics[height=2in,width=2.75in]{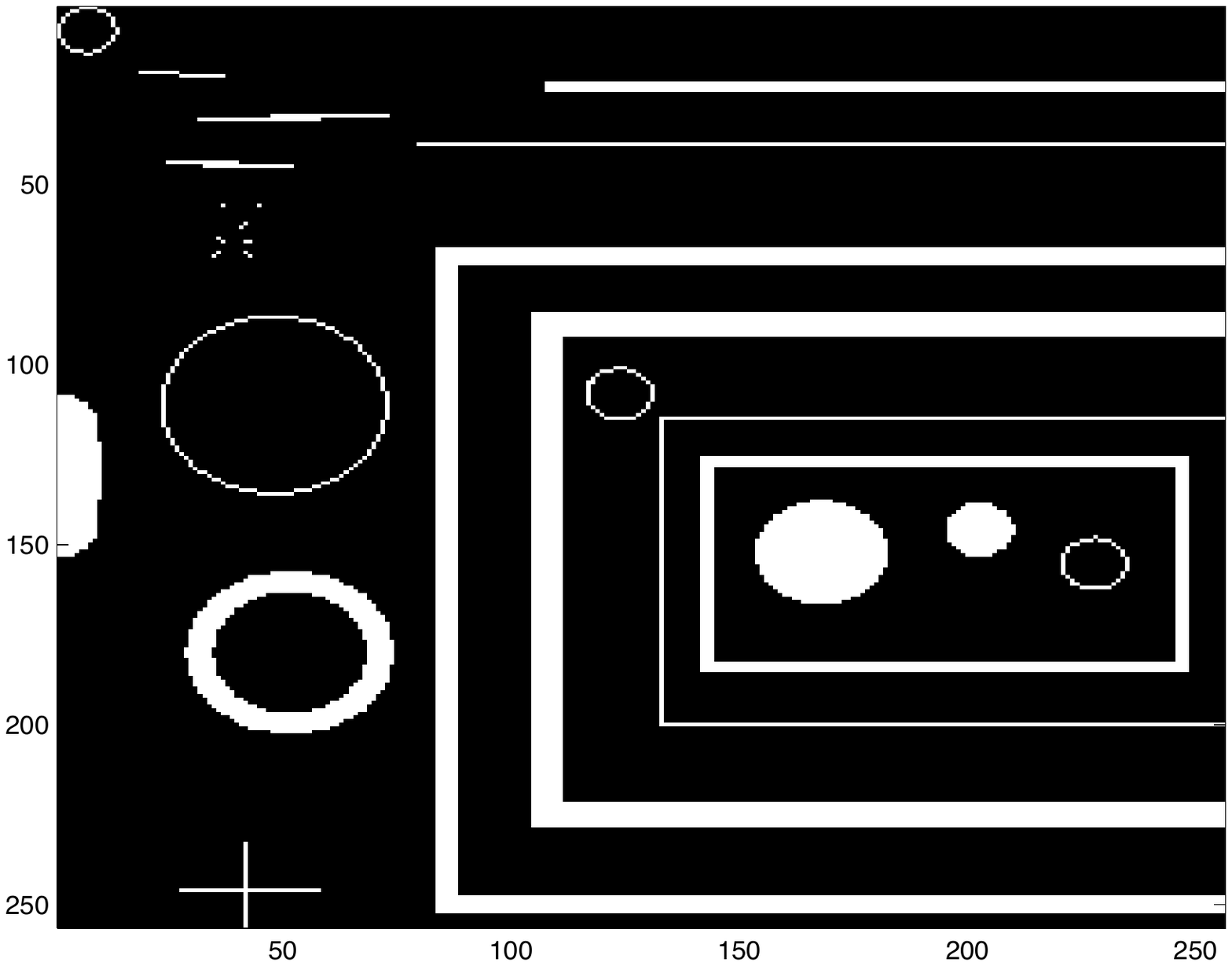}
			\includegraphics[height=2in,width=2.75in]{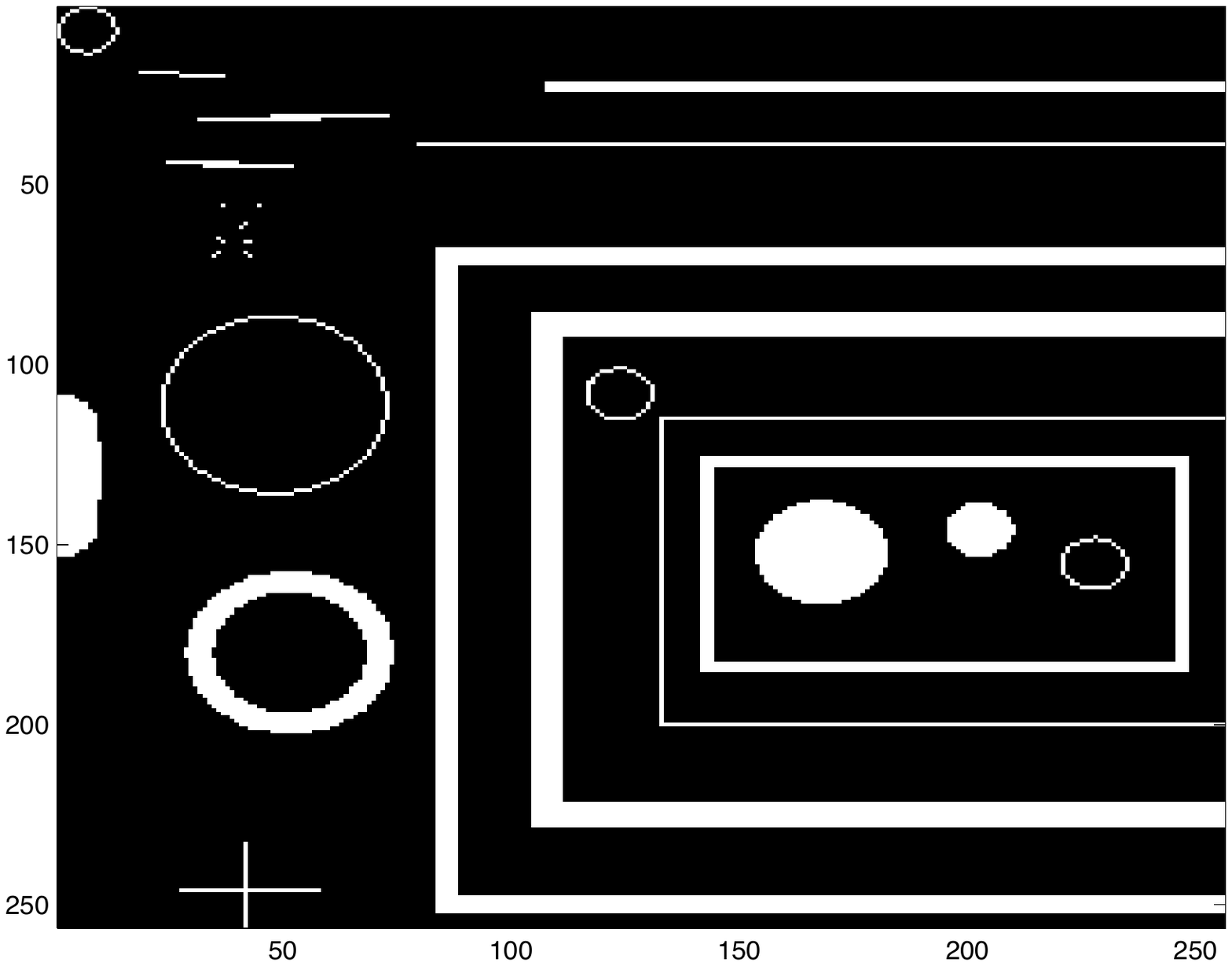}
		\end{center}
		\caption{(Upper left.) Original image from the Matlab image processing toolbox demo images, \texttt{blobs}. (Upper right.) Class 1: Large and random medium coefficients. (Lower left.) Class 2: Truncated BestBasis, computed minimizing the entropy of the $\ell_2$ norm of the coefficients. (Lower right.) Class 3: Truncated BestBasis, computed minimizing the $\ell_1$ norm of the coefficients.}
		\label{fig:blobs}
\end{figure}

\paragraph{Our Results.}

We relax our definition of sparse representation to include those $k$-term representations that are within a specified distance $\epsilon$ of the input vector $\mv b$ in \textsc{List-Approx} and find that:
\begin{itemize}
	\item As long as the distance $\eps\le \sqrt{1-\Omega(\mu k)}$, then we have that the number of disjoint solutions is $O(1/(1-\eps^2))$. In fact, we show that if we only consider solutions where each atom appears only $o(L)$ times in the output list of size $L$, then $L$ is bounded only by $\eps$ (and is independent of $k$ and $N$). These results extend the current uncertainty principles in three ways: (i) We now consider the case when the approximation error $\eps>0$ (all existing work considers the exact case of $\eps=0$), (ii) We consider the case of larger multiplicity of disjoint solutions, i.e. more than two disjoint solutions (where uncertainty principle results only consider the case of two disjoint solutions) and (iii) We consider the generalization where solutions can have (limited) overlap (this scenario was not considered in Theorem~\ref{thm:Tropp-Sparsity-Gap}). Further, we obtain as simple corollaries all the known conditions that guarantee unique decoding; i.e., list size of $1$---see Section~\ref{sec:recover-old-results} for more details. 
	\item We show that to obtain a list size that only depends on $\eps$, our bound is essentially optimal in terms of the dependence of $\mu$ on $k$.
\end{itemize}

To fully understand the implication of these upper bounds and to formulate lower bounds, we construct several examples, including for the spikes and sines dictionary (the proto-dictionary for which we have an uncertainty principle) and the dictionary obtained from Kerdock codes that exhibit a large exponential list size. 

In addition, we establish the following basic combinatorial bounds on the list size of \textsc{List-Sparse} (where we want to output all $\mathbf x$ such that $\|\mathbf A\mathbf x-\mathbf b\|_2$ is minimized):
\begin{itemize}
	\item Given a matrix $\mv A$, we determine necessary and sufficient conditions to ensure that for all input vectors $\mv b$, the list of $k$-sparse representations is finite as well as trade-offs to ensure the list is no longer than a specified length $L$.
	\item We calculate the minimum worst-case $k$-sparse list size over all matrices $\mv A$; it is $k+1$.
	\item Given a matrix $\mv A$, we determine necessary and sufficient conditions to ensure that most input vectors $\mv b$ generate a finite list or a list of size 1.
\end{itemize}
To the best of our knowledge, this is the first work that considers the size of the solution space of the \textsc{Sparse} problem.
 

\paragraph{Our Techniques.} Our results follow from fairly simple arguments. For the \textsc{List-Approx} problem we have two sets of results. In the first set of results that allow for every atom to be present in many solutions, we work with the average error of the \textsc{List-Approx} version instead of the more natural ``maximum error" in our proofs. The former implies the latter so our upper bounds are also valid for the maximum error version. A similar switch has been explicitly used fairly recently in the list decoding literature~\cite{gurnar2013,woot2013,rw14}, though its implicit use dates back to at least the Johnson bound~\cite{johnson-1,johnson-2}. Our second set of results is for the case when the solutions are disjoint (for which we get tighter bounds) where we use the fact that a simplex with $L$ vertices forms the worst-case setting for lists of size $L$. This is unlike the Hamming setting, where there is no natural analogue for $L>2$.

Our results on the \textsc{List-Sparse} problem essentially follow from the observation that the columns of the matrix $\mv A$ divide up the space into Voronoi cells and the vectors $\mv b$ that give rise to many solutions are those that lie at the boundary of many of these Voronoi cells.


\paragraph{Organization of the paper.} We start with some preliminaries in Section~\ref{sec:prelims} where we formally define the problems of \textsc{List-Approx} and \textsc{List-Sparse}. We present our results on \textsc{List-Approx} in Section~\ref{sec:listapprox} and our results on \textsc{List-Sparse} in Section~\ref{sec:listsparse}. We present examples that show the tightness of some of the aspects of our upper bounds in Section~\ref{sec:examples}. We conclude with some open questions in Section~\ref{sec:concl}. To facilitate the ease of reading, all proofs are deferred to the appendix.

\section{Preliminaries}
\label{sec:prelims}

Given the close connections between coding theory and sparse approximation, it is natural to ask whether we can extend the current combinatorial results on sparse approximation beyond the spark bound using a notion of \emph{list sparse approximation}. Just as in error correcting codes list decoding enables us to decode a much greater fraction of errors, we anticipate list sparse approximation to approximate vectors with much less sparsity. To that end, we study two natural extensions to the \sparse problem.

First, let us recall the ingredients. 
\begin{defn}
	A redundant dictionary $\mv{A}$ is a matrix of size $m \times N$ with $N \geq m$, the columns of $\mv{A}$ span $\R^m$, and the columns, which we refer to as \emph{atoms}, are normalized to have $\ell_2$ or Euclidean norm 1. Because the atoms span $\R^m$, the rank of $\mv{A}$ is $m$.
\end{defn}

\sparse seeks the best (linear) representation of an input vector $\mv{b}$ over a redundant dictionary $\mv{A}$ that uses at most $k$ atoms.
\begin{defn}[\sparse]
	Given $\mv{A}$ and $\mv{b}$, find
	\beq
		\hat{ \mv{x}} = \argmin_{\|\mv {x}\|_0 \leq k } \| \mv{Ax-b} \|_2.
		\label{eq:Sparse-problem}
	\eeq
\end{defn}
A vector $\mv x$ for which $\|\mv x\|_0 \leq k$ is called a {\em $k$-sparse} vector. Note that a solution to \sparse need not be an exact representation for $\mv b$. The first extension merely lists all optimal solutions to \sparse.
\begin{defn}[\listsparse]
	Given $\mv{A}$, $\mv{b}$, and $k$, list \emph{all} $k$-sparse $\mv{x}$ such that $\| \mv{Ax-b} \|_2$ is minimized. 
\end{defn}
The second extension relaxes the instance on optimal solutions and replaces it with listing all $k$-sparse vectors with error no more than $\epsilon$, a parameter.
\begin{defn}[\listapprox]
\label{defn:listapprox}
	Given $\mv{A}$, $\mv{b}$, $k$, and $\epsilon$, list \emph{all} $k$-sparse $\mv{x}$ such that
	$
		\| \mv{Ax - b} \|_2 \leq \epsilon$.
\end{defn}
(There is a catch in how we define ``all" solutions for the \listapprox problem. See Definition~\ref{def:L} for more on this.)
There is a third possible extension that relaxes the instance on optimal solutions in a different fashion and requires that solutions to be close to the optimal error up to an \emph{additive} tolerance $\epsilon$.
	Given $\mv{A}$, $\mv{b}$, $k$, and $\epsilon$, list \emph{all} $k$-sparse $\mv{x}$ such that
		$\| \mv{Ax - b} \|_2 \leq \min_{\| \mv{y} \|_0 \leq k} \| \mv{Ay - b} \|_2 + \epsilon$.
Observe that if we define the optimal error 
$
	\epsilon_{\rm{opt}} = \min_{\| \mv{y} \|_0 \leq k} \| \mv{Ay - b} \|_2$ 
and seek the solution to the third variant  with additive error $\epsilon'$, it is sufficient to solve \listapprox with error $\epsilon = \epsilon_{\rm{opt}} + \epsilon'$. For this reason, we focus our attention on \listapprox.

Let us assume that the input vector $\mv b$ is normalized to have unit norm as well. In this case, \listapprox is the direct analogy in sparse approximation with list decoding. To see this, let us recall the basic definitions in list decoding. The alphabet $\Sigma$ is a finite set and the encoding function is a map $E: \Sigma^k \rightarrow \Sigma^n$. Given a received word $\mv r \in \Sigma^n$, the goal is to return a list of messages $\mv m\in\Sigma^k$ such that the Hamming distance between $\mv r$ and $E(\mv m)$ is at most $e$, the number of errors. It is clear that if we set $k = 1$, use a finite alphabet $\Sigma$ (rather than $\R$), let $\mv{A}$ be the codebook, $\mv{b}$ be the received codeword, $\epsilon$ the error, and convert the $\ell_2$ metric to (relative) Hamming distance, then we have the list decoding problem.

In keeping with the analogy between error-correcting codes and sparse
approximation, we recall the definition of the \emph{coherence} of a redundant
dictionary, an analogous quantity to the inverse of the \emph{distance} of an error-correcting
code (as the distance between vectors decreases, their coherence increases). 
\begin{defn}[Coherence]
	The \emph{coherence} of a redundant dictionary $\mv A$ is the largest (in absolute value) dot product between any two atoms in the dictionary:
	\[
		\mu(\mv A) = \max_{i \neq j} | \langle \mv A_i, \mv A_j \rangle |.
	\]
\end{defn}

The following uniqueness result for the exact problem uses the coherence of $\mv A$ and appears in \cite{BrucksteinDonohoElad:Review2009}:

\begin{thm}
\label{thm:uniqueness-via-mu}
	A representation $\mv x$ is the unique solution to \eqref{eqn:exact} if its sparsity 
			$k < \frac{1}{2}(\mu(\mv A)^{-1} +1)$.
\end{thm}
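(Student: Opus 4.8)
The plan is to argue by contradiction, reducing the uniqueness question to a statement about linear independence of small collections of atoms and then establishing that statement via Gershgorin's circle theorem. First I would suppose, for contradiction, that there exist two distinct representations $\mv x_1 \neq \mv x_2$ with $\mv A \mv x_1 = \mv A \mv x_2 = \mv b$ and $\|\mv x_1\|_0, \|\mv x_2\|_0 \le k$. Setting $\mv d = \mv x_1 - \mv x_2$, we have $\mv d \neq \mv 0$, $\mv A \mv d = \mv 0$, and $\|\mv d\|_0 \le \|\mv x_1\|_0 + \|\mv x_2\|_0 \le 2k$. Thus the atoms of $\mv A$ indexed by $S := \msupp{\mv d}$ are linearly dependent, with $|S| \le 2k$.

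The heart of the argument is to show that this is impossible under the hypothesis $k < \tfrac12(\mu(\mv A)^{-1} + 1)$, i.e.\ that $2k < 1 + \mu(\mv A)^{-1}$. The key claim I would isolate and prove is: any set of $s$ atoms with $s < 1 + \mu(\mv A)^{-1}$ is linearly independent. To see this, let $\mv A_S$ denote the $m \times s$ submatrix formed by the atoms in $S$ and consider the Gram matrix $\mv G = \mv A_S^{\top} \mv A_S$. Since the atoms are unit-norm, the diagonal entries of $\mv G$ all equal $1$, while each off-diagonal entry equals $\langle \mv A_i, \mv A_j \rangle$ and so has absolute value at most $\mu(\mv A)$ by the definition of coherence. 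Applying Gershgorin's circle theorem, every eigenvalue of $\mv G$ lies within distance $(s-1)\mu(\mv A)$ of $1$; hence if $(s-1)\mu(\mv A) < 1$, equivalently $s < 1 + \mu(\mv A)^{-1}$, then $\mv G$ has no zero eigenvalue and is therefore nonsingular. A nonsingular Gram matrix forces the columns of $\mv A_S$ to be linearly independent.

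With the claim in hand the contradiction is immediate: $|S| \le 2k < 1 + \mu(\mv A)^{-1}$, so the atoms indexed by $S$ are linearly independent, yet $\mv A_S \mv d_S = \mv 0$ with $\mv d_S \neq \mv 0$ exhibits a nontrivial dependence among them. This contradiction shows $\mv x_1 = \mv x_2$, establishing uniqueness.

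I expect the main obstacle to be the linear-independence claim, and specifically getting the Gershgorin bound to line up exactly with the stated threshold. The subtlety is that one must combine the off-diagonal bound $\mu(\mv A)$ with the count of $s-1$ off-diagonal entries per row, so that the Gershgorin radius is exactly $(s-1)\mu(\mv A)$; this is precisely what converts the coherence hypothesis into strict diagonal dominance and hence invertibility. Equivalently, this step proves the spark lower bound $\spark{\mv A} \ge 1 + \mu(\mv A)^{-1}$, and the remainder is the standard fact that $2k < \spark{\mv A}$ guarantees a unique $k$-sparse representation. Everything else is bookkeeping on support sizes.
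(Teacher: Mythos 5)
Your proof is correct, but it takes a genuinely different route from the paper. The paper never proves Theorem~\ref{thm:uniqueness-via-mu} directly: in Section~\ref{sec:recover-old-results} it \emph{infers} the theorem from its list-decoding machinery, namely Corollary~\ref{cor:list-decoding-k>1} (which rests on the simplex argument of Theorem~\ref{thm:list-decoding-k>1} together with the generalized-coherence bound of Proposition~\ref{prop:mu_k-upperbound}), after observing that $k<\frac{1}{2}(\mu(\mv A)^{-1}+1)$ is equivalent to $\mu(\mv A)<1/(2k-1)$ and then setting $\eps=0$ to get $L(\mv A,k,0,1)\le 1$. You instead give the classical Donoho--Elad argument: form the difference $\mv d=\mv x_1-\mv x_2$ of two putative solutions, note $\mv A\mv d=\mv 0$ with $\|\mv d\|_0\le 2k$, and rule this out by a Gershgorin bound on the Gram matrix of at most $2k$ unit-norm atoms; as you note, this amounts to proving $\spark{\mv A}\ge 1+\mu(\mv A)^{-1}$ and then invoking the spark-based uniqueness criterion, i.e., the paper's Theorem~\ref{thm:uniqueness-via-spark}. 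What your approach buys: it is elementary and self-contained, and it handles arbitrary pairs of distinct solutions, including ones whose supports overlap or coincide, since only the difference vector matters; by contrast, the quantity $L(\mv A,k,0,1)$ in the paper's route only counts solutions with pairwise disjoint supports (the $R=1$ restriction), so as a standalone proof of uniqueness the paper's inference leaves the overlapping-support case to a further, unstated reduction (e.g., splitting the support of $\mv d$ into two disjoint halves). What the paper's approach buys: generality---the same machinery yields quantitative list-size bounds for $\eps>0$ and for list sizes larger than one, so the uniqueness threshold appears as the degenerate $\eps=0$, $L=1$ corner of a broader phenomenon, which is precisely the point of that section. At a technical level the two arguments are cousins: your Gershgorin eigenvalue bound plays the same role as the Cauchy--Schwarz estimate $\|\mv x\|_2^2\le 1/\left(1-\mu(\mv A)(k-1)\right)$ inside the paper's proof of Proposition~\ref{prop:mu_k-upperbound}.
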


In much of our analysis of \listsparse, we use the
\emph{spark}~\cite{DonohoElad:2003} of the redundant dictionary as a more
refined geometric property than the coherence. Spark is a measure of the linear dependence among the columns of $\mv A$ but one that is considerably different from the rank.
\begin{defn}[Spark]
	The \emph{spark} of a redundant dictionary $\mv{A}$ is the smallest $s$ such
    that some $s$ columns of $\mv{A}$ are linearly dependent:
	\[
		\spark{\mv A} = \min_{\mv z \neq 0} \|\mv z\|_0 \quad\text{s.t.}\quad \mv A\mv z = \mv 0.
	\]
\end{defn}

Donoho and Elad~\cite{DonohoElad:2003,BrucksteinDonohoElad:Review2009} give a general uniqueness result for the exact problem~\eqref{eqn:exact} that uses the spark of $\mv A$ and that is considerably more general than~\eqref{eqn:cohbound}:
\begin{thm}
\label{thm:uniqueness-via-spark}
	A representation $\mv x$ is the unique solution to \eqref{eqn:exact} if its sparsity 
			$k < \spark{\mv A}/2$.
\end{thm}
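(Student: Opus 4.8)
The plan is to prove the contrapositive: I will show that if a representation $\mv x_0$ with $\|\mv x_0\|_0 < \spark{\mv A}/2$ is \emph{not} the unique solution to \eqref{eqn:exact}, then the spark must be small enough to contradict the hypothesis. So suppose there is a second representation $\mv y \neq \mv x_0$ with $\mv A \mv y = \mv b = \mv A \mv x_0$. First I would form the difference $\mv z = \mv x_0 - \mv y$, which is nonzero and satisfies $\mv A \mv z = \mv 0$. By the definition of spark, any nonzero vector in the kernel of $\mv A$ has at least $\spark{\mv A}$ nonzero entries, so $\|\mv z\|_0 \geq \spark{\mv A}$.

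The key step is then a support-counting inequality. The support of $\mv z = \mv x_0 - \mv y$ is contained in the union of the supports of $\mv x_0$ and $\mv y$, so
\[
	\spark{\mv A} \leq \|\mv z\|_0 \leq \|\mv x_0\|_0 + \|\mv y\|_0.
\]
Since $\mv x_0$ is assumed $k$-sparse with $k < \spark{\mv A}/2$, and since for a genuine competing solution $\mv y$ it is natural to take $\mv y$ to be itself a sparsest (or at least comparably sparse) alternative with $\|\mv y\|_0 \leq \|\mv x_0\|_0$, the right-hand side is strictly less than $\spark{\mv A}$, which contradicts the inequality above. This forces $\mv y = \mv x_0$, establishing uniqueness.

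The main obstacle is handling the sparsity of the competing solution $\mv y$ carefully, since the cleanest statement of the argument assumes both representations are at most $k$-sparse. The honest reading of the theorem is that $\mv x_0$ is \emph{the} sparsest solution and is unique among solutions of sparsity at most its own; to make the bound symmetric I would argue that it suffices to compare $\mv x_0$ against any other solution $\mv y$ with $\|\mv y\|_0 \leq \|\mv x_0\|_0 < \spark{\mv A}/2$, in which case $\|\mv x_0\|_0 + \|\mv y\|_0 < \spark{\mv A}$ as needed. One should note the slight subtlety that subtraction can only shrink or preserve the combined support (cancellation never creates new nonzero coordinates), so the union bound on supports is valid and no worse. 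Assembling these pieces yields the clean threshold $k < \spark{\mv A}/2$, and I expect the entire proof to be short once the spark and union-bound steps are combined.
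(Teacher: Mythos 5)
Your proof is correct, but it takes a genuinely different route from the paper. You give the classical direct argument (the one in Donoho--Elad's original work): any competing minimizer $\mv y$ yields a nonzero kernel vector $\mv z = \mv x_0 - \mv y$ with $\spark{\mv A} \leq \|\mv z\|_0 \leq \|\mv x_0\|_0 + \|\mv y\|_0 < \spark{\mv A}$, a contradiction; your closing observation that a solution of the $\argmin$ problem \eqref{eqn:exact} automatically satisfies $\|\mv y\|_0 \leq \|\mv x_0\|_0$ is exactly the right way to discharge the symmetry issue you raise in the middle of the writeup. The paper instead derives this theorem as a corollary of its new machinery (see Section~\ref{sec:recover-old-results}): Proposition~\ref{prop:mu-k-gen} shows $k < \spark{\mv A}/2$ if and only if the generalized coherence satisfies $\mu_k(\mv A) < 1$, and then Theorem~\ref{thm:list-decoding-k>1} with $\eps = 0$ gives list size $L(\mv A, k, 0, 1) \leq \left\lfloor 1/(1-0) \right\rfloor = 1$, i.e.\ uniqueness. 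What each approach buys: yours is shorter, elementary, and self-contained, needing only the definition of spark and the union bound on supports (which, as you note, is valid regardless of cancellation or overlap between the two supports); the paper's derivation is deliberately indirect because its purpose is to demonstrate that the list-decoding framework \emph{subsumes} the classical uniqueness results --- the same Theorem~\ref{thm:list-decoding-k>1} also handles $\eps > 0$ and list sizes larger than $1$, which your argument does not generalize to. One could even argue your proof is slightly cleaner for this specific statement, since the quantity $L(\mv A, k, \eps, 1)$ in the paper counts solutions with pairwise disjoint supports, whereas your kernel argument rules out \emph{any} competing solution, overlapping support or not, with no extra work.
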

For $\mv A = [\Phi, \Psi]$ the union of two orthonormal bases with coherence $\mu$, $\spark{\mv A} \geq 2/\mu$, so this theorem not only implies \eqref{eqn:cohbound} but also the uncertainty principle. Both Theorems~\ref{thm:uniqueness-via-spark} and \ref{thm:uniqueness-via-mu} are implied by our results, as we will show in Section~\ref{sec:recover-old-results}.

In the following sections, we aim to bound the size of the list of solutions to \listsparse and to \listapprox. For \listapprox, counting the number of sets on which these solutions are supported is more meaningful as there may be an infinite number of different solutions supported on any one set of columns of $\mv A$. That is, we treat the \listapprox problem as a combinatorial one rather than an optimization problem. 
To that end, we define the list size.
\begin{defn}
For a given matrix $\mv A$ of size $m \times N$, a vector $\mv b$ of length $m$,
and an integer $1\leq k \leq N$, let $s(\mv A, \mv b, k)$ be the number of
optimal solutions $\mv x$ to \sparse (Problem \eqref{eq:Sparse-problem}).  
The quantity $s(\mv A, \mv b, k)$ is the {\em list size} in the list decoding
sense.
\end{defn}

For \listapprox, we define the list size as follows.
\begin{defn}
\label{def:L}
	Given $\mv{A}$, $\mv{b}$, $k$, and $\epsilon$, let $L(\mv{A},\mv{b},k,\eps)$ be the number of distinct {\em supports} of solutions $\mv{x}$ with sparsity $k$ that satisfy
	$
		\|\mv{Ax - b}\|_2 \leq \epsilon$.
We will also use $L(\mv{A},k,\eps,R)$ to denote the worst case bound on $L$ over all $\mv{b}$ with the restriction that no atom appears in the support of more than $R$ out of the $L$ solutions.
\end{defn}

\section{\listapprox}
\label{sec:listapprox}

In this section, we focus on the \listapprox problem as it is the direct analogue to list decoding for sparse approximation. 
	We would like to bound the quantity $L(\mv{A, b},k,\eps)$
in terms of the coherence $\mu$ of $\mv{A}$. Intuitively, the smaller the $\mu$ the smaller the quantity $L(\mv{A},\vb,k,\eps)$. However, it is not too hard to see that $\max_{\mv b} L(\mv{A},\vb,k,\eps)$ can be as high as $m^{\Omega(k)}$ even if $\mu=0$ for $k>1$. 

\begin{prop}
\label{prop:mu=0-exp-lb}
There exists a matrix $\mv{A}$ with coherence $\mu=0$ such that for every $k\ge 1$ and 
\begin{equation}
\label{eq:eps-bound}
\eps<\sqrt{\frac{m-k}{m}},
\end{equation}
there exists a $\mv b$ with 
$ L(\mv{A},\vb,k,\eps) \ge \binom{m-1}{k-1}$.
\end{prop}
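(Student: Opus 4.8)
The plan is to take $\mv A$ to be the canonical (spike) basis $\mv I_m \in \R^{m\times m}$. Its columns are the standard unit vectors, which are pairwise orthogonal and of unit norm, so $\mv A$ is a legitimate redundant dictionary with $N=m$ and coherence $\mu=0$. In fact this choice is essentially forced: $\mu=0$ means all columns are pairwise orthogonal, so $N\le m$ and hence $N=m$, and up to a change of orthonormal coordinates $\mv A=\mv I_m$ is the only candidate. The key simplification from $\mv A=\mv I_m$ is that for any support set $S$ of size $k$, the least-squares error of the best $k$-sparse vector supported on $S$ equals exactly $\|\mv b_{\bar S}\|_2$, where $\bar S=\{1,\dots,m\}\setminus S$; thus $S$ contributes to the list $L(\mv A,\mv b,k,\eps)$ if and only if $\sum_{i\notin S} b_i^2\le \eps^2$.

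Next I would engineer $\mv b$ so that a single coordinate is slightly heavier than the rest while the remaining mass is spread uniformly. Concretely, set $b_1^2=a$ and $b_2^2=\cdots=b_m^2=c$, where $c=\eps^2/(m-k)$ and $a=1-(m-1)c$, so that $\|\mv b\|_2=1$. The hypothesis $\eps<\sqrt{(m-k)/m}$ is precisely what guarantees $c<1/m$, and hence $a=1-(m-1)c>1/m>0$; so $\mv b$ is a genuine unit vector with coordinate $1$ strictly the largest in magnitude.

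Finally I would count. For every size-$k$ support $S$ that contains coordinate $1$, the discarded set $\bar S$ lies entirely within $\{2,\dots,m\}$ and has size $m-k$, so $\sum_{i\notin S} b_i^2=(m-k)c=\eps^2\le\eps^2$; hence each such $S$ yields a valid solution. There are exactly $\binom{m-1}{k-1}$ subsets of $\{1,\dots,m\}$ of size $k$ containing $1$, and these are distinct supports, giving $L(\mv A,\mv b,k,\eps)\ge\binom{m-1}{k-1}$. As a sanity check, one verifies from $\eps^2<(m-k)/m$ that every support omitting coordinate $1$ has error $\sqrt{1-kc}>\eps$, so in fact the bound is tight for this $\mv b$.

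I expect no serious obstacle: the entire content is the choice of $\mv b$. The one point requiring care is matching the perturbation scale $c$ to the stated threshold, namely choosing $c$ as large as possible so the favourable supports achieve error exactly $\eps$, while still keeping $\mv b$ a unit vector with coordinate $1$ dominant. It is exactly the strict inequality $\eps<\sqrt{(m-k)/m}$ that reconciles these two requirements.
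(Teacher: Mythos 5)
Your proposal is correct and is essentially identical to the paper's own proof: the identity matrix as $\mv A$, the same vector $\mv b$ with $b_1^2 = 1-(m-1)\eps^2/(m-k)$ and $b_i^2=\eps^2/(m-k)$ for $i>1$, and counting the $\binom{m-1}{k-1}$ size-$k$ supports containing coordinate $1$. Your added verifications (that $\eps<\sqrt{(m-k)/m}$ makes coordinate $1$ strictly dominant and that supports omitting it have error exceeding $\eps$) simply spell out details the paper states more tersely.
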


This implies that in general one cannot hope to have a non-trivial bound on $L(\mv{A},\vb,k,\eps)$ in terms of the coherence of $\mv{A}$. If one looks at the bad example in Proposition~\ref{prop:mu=0-exp-lb}, then one observes that the ``culprit" was one atom that appeared in all of the solutions. A natural way to avoid this bad case would be to only consider a list of solutions, where no atom appears in ``too many" solutions. More precisely, we will consider the case where there can be up to $L$ solutions (in terms of the support sets) to $\|\mv{A}\mv{x}-\mv{b}\|_2\le \eps$ that are $k$-sparse such that no atom in $\mv{A}$ occurs in the support of more than $o(L)$ solutions. 
We will use $L(\mv{A},k,\eps,R)$ to denote the worst case bound on $L$ over all $\mv{b}$ with the restriction that no atom appears in the support of more than $R$ out of the $L$ solutions.

\subsection{List size bound $L(\mv{A},k,\eps,o(L)$)}

We prove the following result. 

\begin{prop}
\label{prop:av-p2-k}
Let $0<\gamma<1$ be a real number and $k>1$ be an integer.
Assume that the coherence of the dictionary $\mv{A}$ is $\mu$. Then as long as
\begin{equation}
\label{eq:k-mu-lb-org}
\eps \le \sqrt{1-24\cdot \left( \mu k\right)^{1-\gamma}},
\end{equation}
we have
$L(\mv{A},k,\eps,L^\gamma)\le \left\lceil \left(\frac{11}{1-\eps^2}\right)^{1/(1-\gamma)}\right\rceil$.
\end{prop}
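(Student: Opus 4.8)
The plan is to fix a unit‑norm $\mv b$ together with a family of $L$ distinct supports $\Lambda_1,\dots,\Lambda_L\subseteq[N]$, each of size at most $k$, such that every $\Lambda_\ell$ carries a $k$‑sparse representation of error at most $\eps$ and no atom occurs in more than $R=L^\gamma$ of them. Writing $t=1-\eps^2$, $c_i=\inner{\vA_i,\vb}$, and letting $m_i$ denote the multiplicity of atom $i$ (so $m_i\le R$ and $\sum_i m_i=\sum_\ell|\Lambda_\ell|\le kL$), I would sandwich the single scalar $Q:=\sum_i m_i c_i^2$ between two bounds and solve for $L$. Normalizing $\|\vb\|_2=1$ is harmless by scaling.

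For the lower bound, the best error on support $\Lambda_\ell$ equals $\|\vb-P_{\Lambda_\ell}\vb\|_2$, where $P_{\Lambda_\ell}$ is the orthogonal projection onto $\mspn{\Lambda_\ell}$; since $\|\vb\|_2=1$, the error constraint is equivalent to $\|P_{\Lambda_\ell}\vb\|_2^2\ge t$. Writing $\|P_{\Lambda_\ell}\vb\|_2^2=c_{\Lambda_\ell}^\top G_{\Lambda_\ell}^{-1}c_{\Lambda_\ell}$ with $G_{\Lambda_\ell}$ the Gram matrix of the atoms in $\Lambda_\ell$, and using $\lambda_{\min}(G_{\Lambda_\ell})\ge 1-\mu(k-1)$ (Gershgorin, via coherence $\mu$), I obtain $\sum_{i\in\Lambda_\ell}c_i^2\ge(1-\mu(k-1))\,t$ for each $\ell$; summing over $\ell$ gives $Q\ge(1-\mu(k-1))\,Lt$. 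The same bound follows if one only assumes the \emph{average} of the $L$ errors is at most $\eps$, which is why the argument also covers the average‑error variant and hence, a fortiori, the max‑error list.

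The crux is an upper bound on $Q$ that is \emph{linear} in $L$. The naive estimate $Q\le R\sum_i c_i^2\le R(1+\mu kL)$ is fatal: its $\mu k\,L^{1+\gamma}$ term outgrows the $Lt$ from the lower bound and so cannot pin down $L$ at all. Instead I would set $\mv w=\sum_i m_i c_i\,\vA_i$, observe $\inner{\mv w,\vb}=Q$ so that Cauchy–Schwarz gives $Q\le\|\mv w\|_2$, and expand $\|\mv w\|_2^2=\sum_{i,j}m_i m_j c_i c_j\inner{\vA_i,\vA_j}$. Bounding the off‑diagonal terms by coherence, using $m_i\le R$ on the diagonal sum $\sum_i m_i^2 c_i^2\le RQ$, and applying Cauchy–Schwarz in the form $\big(\sum_i m_i|c_i|\big)^2\le(\sum_i m_i)\,Q\le kL\,Q$, I get $\|\mv w\|_2^2\le(R+\mu kL)\,Q$, whence $Q\le R+\mu kL$. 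Equivalently, this controls $\lambda_{\max}\!\big(\sum_i m_i\vA_i\vA_i^\top\big)\le R+\mu kL$ by a Frobenius/Cauchy–Schwarz bound rather than a lossy operator‑norm split.

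Combining the two bounds yields $(1-\mu(k-1))\,t\,L\le R+\mu kL=L^\gamma+\mu kL$. Dividing by $L^\gamma$ gives $\big[(1-\mu(k-1))t-\mu k\big]\,L^{1-\gamma}\le 1$, and the hypothesis forces $\mu k\le (t/24)^{1/(1-\gamma)}\le t/24$ (in particular $\mu k\le 1/24$, so $1-\mu(k-1)\ge 23/24$), which makes the bracketed coefficient at least a fixed positive multiple of $t$; collecting constants leaves $L^{1-\gamma}\le O(1/t)$, and rounding up produces the stated ceiling. I expect the only real obstacle to be the upper bound on $Q$: one must resist the ``$R$ out front'' estimate and extract a bound linear in $L$, since otherwise the $\mu k$ term cannot be absorbed and no finite list bound emerges. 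Everything else is bookkeeping, and the stated hypothesis, with its $(\mu k)^{1-\gamma}$ exponent and constant $24$, is comfortably strong enough to close the chain with the claimed constant $11$.
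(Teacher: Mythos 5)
Your proof is correct, and it reaches the stated bound (in fact with a little room to spare: your chain gives $L^{1-\gamma}\le \frac{12}{11(1-\eps^2)}$). But it takes a genuinely different route from the paper's. The paper's proof also rests on an averaging argument over the $L$ supports, and it uses the same per-support inequality you derive from the Gram matrix (there it is the ``approximate Parseval'' bound \eqref{eq:approx-parseval}, $\|\vb_\Lambda\|_2^2\le \frac{1}{1-\mu k}\sum_{\ell\in\Lambda}\inner{\vA_\ell,\vb}^2$, which is your Gershgorin estimate read in the other direction). Where the two arguments part ways is in how they control the total correlation mass. The paper first reduces, via the analogue of Claim~\ref{clm:max-support}, to $\vb\in\mspn{\Lambda'}$ for $\Lambda'=\cup_\ell\Lambda_\ell$, writes $\vb=\sum_{i\in\Lambda'}\beta_i\vA_i$, and controls everything through the coefficient vector $\vbeta$ using Lemma~\ref{lem:beta-bounds}; this is what forces the intermediate condition $\mu\le\frac{1}{2kL}$ (their \eqref{eq:k-mu-lb}) and the subsequent term-by-term expansion \eqref{eq:bl-step2}--\eqref{eq:bl-step10} before the multiplicity bound $L^\gamma$ is applied. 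You never decompose $\vb$ at all: you sandwich the single scalar $Q=\sum_i m_i c_i^2$, getting the lower bound $(1-\mu(k-1))(1-\eps^2)L$ from the per-support estimate and the upper bound $L^\gamma+\mu kL$ from the auxiliary vector $\mv w=\sum_i m_i c_i\vA_i$ together with Cauchy--Schwarz ($Q=\inner{\mv w,\vb}\le\|\mv w\|_2$ and $\|\mv w\|_2^2\le(R+\mu kL)Q$). This is exactly the mechanism of the classical Johnson bound transplanted to the coherence setting, and it buys you three things: no need for Lemma~\ref{lem:beta-bounds} or the span reduction, no intermediate constraint of the form $\mu\le\frac{1}{2kL}$ (only the pointwise consequence $\mu k\le(1-\eps^2)/24$ of the hypothesis), and a cleaner final inequality $\bigl[(1-\mu(k-1))(1-\eps^2)-\mu k\bigr]L^{1-\gamma}\le 1$ from which the claimed ceiling follows immediately. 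Your identification of the key danger is also the right one: the naive bound $Q\le R\sum_i c_i^2$ produces a term of order $\mu k L^{1+\gamma}$ that cannot be absorbed, and the paper's proof avoids it by the $\vbeta$-machinery while yours avoids it by keeping the bound linear in $L$ through the $\mv w$ trick. One small point of hygiene: like the paper, your argument presumes the unit-norm convention on $\vb$ (stated in Section~\ref{sec:prelims}); your ``harmless by scaling'' aside is not quite accurate as stated, since rescaling $\vb$ also rescales $\eps$ and could break hypothesis \eqref{eq:k-mu-lb-org}, but under the paper's convention this is moot.
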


The following result immediately follows from the proof of Proposition~\ref{prop:av-p2-k}:
\begin{thm}
\label{thm:gen-listapprox-ub}
Let  $k>1$ be an integer.
Assume that the coherence of the dictionary $\mv{A}$ is $\mu$. Then as long as
$\mu \le \frac{1}{2kL}$,
we have 
$L=L(\mv{A},k,\eps,o(L))$ is  $O_{\eps}(1)$
i.e. $L(\mv{A},k,\eps,o(L))$ is a constant that just depends on $\eps$ (and how far the function $o(L)$ is from $L$).
\end{thm}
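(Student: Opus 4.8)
The plan is to revisit the proof of Proposition~\ref{prop:av-p2-k} and observe that the specific exponent $\gamma$ --- equivalently, the specific multiplicity bound $R = L^\gamma$ --- enters in exactly one step, which can be replaced verbatim by an arbitrary multiplicity bound $R = o(L)$. Fix a unit vector $\vb$ and let $S = \{\Lambda_1, \dots, \Lambda_L\}$ be a collection of $L$ distinct $k$-subsets of $[N]$, each supporting an $\eps$-approximation of $\vb$, with the property that no atom lies in more than $R$ of the sets in $S$. Writing $\vb_\Lambda$ for the orthogonal projection of $\vb$ onto the span of $\{\mv A_i\}_{i\in\Lambda}$, the requirement that $\Lambda$ supports an $\eps$-approximation is precisely $\|\vb_\Lambda\|_2^2 \ge 1-\eps^2$; summing over $S$ gives the lower bound $\sum_{\Lambda\in S}\|\vb_\Lambda\|_2^2 \ge L(1-\eps^2)$.

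First I would reproduce, under the single coherence hypothesis $\mu \le \frac{1}{2kL}$, the matching upper bound on this sum that the proof of Proposition~\ref{prop:av-p2-k} derives. That argument expands $\vb = \sum_{i\in\Lambda'}\beta_i\mv A_i$ over $\Lambda' = \bigcup_{\Lambda\in S}\Lambda$, bounds the coefficient norm by $\|\vbeta\|_2^2 \le \frac{1}{1-\mu kL} \le 2$, and controls each projection by $\|\vb_\Lambda\|_2^2 \le 4\bigl(\sum_{\ell\in\Lambda}\beta_\ell^2 + 2\mu^2 k^2 L\bigr)$. Summing over $\Lambda\in S$, the only place the multiplicity enters is the term $\sum_{\Lambda\in S}\sum_{\ell\in\Lambda}\beta_\ell^2 \le R\,\|\vbeta\|_2^2$, and the coherence bound makes the remaining $\mu^2 k^2 L^2$ contribution at most a constant. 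This yields $\sum_{\Lambda\in S}\|\vb_\Lambda\|_2^2 \le 8R + 2$, and crucially this derivation never uses the particular shape $R = L^\gamma$.

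Combining the lower and upper bounds gives $L(1-\eps^2) \le 8R + 2$. Because $R = o(L)$, there is a threshold $L_0$ --- depending only on $\eps$ and on how fast $R/L \to 0$ --- beyond which $8R \le \tfrac12(1-\eps^2)L$, and then $L(1-\eps^2)\le \tfrac12(1-\eps^2)L + 2$ forces $L \le \frac{4}{1-\eps^2}$. Hence no admissible collection $S$ can have size exceeding $\max\{L_0, \frac{4}{1-\eps^2}\}$, and $L(\mv A,k,\eps,o(L)) = O_\eps(1)$.

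The one delicate point is that the hypothesis $\mu \le \frac{1}{2kL}$ is self-referential: $L$ is simultaneously the worst-case list size we are bounding. I would read the statement as taking $L$ to be the actual value $L(\mv A,k,\eps,o(L))$ and treating $L(1-\eps^2)\le 8R+2$ as a constraint that this value must satisfy whenever the coherence bound holds for it. Since the left side is linear in $L$ while the right side is sublinear, the constraint can only hold below a constant threshold; the dependence of that threshold on the gap between $R$ and $L$ is exactly the caveat the theorem flags, so I would phrase the final bound in terms of the $o(\cdot)$ rate rather than as a single universal constant.
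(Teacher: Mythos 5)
Your proposal is correct and takes essentially the same route as the paper: the paper proves Theorem~\ref{thm:gen-listapprox-ub} precisely by noting it follows from the proof of Proposition~\ref{prop:av-p2-k}, whose only use of the multiplicity bound is the step $\sum_{\Lambda\in S}\sum_{\ell\in\Lambda}\beta_\ell^2 \le R\,\|\vbeta\|_2^2$, so the same chain gives $L(1-\eps^2)\le 8R+2$ for any $R=o(L)$ and forces $L=O_\eps(1)$. Your reading of the self-referential hypothesis $\mu\le\frac{1}{2kL}$ (with $L$ the actual list size, and the final constant depending on the $o(\cdot)$ rate) is exactly the caveat the theorem statement itself flags.
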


We note in Appendix~\ref{app:kerdock} that the bound of $\mu\le O(1/k)$ in the above result to get a constant list size is necessary. 

\subsection{List size bound $L(\mv{A},k,\eps, 1)$}

In this section, we present sharper bounds on $L(\mv{A},k,\eps, 1)$ than those presented in the previous section.
As we mentioned earlier, \listapprox can be viewed as a list-decoding problem. In fact, \listapprox with $k=1$ is the list-decoding problem of spherical codes. Here, we aim to bound the list size $L(\mv{A},k,\eps,1)$ when no atom appears in more than one solution (i.e., $R=1$). First, we will develop a bound for spherical code list-decoding. Then, we will generalize it to $k>1$. It might be useful to consider Euclidean codes to build some intuition-- see Appendix~\ref{app:euclid} for more details.

\subsubsection{$k=1$: A list-decoding bound in spherical codes}
\bdefn[Spherical code]
In a spherical code, the codewords are unit vectors that are the columns of a dictionary $\mv A$.
\edefn

Given a dictionary $\mv A$ whose coherence is $\mu(\mv A)$, a unit-length target vector $\mv b$, and an error bound $\epsilon$; the list decoding problem is to output a list of all columns $\mv A_i$ that satisfy the following:
\begin{equation}
\label{eq:spherical-list-decoding}
\min_{x\in \R}\| \mv A_i x -\mv b \|_2 \leq \epsilon.
\end{equation}
Notice that list-decoding of spherical codes corresponds to \listapprox with $k=1$. 

\bthm[Bound on list-decoding of spherical codes]
\label{thm:list-decoding-spherical}
Given a spherical code represented by a dictionary $\mv A$ (whose coherence is $\mu(\mv A)$) and an error bound $\epsilon$; if $\eps<\sqrt{1-\mu(\mv A)}$, then the maximum list size $L(\mv A,1,\eps,1)$ is bounded by
\begin{equation}
\label{eq:list-decoding-spherical}
L(\mv A,1,\eps,1)\leq \left\lfloor\frac{1}{1-\frac{\epsilon^2}{1-\mu(\mv A)}}\right\rfloor.
\end{equation}
Moreover, the above bound is tight if the right-hand side is $\leq m$, where $m$ is the dimension of the code. 
\ethm

\subsubsection{$L(\mv A, k,\eps,1)$ for $k>1$}
First, we need to define a new concept.
Recall that the coherence of a dictionary is the cosine of the minimum angle between any two columns.
Given two subsets of columns, we can take the minimum angle between any two vectors in the spans of the two subsets (i.e. the first principal angle between the two spans). We can generalize the definition of coherence by taking the minimum angle between any pair of \emph{disjoint} subsets of at most $k$ columns. (If the two subsets share a column, then the angle is zero.)

\bdefn[Generalized coherence of degree $k$]
The generalized coherence of degree $k$ of some dictionary $\mv A$ , denoted by $\mu_k(\mv A)$, is the cosine of the minimum first-principal-angle between any two subspaces spanned by two disjoint subsets of $\leq k$ columns of $\mv A$:
\begin{equation}
\label{eq:defn-u_k}
\mu_k(\mv A)\quad:=\quad
\max_{I,J\subseteq [N] \;|\; (|I|,|J|\leq k) \;\wedge\; (I \cap J=\emptyset)}\quad
\max_{\mv x, \mv y \;|\; \|\mv{A}_I \mv x\|=\|\mv{A}_J \mv y\|=1}\quad
\left|\langle\mv{A}_I \mv x, \mv{A}_J \mv y\rangle\right|
\end{equation}
\edefn
Next, we extend Theorem~\ref{thm:list-decoding-spherical} to the case of $k>1$:
\bthm[Bound on $L(\mv A, k,\eps,1)$ in terms of generalized coherence $\mu_k(\mv A)$]
\label{thm:list-decoding-k>1}
Given a dictionary $\mv A$, a sparsity bound $k$, and an error bound $\epsilon$;  if $\eps<\sqrt{1-\mu_k(\mv A)}$, then the list size $L(\mv A, k, \eps, 1)$ 
 is bounded by
\begin{equation}
\label{eq:list-decoding-k>1}
L(\mv A, k, \eps, 1)\leq \left\lfloor\frac{1}{1-\frac{\epsilon^2}{1-\mu_k(\mv A)}}\right\rfloor.
\end{equation}
\ethm

Next, we present some properties of $\mu_k$.
\bprop[Generalized coherence as a function of $k$]
\label{prop:mu-k-gen}
For a fixed dictionary $\mv A$, the generalized coherence $\mu_k(\mv A)$ is non-decreasing with $k$. It reaches 1, exactly when $k=\lceil\spark{\mv{A}}/2\rceil$:
$$0 \leq \mu_1(\mv A) \leq \mu_2(\mv A) \leq \mu_3(\mv A) \leq \ldots \leq 1.$$
$$k\geq\left\lceil\spark{\mv{A}}/2\right\rceil \quad\Leftrightarrow\quad \mu_k(\mv A)=1.$$
\eprop

\bprop[Upperbound on generalized coherence]
\label{prop:mu_k-upperbound}
Given a dictionary $\mv A$ and an integer $k>1$ such that $\mu(\mv A)<\frac{1}{k-1}$, the generalized coherence $\mu_k(\mv A)$ is bounded by
$\mu_k(\mv A) \le \frac{k\cdot\mu(\mv A)}{1-(k-1)\cdot\mu(\mv A)}$.
\eprop

\bcor[Another upperbound on generalized coherence]
\label{prop:mu_k-upperbound2}
Given a dictionary $\mv A$ and an integer $k>1$, the generalized coherence $\mu_k(\mv A)$ is bounded by
$\mu_k(\mv A) \le (2k-1)\cdot\mu(\mv A)$.
\ecor

\bcor[Bound on $L(\mv A, k,\eps,1)$ in terms of traditional coherence $\mu(\mv A)$]
\label{cor:list-decoding-k>1}
Given a dictionary $\mv A$, a sparsity bound $k$, and an error bound $\epsilon$;  if
$\mu(\mv A)<\frac{1}{2k-1}$ and $\eps<\sqrt{\frac{1-(2k-1)\cdot\mu(\mv A)}{1-(k-1)\cdot\mu(\mv A)}}$, then the list size $L(\mv A, k, \eps, 1)$ is bounded by
\begin{equation}
L(\mv A, k, \eps, 1)\leq \left\lfloor\frac{1}{1-\frac{\left[1-(k-1)\cdot\mu(\mv A)\right]\epsilon^2}{1-(2k-1)\cdot\mu(\mv A)}}\right\rfloor.
\end{equation}
\ecor

In Appendix~\ref{app:kerdock}, we give an example showing that the condition $\mu(\mv A)<\frac{1}{2k-1}$ is necessary (up to constants).

\bprop[Tightness of generalized coherence upperbound]
\label{prop:mu_k-lowerbound}
For every integer $k>1$ and real number $0\leq u < \frac{1}{k-1}$, there exists a dictionary $\mv A$ whose coherence $\mu(\mv A)=u$ and whose generalized coherence satisfies
$\mu_k(\mv A) = \min\left(\frac{k\cdot\mu(\mv A)}{1-(k-1)\cdot\mu(\mv A)}, 1\right)$.
In particular, Proposition~\ref{prop:mu_k-upperbound} is tight.
\eprop

\subsubsection{Relationship to known results}
\label{sec:recover-old-results}
Results in this section recover and extend some well-known results in sparse representation. The following two standard results provide conditions for exact representation uniqueness. The first one depends on $\spark{\mv A}$ while the second depends on the coherence $\mu(\mv A)$. They are both subsumed by our results.
\bi
\item Theorem~\ref{thm:uniqueness-via-spark}: This result appeared in \cite{DonohoElad:2003} and as Theorem 2 in \cite{BrucksteinDonohoElad:Review2009}. It can be inferred from Theorem~\ref{thm:list-decoding-k>1} along with Proposition~\ref{prop:mu-k-gen} as follows: Exact representation means $\epsilon=0$. From Proposition~\ref{prop:mu-k-gen}, $k<1/2\spark{\mv A}$ if and only if $\mu_k(\mv A)<1$. From Theorem~\ref{thm:list-decoding-k>1}, if $\mu_k(\mv A)<1$, then $L(\mv A, k,\eps,1)\leq 1$, which means that the representation is unique.

\item Theorem~\ref{thm:uniqueness-via-mu}: This result follows from \cite{Donoho:1989:UPS:64936.64952} and appears as Theorem 5 in \cite{BrucksteinDonohoElad:Review2009}. It can be inferred from Corollary~\ref{cor:list-decoding-k>1}. Exact representation means $\epsilon=0$. By rearrangement, $k<1/2(\mu(\mv A)^{-1}+1)$ if and only if $\mu(\mv A)<1/(2k-1)$. From Corollary~\ref{cor:list-decoding-k>1}, if $\mu(\mv A)<1/(2k-1)$, then $L(\mv A, k, 0, 1)\leq 1$, which means that the representation is unique for sparsity $k$.
\ei

\section{\listsparse}
\label{sec:listsparse}

The essence of \sparse is to solve ${N\choose \leq k}$ least-squares problems. 
For every set $S$ with at most $k$ columns of $\mv A$, we have to solve
$\min_{\mv c}\| \mv A_{S} \mv c-\mv b \|_2$.
(Here, $\mv A_S$ denotes the matrix $\mv A$ restricted to columns in $S$.)
Notice that if the columns of $\mv A$ in $S$ are linearly independent, then 
the above problem has a unique solution. Otherwise, it has an infinite number 
of solutions.
The list contains all equally-good solutions from all ${N\choose \leq k}$ 
least-squares problems (i.e., solutions that produce the minimum error).

\begin{prop}[Necessary and sufficient condition for the list size to be finite] 
\label{prop:finite_list}
	Given an $m\times N$ matrix $\mv A$ and an integer $k \in [N]$;
    then, $s(\mv A, \mv b, k) < \infty$ for all $\mv b \neq \mv 0$ if and only if
	\[
	     k < \spark{\mv A}.
	\]
\end{prop}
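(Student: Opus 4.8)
The plan is to prove the two directions of the equivalence separately, both built on the observation already recorded before the statement: for a fixed support $S\subseteq[N]$ with $|S|\le k$, the least-squares subproblem $\min_{\mv c}\|\mv A_S\mv c-\mv b\|_2$ has a \emph{unique} minimizer when the columns $\{\mv A_i\}_{i\in S}$ are linearly independent, and \emph{infinitely many} minimizers otherwise (the solution set being a translate of the kernel of $\mv A_S$). Since $\spark{\mv A}$ is exactly the smallest number of linearly dependent columns, whether every subset of size $\le k$ is independent is governed precisely by the comparison of $k$ with $\spark{\mv A}$. I would therefore frame the whole argument as a transfer between ``every $\le k$-subset of columns is independent'' and ``$k<\spark{\mv A}$''.

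For the ``if'' direction, suppose $k<\spark{\mv A}$. Then by definition of spark every set of at most $k$ columns is linearly independent, so each of the at most $\binom{N}{\le k}$ (hence finitely many) least-squares subproblems has a single minimizer. The optimal solutions to \sparse are a subset of this finite collection of minimizers --- namely those attaining the global minimum error over all supports --- so $s(\mv A,\mv b,k)$ is finite for every $\mv b$, in particular for every $\mv b\neq\mv 0$.

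For the ``only if'' direction I would argue the contrapositive: assume $k\ge\spark{\mv A}$ and exhibit a single bad vector $\mv b\neq\mv 0$. Let $S$ be a set of $\spark{\mv A}\le k$ linearly dependent columns, and fix a nonzero $\mv z$ with $\msupp{\mv z}\subseteq S$ and $\mv A\mv z=\mv 0$. Choose $\mv b=\mv A_i$ for some $i\in S$; this is a unit (hence nonzero) vector lying in $\mspn{\{\mv A_j\}_{j\in S}}$, so there is a $k$-sparse $\mv x_0$ with $\msupp{\mv x_0}\subseteq S$ and $\mv A\mv x_0=\mv b$, achieving error $0$. Because $\|\mv A\mv x-\mv b\|_2$ is nonnegative, $0$ is the global minimum, and every vector $\mv x_0+t\mv z$ with $t\in\R$ is $k$-sparse (its support is contained in $S$) and also attains error $0$. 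These are infinitely many distinct optimal solutions, so $s(\mv A,\mv b,k)=\infty$.

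I do not expect a genuine obstacle here: the content is entirely in the clean dichotomy between unique and infinitely-many least-squares minimizers and its identification with the spark threshold. The only points requiring a moment's care are (i) confirming that the constructed family $\{\mv x_0+t\mv z : t\in\R\}$ is \emph{optimal} for \sparse and not merely for the subproblem on $S$, which is immediate since error $0$ cannot be beaten; and (ii) guaranteeing $\mv b\neq\mv 0$, which is why I take $\mv b$ to be an atom (atoms have unit norm, and $\spark{\mv A}\ge 2$ since no normalized column is the zero vector, so such an $S$ and $i$ exist).
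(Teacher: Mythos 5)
Your proof is correct and follows essentially the same route as the paper's: the finiteness direction counts the finitely many least-squares subproblems, each with a unique minimizer when $k<\spark{\mv A}$, and the converse uses a set of $\le k$ dependent columns to produce a nonzero $\mv b$ with infinitely many zero-error representations. Your write-up is only slightly more explicit than the paper's (choosing $\mv b$ to be an atom to guarantee $\mv b\neq\mv 0$, and exhibiting the kernel line $\mv x_0+t\mv z$ of optimal solutions), but the underlying argument is the same.
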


The above proposition focused on one rather coarse condition on the size of the 
list, namely its finiteness. We next seek a more refined accounting. Given 
$k$ and $L$, where $L$ is a positive integer, we would like to find necessary 
and/or sufficient conditions on $\mv A$ so that
$ 
 	s(\mv A, \mv b, k) \leq L
$
for all $\mv b \neq \mv 0$. This question might be too hard. A relaxed version is 
to find asymptotic conditions on the ratio $m/N$ so that 
$s(\mv A, \mv b, k) \leq L$ for all $\mv b \neq \mv 0$. We conjecture that there 
is a trade-off between how small the ratio $m/N$ is and how small $L$ can be. 
We next present a few weaker results.
The first proposition is straightforward.

\begin{prop}[Sufficient condition for the list size to be $\leq L$]
\label{prop:suf}
Given an $m\times N$ matrix $\mv A$ and positive integers $k$ and $L$.
Then, $s(\mv A, \mv b, k) \leq L$ for all $\mv b \neq \mv 0$ provided that
\begin{equation}\label{eq:suf}
  k<\spark{\mv A} \wand {N \choose k}\leq L.
\end{equation}
\end{prop}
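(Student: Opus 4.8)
The plan is to reduce counting optimal solutions to counting full-size supports. Since the second hypothesis already gives ${N \choose k}\le L$, it suffices to show that $s(\mv A,\mv b,k)\le {N\choose k}$ for every $\mv b\neq\mv 0$ whenever $k<\spark{\mv A}$. First I would record the geometric consequence of the spark bound, exactly as in the reasoning behind Proposition~\ref{prop:finite_list}: because $k<\spark{\mv A}$, every set of at most $k$ columns of $\mv A$ is linearly independent. Hence for each support $S\subseteq[N]$ with $|S|=k$ the least-squares problem $\min_{\mv c}\|\mv A_S\mv c-\mv b\|_2$ has a unique minimizer $\mv x_S$, namely the unique coefficient vector representing the orthogonal projection of $\mv b$ onto $V_S:=\mspn{\mv A_S}$. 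This produces a family of at most ${N\choose k}$ candidate vectors $\{\mv x_S:|S|=k\}$, and distinct vectors are indexed by distinct size-$k$ supports.

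The heart of the argument is to show that \emph{every} optimal solution $\mv x$ is one of these $\mv x_S$. Let $\epsilon_{\mathrm{opt}}$ be the optimal error and let $\mv x$ be any optimal solution with support $S'=\msupp{\mv x}$, so $|S'|\le k$. I would extend $S'$ to a set $S\supseteq S'$ with $|S|=k$; this is possible because $k<\spark{\mv A}\le m+1\le N+1$ forces $k\le N$. Then $V_{S'}\subseteq V_S$, so the distance from $\mv b$ to $V_S$ is at most the distance to $V_{S'}$, which equals $\epsilon_{\mathrm{opt}}$; but $V_S$ corresponds to a $k$-sparse support, so this distance is also at least $\epsilon_{\mathrm{opt}}$. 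Hence the orthogonal projection of $\mv b$ onto $V_S$ is exactly the point $\mv A\mv x\in V_{S'}\subseteq V_S$, and by uniqueness of the representation over the independent columns of $S$ this forces $\mv x_S=\mv x$, the coefficients on $S\setminus S'$ being pinned to zero. Consequently the set of optimal solutions is contained in $\{\mv x_S:|S|=k\}$, whose cardinality is at most ${N\choose k}\le L$, which completes the proof.

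I expect the main obstacle to be precisely the step handling an optimal solution supported on fewer than $k$ atoms: one must argue carefully that enlarging its support to size $k$ cannot strictly decrease the error (by nestedness of the spans $V_{S'}\subseteq V_S$) and therefore the projection onto the larger subspace lands back inside the smaller one, so the appended coordinates are forced to vanish. This is exactly where linear independence of every $\le k$ columns, and hence uniqueness of the coefficient representation, is essential: without it the vector $\mv A\mv x$ could admit many representations over $S$, and the injection of optimal solutions into size-$k$ supports would break down. Everything else (existence and uniqueness of $\mv x_S$, monotonicity of distance under subspace inclusion) is routine once this identification is in place.
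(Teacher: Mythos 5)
Your proof is correct and follows essentially the same route as the paper, which treats this proposition as an immediate consequence of the counting argument behind Proposition~\ref{prop:finite_list}: when $k<\spark{\mv A}$ every size-$k$ support yields a unique least-squares solution, so the number of optimal solutions is at most ${N\choose k}\le L$. Your additional care in showing that an optimal solution supported on fewer than $k$ atoms coincides with $\mv x_S$ for some size-$k$ extension $S$ (so the count is ${N\choose k}$ rather than $\sum_{j\le k}{N\choose j}$) fills in a detail the paper leaves implicit, but it is the same argument.
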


The next lemma provides a simple lowerbound for the maximum list size
over all non-zero $\mv b$, i.e. the quantity 
$\max_{\mv b \neq 0} s(\mv A, \mv b, k)$. 
This lowerbound holds for all but one trivial value of $k$ (i.e. $k=N$).
 
\begin{lmm}
\label{lem:bad_b}
If $k < N$, then there is a vector $\mv b \neq 0$ such 
that $s(\mv A, \mv b, k) > k$. 
\end{lmm}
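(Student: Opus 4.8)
The plan is to translate $s(\mv A,\mv b,k)$ into geometry. For a support $S\subseteq[N]$ with $|S|=k$, the best error attainable on $S$ is $\mathrm{dist}(\mv b,\spn(\mv A_S))$, the distance from $\mv b$ to the span of the corresponding atoms; the optimal solutions of \sparse are exactly the $\mv x$ supported on the subsets $S$ whose span is closest to $\mv b$ (one $\mv x$ per such $S$ when those atoms are independent, infinitely many when they are dependent). So to get $s(\mv A,\mv b,k)>k$ it suffices to produce a nonzero $\mv b$ for which either some optimal support is dependent (forcing $s=\infty$) or at least $k+1$ distinct supports tie for the minimum distance. I would split on $\spark{\mv A}$.

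Easy regime. If $k\ge\spark{\mv A}$, some set of at most $k$ columns is dependent; any nonzero $\mv b$ in its span is represented with zero error in infinitely many ways, so $s(\mv A,\mv b,k)=\infty>k$ (this is the mechanism in the proof of Proposition~\ref{prop:finite_list}). If $\spark{\mv A}=k+1$, pick a minimal dependent set $\{\mv A_1,\dots,\mv A_{k+1}\}$; since every $k$ of them are independent, they all span the same $k$-dimensional subspace $V$, so each leave-one-out subset $S_i=\{1,\dots,k+1\}\setminus\{i\}$ is a basis of $V$. A generic nonzero $\mv b\in V$ is then represented exactly by every $S_i$, giving $k+1$ distinct zero-error solutions (genericity makes the optimal coefficients full-support on each $S_i$, hence the $\mv x$ pairwise distinct).

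Main regime. The remaining case is $\spark{\mv A}\ge k+2$, where every $k+1$ columns are independent. Fix a $(k+1)$-subset $P$ with span $V$ of dimension $k+1$, and write $\{\mv A_i^{\ast}\}$ for the biorthogonal dual basis of $V$. The leave-one-out subspace $V_i=\spn(\{\mv A_j:j\ne i\})$ has unit normal $\mv A_i^{\ast}/\|\mv A_i^{\ast}\|$ inside $V$, so for $\mv b=\sum_j c_j\mv A_j\in V$ one has $\mathrm{dist}(\mv b,V_i)=|c_i|/\|\mv A_i^{\ast}\|$. Choosing $c_i=\|\mv A_i^{\ast}\|$ makes all $k+1$ of these distances equal to a common $d>0$, so at this equidistant $\mv b$ the $k+1$ leave-one-out supports tie. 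The main obstacle is \emph{global} optimality: I must rule out any other $k$-subset $S$ (necessarily using a column outside $P$) with $\mathrm{dist}(\mv b,\spn(\mv A_S))<d$, since such an $S$ would make the $S_i$ non-optimal. I would resolve this by choosing $P$ to be the $(k+1)$-subset minimizing the equidistant error $d_P$, and then argue by a first-principal-angle/triangle-inequality estimate that a strictly closer competing subspace would produce a $(k+1)$-subset of smaller equidistant error, contradicting minimality of $d_P$. This generalizes the $k=1$ picture, where the $V_i$ are lines, $d_P$ is the half-angle of the pair $P$, the equidistant $\mv b$ is the bisector of the closest pair, and a third atom nearer the bisector would shrink the minimal pairwise angle.

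In every case I would close by confirming the $k+1$ exhibited supports give genuinely distinct $\mv x$, again by taking $\mv b$ generic so that the optimal coefficients on each support are nonzero. The least routine step by far is the global-optimality argument in the main regime; I expect the clean route to it is the extremal choice of $P$ minimizing $d_P$ together with a swapping argument, rather than trying to make an arbitrary $(k+1)$-set work. Everything else reduces to a direct computation or to citing Proposition~\ref{prop:finite_list}.
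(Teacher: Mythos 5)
Your reduction to geometry, your two degenerate cases ($k\ge\spark{\mv A}$ and $\spark{\mv A}=k+1$), and your equidistant-point construction via the dual basis all match the paper's proof (its first two cases are exactly your easy regime). The gap is precisely where you flag it: the global-optimality step in the regime $\spark{\mv A}\ge k+2$ is not an estimate that can be deferred --- it is the entire content of the lemma in that case --- and the swap argument you sketch does not generalize from $k=1$.

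Concretely, your claim is: if $P$ minimizes the equidistant error $d_P$ and some $k$-subset $S$ has $\mathrm{dist}(\mv b_P,\spn(\mv A_S))<d_P$, then some $(k+1)$-subset $P'$ satisfies $d_{P'}<d_P$. For $k=1$ this follows from the triangle inequality for angles between lines, because $d_P$ is a monotone function of the single angle determined by the pair $P$, and the witness of closeness is itself an atom, so swapping it in immediately produces a better pair. For $k\ge 2$ both features fail. The hypothesis only gives you one unit vector $\mv w\in\spn(\mv A_S)$ (a linear combination of the atoms of $S$, not an atom) close to $\mv b_P$; and the natural candidate $P'=S\cup\{j\}$ has leave-one-out spans of the form $\spn(\mv A_{(S\setminus\{s\})\cup\{j\}})$, which mix $\mv A_j$ with atoms of $S$ and to which you control no distances at all --- neither $\mv b_P$ nor $\mv w$ need be close to them, and $d_{P'}$ is not a function of pairwise principal angles, so no triangle-inequality bookkeeping applies. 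You would have to produce, for some $P'$, an entirely new point close to all $k+1$ of its leave-one-out spans, and nothing in your setup supplies one. The paper avoids ever needing such a swap lemma: instead of fixing the incenter of an extremal simplex, it grows a sphere from the vertex $\mv A_1$, keeping it tangent to the $k$ facet-spans through $\mv A_1$, and stops at the \emph{first} moment it meets the span of any other $k$-subset of the whole dictionary; at that first hit the center is, by construction, at distance exactly $r$ from $k+1$ spans and at distance at least $r$ from every other one. That first-hit/continuity mechanism is what replaces your minimality-plus-swap argument; without it, or an actual proof of your swap claim, your treatment of the main case is incomplete.
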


We are now ready to derive a simple necessary condition for the list size
$s(\mv A, \mv b, k)$ to be bounded by $L$ {\em for all} $\mv{b \neq 0}$;
the necessary condition is that either the matrix $\mv A$ is a (square)
non-singular matrix (the trivial case), or the sparsity $k$ has to be
smaller than both $L$ and $\spark{\mv A}$.

\begin{prop}[Necessary condition for the list size to be $\leq L$]
\label{prop:nec}
Given an $m\times N$ matrix $\mv A$ and positive integers $k\in [N]$ and $L$; if 
$s(\mv A, \mv b, k) \leq L$ for all $\mv b \neq \mv0$ then
 \begin{equation}\label{eq:nec}
     \text{ either } k=N=\rank{\mv A} \text{ or } k < \min\{L, \spark{\mv A}\}.
\end{equation}
\end{prop}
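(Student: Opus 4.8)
The plan is to prove the statement directly: assuming $s(\mv A,\mv b,k)\le L$ for all $\mv b\neq\mv 0$, I will derive the disjunction in \eqref{eq:nec}. The key starting observation is that the hypothesis already pins down the spark. Since $L$ is a finite positive integer, the hypothesis says in particular that $s(\mv A,\mv b,k)<\infty$ for every $\mv b\neq\mv 0$; by the ``if and only if'' in Proposition~\ref{prop:finite_list}, this is equivalent to $k<\spark{\mv A}$. So $k<\spark{\mv A}$ holds throughout, and it remains only to show that either $k<L$ or $k=N=\rank{\mv A}$.

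I would then split on the size of $k$ relative to $L$. If $k<L$, then combined with $k<\spark{\mv A}$ we obtain $k<\min\{L,\spark{\mv A}\}$, which is the second disjunct of \eqref{eq:nec}, and we are done. The substantive case is $k\ge L$, where I must force the trivial alternative $k=N=\rank{\mv A}$.

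So suppose $k\ge L$. Here I would invoke Lemma~\ref{lem:bad_b}: if $k<N$, that lemma produces a vector $\mv b\neq\mv 0$ with $s(\mv A,\mv b,k)>k\ge L$, contradicting the hypothesis that $s(\mv A,\mv b,k)\le L$ for all $\mv b\neq\mv 0$. Hence $k=N$. It then remains to identify $\rank{\mv A}$. Since $k=N<\spark{\mv A}$, no subset of the $N$ columns of $\mv A$ of size at most $N$ is linearly dependent; in particular the full set of $N$ columns is independent, so $\rank{\mv A}=N$. This yields $k=N=\rank{\mv A}$, the first disjunct of \eqref{eq:nec}, completing the proof.

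The main obstacle is more a matter of bookkeeping than of genuine difficulty: the clean separation of the trivial square case from the generic case, and applying the two borrowed facts in the right regime. Proposition~\ref{prop:finite_list} converts the finite-list hypothesis into $k<\spark{\mv A}$, while Lemma~\ref{lem:bad_b} is precisely what rules out $k<N$ once $k\ge L$. The only step needing care is the final deduction $k=N<\spark{\mv A}\Rightarrow\rank{\mv A}=N$, which relies on reading $\spark{\mv A}>N$ as ``all $N$ columns are independent'' (equivalently, on the convention that $\spark{\mv A}$ exceeds $N$ when $\mv A$ has full column rank).
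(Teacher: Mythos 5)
Your proof is correct and is essentially the paper's own argument: both rest on Proposition~\ref{prop:finite_list} to extract $k<\spark{\mv A}$ and on Lemma~\ref{lem:bad_b} to force $k<L$ whenever $k<N$, with the square case handled by noting that $k=N<\spark{\mv A}$ gives $\rank{\mv A}=N$. The only difference is cosmetic bookkeeping --- the paper splits on $k=N$ versus $k<N$ while you split on $k<L$ versus $k\ge L$ --- and the two case analyses are logically equivalent.
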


The above condition is necessary but not sufficient for the list size to
be bounded by $L$ for all $\mv{b \neq 0}$. We are only able to derive necessary
and sufficient conditions when $L \in \{1,2\}$.

\begin{prop}[Necessary and sufficient condition for the list size to be $1$]
\label{prop:list-1}
Given an $m\times N$ matrix $\mv A$ and an integer $k \in [N]$; 
then $s(\mv A, \mv b, k)=1$ for all $\mv b \neq \mv 0$ if and only if
\begin{equation}\label{eq:unique-sol}
   k = N =\rank{\mv A}.
\end{equation}
(In particular, since $m\leq N$ it is implicit that $m=N$ too.)
\end{prop}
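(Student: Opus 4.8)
The plan is to establish the two directions of the equivalence separately, leaning heavily on the necessary condition already proved in Proposition~\ref{prop:nec} and on the geometry recorded in Lemma~\ref{lem:bad_b}. In fact almost all of the substantive work is already done, so the proof should be short.

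For sufficiency, I would assume $k = N = \rank{\mv A}$. The first step is to observe that this forces $\mv A$ to be square and nonsingular: since the rank of an $m\times N$ matrix is at most $m \le N$, the equality $\rank{\mv A} = N$ yields $m = N$, so $\mv A$ is an invertible $N\times N$ matrix. Because $k = N$, the sparsity constraint $\|\mv x\|_0 \le k$ is vacuous, and \sparse reduces to minimizing $\|\mv A\mv x - \mv b\|_2$ over all of $\R^N$. As $\mv A$ is invertible, the unique minimizer is $\mv x = \mv A^{-1}\mv b$, attaining error $0$, and hence $s(\mv A, \mv b, k) = 1$ for every $\mv b$.

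For necessity, I would assume $s(\mv A, \mv b, k) = 1$ for all $\mv b \neq \mv 0$. In particular $s(\mv A, \mv b, k) \le 1$ for all such $\mv b$, so Proposition~\ref{prop:nec} with $L = 1$ applies and gives that either $k = N = \rank{\mv A}$, or $k < \min\{1, \spark{\mv A}\}$. Since $k \in [N]$ we have $k \ge 1$, so the alternative $k < 1$ is impossible, leaving $k = N = \rank{\mv A}$, as claimed.

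The only point requiring a little care---the closest thing to an obstacle---is confirming that the minimum defining $s(\mv A, \mv b, k)$ is genuinely attained, so that one always has $s \ge 1$ and the bound $s \le 1$ can be upgraded to $s = 1$ in the sufficiency direction. This is routine: the feasible set is a finite union of subspaces, one for each support of size at most $k$, and on each such subspace the objective is a convex quadratic whose minimum is the squared distance from $\mv b$ to the corresponding column span, which is attained; hence the global minimum over the finite union is attained as well. With this remark in place, both implications close immediately.
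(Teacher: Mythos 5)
Your proof is correct and follows essentially the same route as the paper: the necessity direction invokes Proposition~\ref{prop:nec} with $L=1$ and rules out the alternative $k<\min\{1,\spark{\mv A}\}$ using $k\geq 1$, exactly as the paper does. Your sufficiency argument (invertibility of $\mv A$ forcing the unique minimizer $\mv A^{-1}\mv b$) and the remark on attainment of the minimum simply spell out what the paper dismisses as ``straightforward.''
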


\begin{prop}[Necessary and sufficient condition for the list size to be $\leq 2$]
\label{prop:list-2}
Given an $m\times N$ matrix $\mv A$ and an integer $k \in [N]$;
then $s(\mv A, \mv b, k)\leq 2$ for all $\mv b \neq \mv 0$ if and only if
	 \begin{equation}\label{eq:l<=2}
	 \left[k<\spark{\mv A} \wand {N \choose k}\leq 2\right]\wor
	 \left[k=1 \wand  \rank{\mv A}=2 \wand \spark{\mv A}=3\right]
	 \end{equation}
\end{prop}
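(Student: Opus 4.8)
The plan is to prove both directions by reducing to the structural results already established and then handling the delicate $k=1$ regime through a geometric analysis of which atoms are \emph{closest} to the target. For the easy direction ($\Leftarrow$): if $k<\spark{\mv A}$ and $\binom{N}{k}\le 2$, then Proposition~\ref{prop:suf} with $L=2$ gives $s(\mv A,\mv b,k)\le 2$ immediately, so the only work is the second clause $k=1 \wand \rank{\mv A}=2 \wand \spark{\mv A}=3$. Here I observe that for $k=1$ the optimal scalar for atom $\mv A_i$ is $\inner{\mv A_i,\mv b}$, so $s(\mv A,\mv b,1)$ counts the atoms maximizing $\inabs{\inner{\mv A_i,\mv b}}$, equivalently the lines $\spn\{\mv A_i\}$ closest to $\mv b$; and for $\mv b\neq\mv 0$ each such maximizer has nonzero coefficient, hence yields a distinct optimal solution. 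Since $\rank{\mv A}=2$ forces $m=2$ and $\spark{\mv A}=3$ means no two atoms are parallel (in particular none antipodal), I would parametrize unit vectors in $\R^2$ by angle and note that the atoms achieving the maximum of $\inabs{\inner{\mv A_i,\mv b}}$ occupy two antipodal pairs of angular slots about the direction of $\mv b$; the no-antipodal-pair hypothesis permits at most one atom per such pair, giving at most two maximizers, i.e. $s\le 2$.

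For the other direction ($\Rightarrow$), assume $s(\mv A,\mv b,k)\le 2$ for every $\mv b\neq\mv 0$. Proposition~\ref{prop:nec} with $L=2$ forces either $k=N=\rank{\mv A}$ or $k<\min\{2,\spark{\mv A}\}$; since $\spark{\mv A}\ge 2$ always, the latter means $k=1$. If $k=N=\rank{\mv A}$, then $\binom{N}{k}=1\le 2$ and the matrix is square nonsingular, so $k=N<\spark{\mv A}$ and the first clause holds. For $k=1$ with $N\le 2$ we have $\binom{N}{1}=N\le 2$, so again the first clause holds. The crux is therefore $k=1$ with $N\ge 3$, where I must deduce $m=2$ and $\spark{\mv A}=3$.

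The main obstacle is to exhibit, whenever $m\ge 3$ or ($m=2$ and $\spark{\mv A}=2$), a single vector $\mv b$ realizing $s(\mv A,\mv b,1)\ge 3$ while controlling \emph{all} remaining atoms — the recurring difficulty that some unselected atom might be strictly closer to $\mv b$. To handle $m\ge 3$ cleanly I would avoid any ad hoc construction and instead note that $h(\mv b):=\max_i \inabs{\inner{\mv A_i,\mv b}}$ is a norm on $\R^m$ (a seminorm that is definite because the atoms span $\R^m$), whose unit ball $B_h=\{\mv b : \inabs{\inner{\mv A_i,\mv b}}\le 1 \text{ for all } i\}$ is a centrally symmetric polytope. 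Let $\mv v^\ast$ be a point of $B_h$ of maximum Euclidean norm; since $\|\cdot\|_2^2$ is convex and $B_h$ is a compact polytope, $\mv v^\ast$ can be taken to be a vertex, at which at least $m$ linearly independent atom-constraints are active. Hence $\mv b=\mv v^\ast/\|\mv v^\ast\|_2$ has at least $m\ge 3$ atoms attaining the maximum of $\inabs{\inner{\mv A_i,\mv b}}$, so $s(\mv A,\mv b,1)\ge 3$, ruling out $m\ge 3$. The case $m=1$ is immediate, since then every atom has $\inabs{\inner{\mv A_i,\mv b}}=\inabs{\mv b}$ and all $N\ge 3$ tie for every $\mv b$.

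Finally, with $m=2$ we have $\spark{\mv A}\in\{2,3\}$, because any three vectors in $\R^2$ are dependent. To exclude $\spark{\mv A}=2$ I would exploit the antipodal pair it furnishes, say with common line $\ell$: working on the circle of line-directions $\R/\pi\Z$, the line closest to $\mv b$ is the nearest site of the resulting one-dimensional Voronoi diagram (angular distance is monotone in Euclidean distance-to-line), and there are at least two distinct lines since the atoms span $\R^2$. Choosing $\mv b$ at a Voronoi boundary incident to $\ell$ makes $\ell$ tie with a neighboring line while strictly dominating all others; the two antipodal atoms on $\ell$ together with an atom on the neighbor give $s(\mv A,\mv b,1)\ge 3$. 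Thus $\spark{\mv A}=3$, establishing the second clause and completing the proof.
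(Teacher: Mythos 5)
Your proof is correct, and its skeleton matches the paper's: specialize Proposition~\ref{prop:suf} and Proposition~\ref{prop:nec} to $L=2$, note that the resulting sufficient and necessary conditions coincide except when $k=1$ and $N>2$, and then settle that remaining case by the characterization $\rank{\mv A}=2 \wand \spark{\mv A}=3$. The difference lies entirely in that last step. The paper \emph{asserts}, without argument, that for $k=1$ and $N>2$ the list size is at most $2$ for every $\mv b\neq\mv 0$ if and only if every two columns are independent and every three are dependent. You actually prove both halves of this equivalence: sufficiency via the angular-slot count in the plane (once $\spark{\mv A}=3$ forbids parallel atoms, at most one atom can sit in each of the two antipodal pairs of extremal directions), and necessity via two arguments that appear nowhere in the paper --- the vertex of the centrally symmetric polytope $B_h=\{\mv b: \inabs{\inner{\mv A_i,\mv b}}\le 1 \ \forall i\}$ forces at least $m$ tied atoms when $m\ge 3$, and the one-dimensional Voronoi boundary on the circle of line directions produces a three-way tie whenever a parallel pair exists. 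Both arguments are sound: the polytope is bounded because the atoms span $\R^m$, each atom contributes at most one active constraint at a vertex, and the tied atoms have coefficients of absolute value $1/\|\mv v^\ast\|_2\neq 0$, so they give distinct optimal solutions; likewise the Voronoi midpoint lies at angular distance strictly less than $\pi/2$ from the tying lines, so those coefficients are nonzero too. One small simplification: you do not need $\mv v^\ast$ to maximize the Euclidean norm --- \emph{any} vertex of $B_h$ is on the boundary, hence has $h(\mv v^\ast)=1$, and its $\ge m$ active atoms are automatically the maximizers. In short, your proposal follows the paper's route but supplies complete proofs of precisely the claims the paper leaves unproven.
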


Next, we show that Lemma~\ref{lem:bad_b} is tight.
 
\begin{prop}[Minimum over all dictionaries of the worst-case list size is $k+1$]
\label{prop:tight-k+1}
Given $1 \leq k < m \leq N$,
\[
\min_{\mv A \in \R^{m \times N}} \max_{\mv b\neq 0} s(\mv A, \mv b, k) = k + 1.
\]
\end{prop}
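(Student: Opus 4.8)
The plan is to prove the equality by establishing the two inequalities separately. The lower bound $\min_{\mv A} \max_{\mv b} s(\mv A, \mv b, k) \geq k+1$ follows immediately from Lemma~\ref{lem:bad_b}, which guarantees that for \emph{every} dictionary $\mv A$ (with $k < N$) there is some $\mv b \neq 0$ with $s(\mv A, \mv b, k) > k$, i.e.\ at least $k+1$. So the real content is the matching upper bound: I must exhibit a single dictionary $\mv A \in \R^{m\times N}$ for which $\max_{\mv b \neq 0} s(\mv A, \mv b, k) \leq k+1$.

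For the upper bound, the natural choice is a dictionary in \emph{general position}, meaning every $m$ columns are linearly independent (equivalently $\spark{\mv A} = m+1$, the largest possible). Since $k < m$, this forces every set of $\leq k$ columns to be independent, so each $k$-subset $S$ spans a genuine $k$-dimensional subspace and the least-squares problem $\min_{\mv c}\|\mv A_S \mv c - \mv b\|_2$ has a unique minimizer; hence every optimal solution in the list corresponds to a distinct $k$-subset $S$ achieving the globally minimal projection error. The geometric picture from the paper's ``Our Techniques'' paragraph applies: the columns divide space into Voronoi-type cells according to which $k$-dimensional coordinate subspace $\spn(S)$ is closest, and $s(\mv A, \mv b, k)$ counts the subspaces simultaneously nearest to $\mv b$. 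The key step is to bound how many $k$-dimensional subspaces (spanned by $k$-subsets of columns in general position) can be equidistant from a single point $\mv b$.

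The crux is therefore the following counting/geometry claim: \emph{if $\mv A$ is in general position and $k < m$, then no nonzero $\mv b$ can be equidistant-and-optimal to more than $k+1$ of these subspaces.} I would argue this by looking at the residual: let $r = \min_S \|\mv b - P_{\spn(S)}\mv b\|_2$ be the optimal error, and suppose subspaces $\spn(S_1),\dots,\spn(S_t)$ all achieve it. Each projection $P_{\spn(S_i)}\mv b$ lies on the sphere of radius $\sqrt{\|\mv b\|_2^2 - r^2}$ centered at $0$ inside the ball, and $\mv b$ itself lies at distance $r$ from each of these $k$-flats through the origin. The plan is to show that the set of points at distance exactly $r$ from a $k$-dimensional subspace through the origin is a ``cylinder'' whose relevant structure, combined with general position (which prevents degenerate coincidences of the subspaces), caps the multiplicity at $k+1$ — matching the tightness construction implicit in the third case of the proof of Lemma~\ref{lem:bad_b}, where $k+1$ columns in general position form a simplex whose inscribed-sphere center is equidistant from the $k+1$ facet-subspaces.

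The hard part will be the dimension-counting argument that rules out $k+2$ or more simultaneously-optimal subspaces for a generic general-position dictionary: I must verify that the locus of points equidistant from too many distinct $k$-flats is empty (or at least contains no point achieving the global minimum) when the columns are generic. I expect to handle this either by a perturbation/genericity argument — showing the bad configurations are nongeneric, so a random $\mv A$ in general position works — or by directly analyzing the affine-dimension of the intersection of the equidistance conditions, each of which is one polynomial constraint, so $k+1$ subspaces already pin down $\mv b$ up to the expected degrees of freedom and a further subspace generically over-determines the system. Establishing this cleanly, rather than the essentially trivial lower bound, is where the technical effort concentrates.
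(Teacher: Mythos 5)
Your lower bound is exactly the paper's: cite Lemma~\ref{lem:bad_b} and note $k<N$. The gap is in your upper bound, and it is fatal to the approach as stated: a dictionary in general position in $\R^m$ does \emph{not} have worst-case list size $k+1$ when $m>k+1$. Concretely, take $k=1$, $m=N=3$, and $\mv A$ any three linearly independent unit vectors (spark $=4$, so fully general position); the set of points equidistant from the lines $\mspn{\mv A_1},\mspn{\mv A_2}$ is a union of two planes, likewise for $\mspn{\mv A_1},\mspn{\mv A_3}$, and these intersect in a line through the origin, every nonzero point of which has three optimal $1$-sparse solutions. For the identity matrix in $\R^m$ the all-ones vector even yields $\binom{m}{k}$ optimal supports. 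No genericity can rescue this: your own dimension count, done correctly, shows why. Equidistance to $j$ subspaces imposes $j-1$ polynomial conditions on $\mv b$, which lives on an $(m-1)$-dimensional sphere, so the system is over-determined only when $j>m$, not when $j>k+1$; for $m\geq k+2$ there generically \emph{do} exist points equidistant to $k+2$ (indeed up to roughly $m$) of the subspaces, and a deepest-hole/Voronoi-vertex argument shows some of them are globally optimal. So your key claim, that ``$k+1$ subspaces already pin down $\mv b$ \ldots\ and a further subspace generically over-determines the system,'' is false in $\R^m$; it is true only when the effective ambient dimension is $k+1$.

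That dimensional collapse is precisely the paper's construction, and it is the idea your proposal is missing. The paper builds a $(k+1)\times N$ matrix with $\spark{\mv A}>k+1$ (so every $k+1$ columns in $\R^{k+1}$ are independent and distinct $k$-subsets span distinct hyperplanes), adding columns one at a time and applying successive infinitesimal perturbations so that no unit vector is simultaneously equidistant-and-optimal to more than $k+1$ of the $k$-column spans; it then pads with $m-k-1$ zero rows to reach $\R^{m\times N}$. The zero-padding is harmless (for $\mv b$ with no component in the $(k+1)$-dimensional slice the unique optimum is $\mv x=\mv 0$), and inside the slice the sphere is only $k$-dimensional, so $k$ equidistance conditions already cut the candidate set down to finitely many points and one more condition can be perturbed away --- this is where the over-determination you wanted actually becomes available. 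Your argument becomes correct if you replace ``general position in $\R^m$'' by ``general position within a fixed $(k+1)$-dimensional subspace of $\R^m$,'' but then the technical work (carrying out the perturbation so that optimality, not just equidistance, is controlled for all $\binom{N}{k}$ subsets simultaneously) is exactly the content of the paper's proof, not a detail one can wave at.
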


The previous propositions focused on the worst-case performance for all
measurement vectors $\mv{b}$. It is natural to answer similar questions
regarding the average case. 
We do so for list size being finite and one. The bounds are better in the random case similar to what is known in the list decoding setting (see e.g.~\cite{RU10}).

\begin{prop}[Necessary and sufficient condition for the list size to be finite with probability of 1 assuming that $\mv b$ is chosen uniformly from the unit sphere]
\label{prop:random-b-rank}

	Given an $m\times N$ matrix $\mv A$ and an integer $1 \leq k \leq N$,
    \[ 
		\pr_{\mv b: \|\mv b\|=1} \Big( s(\mv A, \mv b, k) < \infty \Big) =1
	\]
	if and only if 
	\[
		k\leq \rank{\mv A}.
	\]
\end{prop}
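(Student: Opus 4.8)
The plan is to first characterize, for a fixed $\mv b \neq \mv 0$, exactly when $s(\mv A, \mv b, k) = \infty$, and then apply this characterization in each direction, with a measure-zero argument carrying the harder implication. For a subset $S \subseteq [N]$ with $|S| \le k$ write $V_S = \mspn{\mv A_S}$, and let $\epsilon_{\mathrm{opt}}(\mv b) = \min_{|S|\le k}\dist(\mv b, V_S)$ be the optimal error. I would establish the dichotomy: $s(\mv A, \mv b, k) = \infty$ if and only if some $S$ with $|S| \le k$ has the columns of $\mv A_S$ linearly \emph{dependent} and satisfies $\dist(\mv b, V_S) = \epsilon_{\mathrm{opt}}(\mv b)$. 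The forward direction uses pigeonhole: infinitely many optimal $\mv x$ must share a common support $S$, and optimality forces $\mv A_S \mv x_S$ to equal the projection of $\mv b$ onto $V_S$; infinitely many such $\mv x_S$ exist precisely when $\mv A_S$ is rank-deficient. The reverse direction holds because a rank-deficient $\mv A_S$ attaining $\epsilon_{\mathrm{opt}}(\mv b)$ produces a positive-dimensional affine family of optimal solutions supported on $S$.

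For the ``only if'' direction I would argue the contrapositive: assuming $k > \rank{\mv A} =: r$, I show $s(\mv A, \mv b, k) = \infty$ for every $\mv b \neq \mv 0$, so the probability is $0$. Since $k \le N$ forces $r < N$, there are at least $r+1$ columns; taking a basis of the column space ($r$ columns) together with one extra column gives a set $S$ of size $r+1 \le k$ that is linearly dependent and whose span is the entire column space. Every candidate support has span contained in the column space, so $S$ automatically attains $\epsilon_{\mathrm{opt}}(\mv b)$ while being rank-deficient; by the dichotomy the list is infinite.

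For the ``if'' direction, assume $k \le r$ and show that the ``bad'' set of $\mv b$ (those for which some dependent $S$ of size $\le k$ attains the optimum) is a measure-zero subset of the unit sphere. Fix such a dependent $S$; then $\dim V_S = \rank{\mv A_S} \le k - 1 \le r - 1 < r$, so $V_S$ is a proper subspace of the column space and there is a column $\mv A_j \notin V_S$ (necessarily $j \notin S$). Augmenting a basis of $V_S$ drawn from $S$ by the index $j$ yields a set $S'$ of size $\le k$ with $V_S \subsetneq V_{S'}$ and $\dim V_{S'} = \dim V_S + 1$. Writing $w_S$ for a unit vector spanning $V_{S'} \cap V_S^{\perp}$, one has $\dist(\mv b, V_{S'})^2 = \dist(\mv b, V_S)^2 - \langle \mv b, w_S\rangle^2$, so $S'$ strictly beats $S$ unless $\langle \mv b, w_S \rangle = 0$; hence the $\mv b$ on which $S$ attains the optimum lie in the hyperplane $\{\mv b : \mv b \perp w_S\}$, whose intersection with the sphere has measure zero. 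The key (and only delicate) point is uniformity: since there are only finitely many subsets $S$ of size $\le k$, the union of these finitely many equatorial slices is still measure zero, so almost every $\mv b$ avoids all of them and, by the dichotomy, yields a finite list. I expect the careful bookkeeping in the dichotomy — matching the count of optimal $\mv x$'s to supports and kernels of $\mv A_S$ — to demand the most attention, whereas the geometric measure-zero reduction is the conceptual heart.
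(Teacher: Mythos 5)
Your proposal is correct and follows essentially the same route as the paper's proof: the same characterization of infinite list size (a linearly dependent support set attaining the optimal error, i.e.\ lying in its Voronoi cell), the same covering argument when $k > \rank{\mv A}$, and the same measure-zero argument when $k \leq \rank{\mv A}$ via properly extending the span of a dependent set. Your version is merely more explicit than the paper's — in particular, the hyperplane $\{\mv b : \langle \mv b, w_S\rangle = 0\}$ is a concrete rendering of the paper's claim that the Voronoi cell of a dependent set has strictly smaller dimension — but the underlying ideas are identical.
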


\begin{prop}[Necessary and sufficient condition for the list size to be $1$ with probability of 1 assuming that $\mv b$ is chosen uniformly from the unit sphere]
\label{prop:random-b}
	 Given an $m\times N$ matrix $\mv A$ and an integer $1 \leq k \leq N$,
	 \[ 
	  	\pr_{\mv b: \|\mv b\|=1} \Big( s(\mv A, \mv b, k) = 1\Big) = 1
	\]
	if and only if 
	\[
		k<\spark{\mv A}-1.
	\]
\end{prop}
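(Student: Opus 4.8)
The plan is to argue both directions through the geometric picture the paper emphasizes: each support $S\subseteq[N]$ with $|S|\le k$ contributes the subspace $V_S:=\mspn{\mv A_S}$, and the best $k$-term error for $\mv b$ equals $\min_{|S|\le k}\dist(\mv b,V_S)$. Write $\sigma=\spark{\mv A}$ and note that the hypothesis $k<\sigma-1$ is exactly the statement that every $k+1$ columns of $\mv A$ are linearly independent. Two immediate consequences I would record first: (i) for distinct $k$-subsets $S\ne T$ we have $V_S\ne V_T$, since otherwise $S\cup T$ would contain $k+1$ dependent columns; and (ii) every $\le k$ columns are independent, so each support yields a \emph{unique} least-squares coefficient vector $\mv x^{(S)}$.

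For sufficiency ($k<\sigma-1\Rightarrow \pr(s=1)=1$), I would exhibit a measure-zero exceptional set off which $s=1$. This set is the union, over the finitely many pairs of distinct $k$-subsets $S,T$, of the quadric $\{\mv b:\mv b^\top(P_S-P_T)\mv b=0\}$ (with $P_S$ the orthogonal projector onto $V_S$), together with the hyperplanes $\{\mv b\perp(V_S\ominus V_{S\setminus\{i\}})\}$ over all $k$-subsets $S$ and $i\in S$. Since $V_S\ne V_T$ forces $P_S\ne P_T$, each quadric has a nonzero symmetric matrix and is therefore a proper algebraic subset, so the whole exceptional set is null on the unit sphere. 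Off it: distance is monotone under inclusion, so the global optimum is attained at a size-$k$ support; it is attained at a \emph{unique} such $S^*$ (the quadric exclusion rules out ties); and $\mv x^{(S^*)}$ is fully supported (the hyperplane exclusion). A short chase then shows any optimal $\mv x$ has support forced to equal $S^*$ and hence equals $\mv x^{(S^*)}$, giving $s=1$.

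For necessity I would prove the contrapositive $k\ge\sigma-1\Rightarrow\pr(s=1)<1$, splitting on $k$ versus $\rank{\mv A}=m$. If $k>m$, Proposition~\ref{prop:random-b-rank} already gives $\pr(s<\infty)<1$, hence $\pr(s=1)<1$. The substantive case is $\sigma-1\le k\le m$ with $k<N$ (the case $k=N$ forces $m=N$, the trivial invertible dictionary, which is handled separately by the spark convention). Here $\sigma\le k+1$ yields a minimally dependent set $D$ of size $\sigma$, which I extend by rank-increasing columns to a set $C$ with $|C|=k+1$ and $\rank{\mv A_C}=k$ (possible since $\rank{\mv A}=m\ge k$ and $k<N$). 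The single dependency of $C$ is supported on $D$, so the $|D|=\sigma\ge2$ subsets $C\setminus\{j\}$, $j\in D$, are exactly the $k$-subsets of $C$ that are independent and span the common $k$-dimensional space $V_C:=\mspn{\mv A_C}$. I then choose a generic unit $\mv b_0\in V_C$ avoiding the finitely many proper subspaces $V_T\cap V_C$ with $V_T\ne V_C$; by continuity $V_C$ is the \emph{strictly} best subspace on an open, hence positive-measure, neighborhood of $\mv b_0$, while the $\ge2$ supports $C\setminus\{j\}$ produce distinct fully-supported optimal $\mv x$'s, so $s\ge2$ there.

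The main obstacle is this necessity direction. The existence of multiple representations is immediate once one has $k+1$ columns of rank exactly $k$; the real work is showing they occur on a set of \emph{positive measure} rather than only on the null set $V_C$ itself. The open-neighborhood argument around a generic $\mv b_0\in V_C$ is precisely what converts the ``$\mv b$ lies on the boundary of many Voronoi cells'' intuition into a positive-probability statement, and it is also the place where one must verify carefully that the several competing optimal supports yield genuinely distinct coefficient vectors (which fails only on a further null set inside $V_C$).
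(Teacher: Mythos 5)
Your proposal is correct and follows essentially the same route as the paper's proof: both directions rest on the Voronoi-type tessellation of $\R^m$ by the subspaces spanned by $k$-subsets of columns, with measure-zero cell boundaries (your quadrics $\mv b^\top(P_S-P_T)\mv b=0$) giving sufficiency, and a shared full-dimensional cell arising from two $k$-subsets with identical spans giving necessity. Your write-up is simply a more careful version of the paper's sketch---making explicit the extension of a minimal dependent set to $C$, the distinctness of the competing coefficient vectors off a further null set, and the degenerate cases $k>\rank{\mv A}$ and $k=N$, all of which the paper glosses over---but the underlying argument is the same.
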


\section{Examples}
\label{sec:examples}

We present a simple family of examples that shows that the dependence of $\eps$ on $\mu$ and $k$ in Proposition~\ref{prop:av-p2-k=1} and Proposition~\ref{prop:av-p2-k} (with $\gamma=0$) are in the right ballpark. In particular, we will show that
\begin{lmm}
\label{lem:tight-eg}
For  every $k\ge 1$ and $\eps>0$ and large enough $m$, there exists a matrix $\mv A^*$ with $m$ rows such that
$\mu= \frac{1}{1+\eps^2\cdot k}$,
and
$s(\mv A^*, k,\eps,0)\ge \frac{m-1}{k}$.
\end{lmm}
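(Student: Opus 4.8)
The plan is to exhibit an explicit ``shared-spike'' dictionary that places $(m-1)/k$ disjoint, equally good $k$-sparse solutions around a single target vector. Assume for convenience that $m-1$ is divisible by $k$ (which I may arrange for large enough $m$), and write $G=(m-1)/k$. I would index the coordinates of $\R^m$ by $\{0\}\cup([G]\times[k])$, so that there is one distinguished coordinate direction $\mv e_0$ together with $G$ disjoint blocks of $k$ private coordinates each. Setting $\mu=\tfrac1{1+\eps^2 k}$, $\alpha=\sqrt\mu$ and $\beta=\sqrt{1-\mu}$, for each group $g\in[G]$ and each $j\in[k]$ I define the atom $\mv a_{g,j}=\alpha\,\mv e_0+\beta\,\mv e_{(g,j)}$, and let $\mv A^*$ be the matrix whose columns are these $N=Gk=m-1$ atoms.

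First I would verify the coherence. Each atom is a unit vector since $\alpha^2+\beta^2=\mu+(1-\mu)=1$. For any two distinct atoms $\mv a_{g,j}$ and $\mv a_{g',j'}$, the only coordinate on which both are nonzero is $\mv e_0$, so $\inner{\mv a_{g,j},\mv a_{g',j'}}=\alpha^2=\mu$ whether or not $g=g'$. Hence every off-diagonal inner product equals $\mu$, and $\mu(\mv A^*)=\mu=\tfrac1{1+\eps^2 k}$, exactly as required.

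Next I would take $\mv b=\mv e_0$ (a unit vector) and bound the approximation error of each group. Fix a group $g$ with support $\Lambda_g$ and consider $\min_{\mv c}\|\mv A^*_{\Lambda_g}\mv c-\mv b\|_2^2=(\alpha(\textstyle\sum_j c_j)-1)^2+\beta^2\sum_j c_j^2$. This objective is convex and symmetric under permutations of the $k$ coordinates of $\mv c$, so its minimizer has all coefficients equal to a common value $c$, reducing the problem to minimizing $(\alpha k c-1)^2+\beta^2 k c^2$ over $c\in\R$. A one-line computation gives the optimum $c=\alpha/(\alpha^2(k-1)+1)\neq 0$ with minimum squared error $\tfrac{\beta^2}{\alpha^2(k-1)+1}$. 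Substituting $\alpha^2=\mu=\tfrac1{1+\eps^2 k}$, $\beta^2=\tfrac{\eps^2 k}{1+\eps^2 k}$, and simplifying $\alpha^2(k-1)+1=\tfrac{k(1+\eps^2)}{1+\eps^2 k}$, the minimum squared error collapses to $\tfrac{\eps^2}{1+\eps^2}\le\eps^2$. Thus every group $\Lambda_g$ supports a $k$-sparse vector with $\|\mv A^*\mv x-\mv b\|_2\le\eps$.

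Finally I would count. The groups $\Lambda_1,\dots,\Lambda_G$ are pairwise disjoint (they live on disjoint blocks of private coordinates), and since the optimal scalar $c$ above is nonzero, each $\Lambda_g$ is genuinely realized as the full support of a valid solution, making the $G$ supports distinct. Because no atom is shared between groups, these form a collection of $G=(m-1)/k$ solutions in which each atom appears in at most one support, giving $s(\mv A^*,k,\eps,0)\ge (m-1)/k$. I expect the only real work to be the projection computation in the previous paragraph; the convexity-and-symmetry reduction to a single scalar $c$ is what makes it routine, and the fact that I only need a lower bound on the list size means I never have to rule out cleverer cross-group solutions.
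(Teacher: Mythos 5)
Your construction is essentially the paper's own: the paper uses exactly the same atoms $\sqrt{\mu}\,\mv e_1+\sqrt{1-\mu}\,\mv e_i$ with $\mu=\frac{1}{1+\eps^2 k}$, the same target $\mv b=\mv e_1$, and the same count of $(m-1)/k$ disjoint $k$-atom groups; where you solve the least-squares problem exactly (getting error $\eps^2/(1+\eps^2)$), the paper simply plugs in the coefficient $\sqrt{1+\eps^2 k}/k$ on each atom to get error exactly $\eps^2$, and either bound suffices. The one thing you omit is that your matrix has $N=m-1<m$ columns spanning only an $(m-1)$-dimensional subspace, so it is not a redundant dictionary under the paper's definition ($N\ge m$ with columns spanning $\R^m$); the paper repairs this by appending one flat atom $\mv A_1^*=\frac{1}{\sqrt{m-1}}\sum_{j=2}^m\mv e_j$, whose inner product with every other atom is $\frac{\eps\sqrt{k}}{\sqrt{(m-1)(1+\eps^2 k)}}\to 0$ as $m\to\infty$, so for large enough $m$ the coherence remains exactly $\mu$ and the rest of your argument goes through unchanged.
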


We present some implications. Note that for $k=1$, we have $\mu \ge 1-\eps^2$, which shows that the bound of $\eps\le \sqrt{1-\mu}$ in Proposition~\ref{prop:av-p2-k=1} is necessary. For $k>1$, (and say $\eps=1/\sqrt{2}$)
the above implies that if $\mu >\frac{1}{2k}$, then we cannot hope to have a list size bounded only in terms of $\eps$ and $k$. 
This implies that the bound of $\mu \le \frac{2}{k}$ in Proposition~\ref{prop:av-p2-k} (with $\gamma=0$) is necessary (up to a factor of $4$).

In Appendix~\ref{app:kerdock} we show that the various bounds on $\mu$ in our upper bounds are necessary. This involves using the Kerdock code and it subsumes the spike and sines example.  For pedagogical reasons,
we present the spikes and sines example in Appendix~\ref{app:spikes-and-sines}. 

\section{Open Questions}
\label{sec:concl}

We conclude with two major open questions. In this work, we presented bounds on the list size. However, these results are purely combinatorial. It would be extremely useful to present algorithms that can solve the \listapprox and \listsparse problems. To begin with presenting a polynomial time algorithms when the list sizes are bounded by $N^{O(1)}$ for any non-trivial matrix $\mv A$ would be very interesting.

The other tantalizing question left open by our work is to present applications of the new notions of \listapprox and \listsparse to parallel the well documented applications of list decoding in complexity theory.

{

  \bibliographystyle{acm}
  \def\DIR#1{\gdef\@DIR{#1}}

  \DIR{.}
  \bibliography{list-sparse}
}

\appendix

\section{Image Compression Example From Section~\ref{sec:intro}} \label{sec:image_appendix}

The original image, shown in the upper left of Figure~\ref{fig:blobs}, is $m = 256 \times 256$ pixels and we compute four levels of the two-dimensional Haar wavelet packet decomposition which gives us four times as many coefficients as the original number of pixels. The coherence of this redundant dictionary is $1/\sqrt{2}$. The wavelet packet decomposition is arranged in a quad-tree (in our example, a depth four quad-tree) and the total number of coefficients in all of the nodes at a fixed level is $m$. The coefficients in any maximal anti-chain in the tree form an orthonormal basis and we use this feature to construct our three different sparse representations. 
\begin{enumerate}
	\item \textbf{Class 1: Large and random medium coefficients of a fixed basis.} We consider all the wavelet packet coefficients in a particular orthonormal basis, the one formed by the terminal nodes of the depth four quad tree. We threshold the coefficients and retain only the large coefficients which are greater than $10^{-1}$ in absolute value. Then, we include, uniformly at random, half of the medium coefficients which are between $10^{-2}$ and $10^{-1}$ in absolute value. The total sparsity is $0.20$ of the total pixels. 
	\item \textbf{Class 2: Truncated Entropy BestBasis.} We compute the ``Best Basis'' according to the Shannon entropy function (see~\cite{CoifmanWickerhauserBB:1992} for details on the Best Basis algorithm for wavelet packets), truncate the coefficients in this orthonormal basis, retaining the largest 20\% in absolute value. As the basis is selected specifically for its compressive capabilities, there are only $0.11$ fraction of coefficients non-zero for this representation\footnote{We note that this is one of the original heuristic uses of the Best Basis algorithm: compute the Best Basis and then truncate the coefficients.}. 
	\item \textbf{Class 3: Truncated $\ell_1$ BestBasis.} We perform the same computation as in the Class 2 construction but we use the $\ell_1$ norm of the wavelet packet coefficients as the ``entropy'' function; i.e., we choose the orthonormal basis that has minimal $\ell_1$ norm. Then we truncate these coefficients, keeping the top 20\% in absolute value, and, as in the previous construction, only 10\% of the coefficients are actually non-zero. 
\end{enumerate}
In Figure~\ref{fig:blobs} upper right, lower left and right, we show the three different list sparse approximations of the original image. The representations from Class 1 are not as accurate as those from Class 2 and 3. We can see some artifacts in the reconstruction in the upper right of Figure~\ref{fig:blobs}.

Figure~\ref{fig:rep_stats} shows the statistics of the six different list sparse approximations we construct. Representation numbers 1-4 are instances of Class 1 and the last two (representation numbers 5 and 6) are the two different BestBasis constructions, Class 2 and 3, respectively. All of the representations use no more than 20\% of the original coefficients and are each within a relative error of $0.01$ of the original image. In fact, the BestBasis constructions are exact (up to numerical precision) but their sparsity values differ which show that the two classes choose different orthonormal bases and are, indeed, different constructions. They are not, however, all equally efficient to compute. The BestBasis algorithm, which we use for Class 2 and 3, for an image of size $m$ requires time $O(m \log m)$ to compute (on top of the original wavelet packet decomposition) while the first type of sparse approximation, Class 1, does not require any additional computation (beyond thresholding the terminal node coefficients).

\begin{figure}[ht!]
	\begin{center}
		\includegraphics[height=4in,width=5.25in]{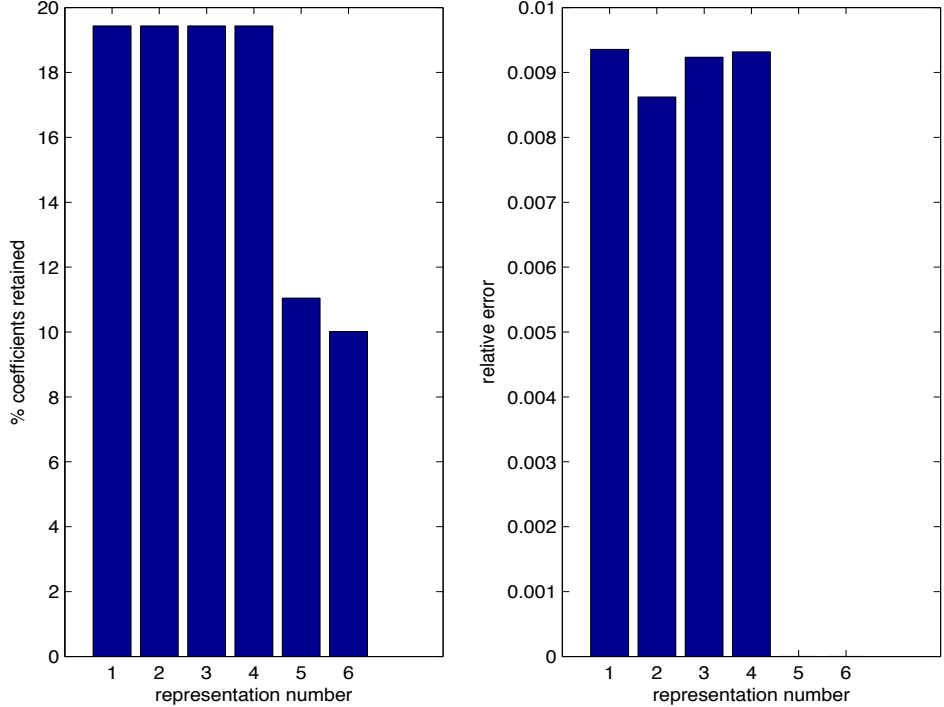}
	\end{center}
	\caption{(Left.) We plot the sparsity of the compressed representations as a function of the representation type. The first four are of the first type and the last two are the BestBasis constructions. (Right.) We plot the relative error of the different representations as a function of the representation types.}
	\label{fig:rep_stats}
\end{figure}

\section{Missing Proofs from Section~\ref{sec:listapprox}}

\subsection{Proof of Proposition~\ref{prop:mu=0-exp-lb}}
\begin{proof}
Let $\mv{A}$ be the $m\times m$ identity matrix. Note that this matrix has $\mu=0$ and that $N=m$.

Now consider the input vector $\vb = (b_1,...,b_m)$ (in standard basis), where

\[b_1^2 = 1- (m-1)\cdot \frac{\eps^2}{m-k},\]

and for every $i>1$,

\[b_i^2 = \frac{\eps^2}{m-k}.\]

Note that in this case for every $\Lambda\subseteq [m]$ with $|\Lambda|=k$ and $1\in \Lambda$, $\min_{\mv{x}\in\R^k}\|\mv{A}_{\Lambda}\mv{x}-\vb\|_2 \le \eps$. Further, every strict subset of $\Lambda$ has error $>\eps$. Note that with the above choice of parameters, $b_1^2 > \eps^2/(N-k)$, so it makes sure that every optimal $\Lambda$ has to have $1$ in it.
\end{proof}

\subsection{List size bound $L(\mv{A},1,\eps,o(L)$)}

\label{sec:list-approx-k-1}

For $k= 1$, each sparse approximation can consist of a single atom in $\mv{A}$ only.
We now will argue the following result:

\begin{prop}
\label{prop:av-p2-k=1}
Assume that the coherence of the dictionary $\mv{A}$ is $\mu$. Then as long as
\[\eps \le \epsbound,\]
we have
\[L(\mv{A},1,\eps,1)\le \frac{4}{1-\eps^2}.\]
\end{prop}


In the rest of this subsection, we will prove Proposition~\ref{prop:av-p2-k=1}.
We begin by making the following simple observation. 


\begin{prop}
\label{prop:av-ver}
If the following is true for every $\Lambda\subseteq [N]$ with $|\Lambda|=L$ and every $\vb\in \R^m$ with $\|\vb\|_2=1$:
\begin{equation}
\label{eq:suff-cond}
\max_{\vb\in\R^m: \|\vb\|_2=1} \sum_{i\in \Lambda} \inabs{\inner{\vA_i,\vb}} < L\sqrt{1-\eps^2},
\end{equation}
then
\[L(\vA,1,\eps,1)\le L-1.\]
\end{prop}
\begin{proof}
Note that $L(\vA,1,\eps,1)\le L-1$ if and only if for every $\Lambda\subseteq [N]$ with $|\Lambda|=L$ and every $\vb\in \R^m$ with $\|\vb\|_2=1$, we have:
\[\max_{i\in\Lambda} \min_{x\in\R} \|\vA_i x-\mv{b}\|_2^2 >\eps^2.\]
Note that the $x$ that minimizes the inner quantity is given by $\inner{\vA_i,\vb}$. Further since $\|\inner{\vA_i,\vb}\cdot\vA_i-\vb\|_2^2=1-\inner{\vA_i,\vb}^2$, the above condition is satisfied if and only if
\[\max_{i\in\Lambda} \left(1-\inner{\vA_i,\vb}^2\right)>\eps^2,\]
which in turn is true if and only if
\[\min_{i\in\Lambda} \inabs{\inner{\vA_i,\vb}} <\sqrt{1-\eps^2}.\]
Since the average is bigger than the minimum, the condition in \eqref{eq:suff-cond} implies the above which completes the proof.
\end{proof}

Thus, to prove Proposition~\ref{prop:av-p2-k=1}, we need to show that with $\eps\le \epsbound$ and $L=\frac{4}{1-\eps^2}+1$, \eqref{eq:suff-cond} is satisfied, which is what we do next.

We begin with a simplification of \eqref{eq:suff-cond}.

\begin{clm}
\label{clm:max-support}
The following is true for any $\Lambda\subseteq [N]$
\[\max_{\vb\in\R^m: \|\vb\|_2=1} \sum_{i\in \Lambda} \inabs{\inner{\vA_i,\vb}} = \max_{\vb\in\mspn{\Lambda}: \|\vb\|_2=1} \sum_{i\in \Lambda} \inabs{\inner{\vA_i,\vb}},\]
where $\mspn{\Lambda}$ is shorthand for the span of the vectors $\{\vA_i\}_{i\in \Lambda}$.
\end{clm}
\begin{proof} The LHS is trivially larger than the RHS so we only need to show that the LHS is no bigger than the RHS. Towards this end, consider an arbitrary $\vb\in\R^m$ with $\|\vb\|_2=1$. Decompose $\vb=\vb_1+\vb_2$, where $\vb_1$ is the projection of $\vb$ onto $\mspn{\Lambda}$ and $\vb_2$ is the remainder. Note that $\vb_2$ is orthogonal to $\vA_i$ for every $i\in\Lambda$. Thus, we have
\[\sum_{i\in \Lambda} \inabs{\inner{\vA_i,\vb}} = \sum_{i\in \Lambda} \inabs{\inner{\vA_i,\vb_1}} \le \sum_{i\in \Lambda} \inabs{\inner{\vA_i,\frac{\vb_1}{\|\vb_1\|_2}}},\]
where the inequality follows from the fact that since $\vb_1$ is a projection onto $\mspn{\Lambda}$, $\|\vb_1\|_2\le 1$. The proof is complete by noting that
since $\frac{\vb_1}{\|\vb_1\|_2}$ is a unit vector and is in $\mspn{\Lambda}$, we have
\[\sum_{i\in \Lambda} \inabs{\inner{\vA_i,\frac{\vb_1}{\|\vb_1\|_2}}} \le \max_{\vb\in\mspn{\Lambda}: \|\vb\|_2=1} \sum_{i\in \Lambda} \inabs{\inner{\vA_i,\vb}}.\]
\end{proof}

Note that by Claim~\ref{clm:max-support} and the discussion above, to prove Proposition~\ref{prop:av-p2-k=1}, we need to show that $\eps\le \epsbound$ and $L=\frac{4}{1-\eps^2}+1$ implies
\begin{equation}
\label{eq:suff-cond-new}
\max_{\vb\in\mspn{\Lambda}: \|\vb\|_2=1} \sum_{i\in \Lambda} \inabs{\inner{\vA_i,\vb}} < L\sqrt{1-\eps^2},
\end{equation}
which we do next. Towards that end, we record the following result:

\begin{lmm}
\label{lem:beta-bounds}
Let $\vb=\sum_{j\in\Lambda} \beta_j\cdot\vA_i$. If 
\begin{equation}
\label{eq:mu-ub}
\mu<\frac{1}{L},
\end{equation}
then the following is true (where $\vbeta=(\beta_j)_{j\in\Lambda}$):
\[\|\vbeta\|_2^2\le \frac{1}{1-\mu L}.\]
\end{lmm}

We will prove the lemma shortly but for now we return to the proof of \eqref{eq:suff-cond-new}. Fix an arbitrary $\vb\in\mspn{\Lambda}$ with $\|\vb\|_2=1$. For now we claim that our choices of $\mu$ and $L$ imply that
\begin{equation}
\label{eq:mu-ub-2}
\mu \le \frac{1}{4(L-1)}.
\end{equation}
Let $\vb=\sum_{j\in\Lambda} \beta_j \vA_j$, where $\vbeta=(\beta_j)_{j\in\Lambda}$. Consider the following relationships:
\begin{align}
\sum_{i\in\Lambda} \inabs{\inner{\vA_i,\vb}} &= \sum_{i\in\Lambda}\inabs{\sum_{j\in\Lambda} \beta_j\inner{\vA_i,\vA_j}} \notag\\
\label{eq:suff-cond-s1}
&\le \sum_{i\in\Lambda}\sum_{j\in\Lambda} \inabs{\beta_j}\cdot\inabs{\inner{\vA_i,\vA_j}}\\
\label{eq:suff-cond-s2}
&\le \sum_{i\in\Lambda}\left(\inabs{\beta_i}+ \mu \sum_{j\in\Lambda\setminus\{i\}} \inabs{\beta_j}\right)\\
&=\|\vbeta\|_1 (1+\mu(L-1)) \notag\\
\label{eq:suff-cond-s3}
&\le \sqrt{L}\cdot \|\vbeta\|_2\cdot (1+\mu L)\\
\label{eq:suff-cond-s4}
&\le \sqrt{L}\cdot \frac{1}{\sqrt{1-\mu L}}\cdot (1+\mu L)\\
\label{eq:suff-cond-s5}
&< 2\sqrt{L}.
\end{align}
In the above, \eqref{eq:suff-cond-s1} follows from the triangle inequality, \eqref{eq:suff-cond-s2} follows from the fact that $\vA$ has coherence $\mu$, \eqref{eq:suff-cond-s3} follows from Cauchy-Schwarz, \eqref{eq:suff-cond-s4} follows from Lemma~\ref{lem:beta-bounds} (and the fact that \eqref{eq:mu-ub-2} implies \eqref{eq:mu-ub} assuming $L>4/3$, which holds for our choice of $L$), and \eqref{eq:suff-cond-s5} follows from \eqref{eq:mu-ub-2} (and the fact that $\mu\leq 1/17$, which will be true assuming that there is an $\epsilon$ such that $\eps\le \epsbound$).

Note that by \eqref{eq:suff-cond-s5}, the required relation \eqref{eq:suff-cond-new} is satisfied if
\[\sqrt{4L} < L\sqrt{1-\eps^2},\]
which is satisfied by our choice of $L=\frac{4}{1-\eps^2}+1$. Next, we argue that $\mu\le \frac{1}{4(L-1)}$, which would imply \eqref{eq:mu-ub-2}. Indeed, we picked $\eps \le \epsbound$, which in turn implies that
\[\mu \le \frac{\left(1-\eps^2\right)}{17}\le \frac{1}{4}\cdot\left(\frac{1-\eps^2}{17/4}\right)\le \frac{1}{4(L-1)},\]
as desired. The proof of Proposition~\ref{prop:av-p2-k=1} is complete except for the proof of Lemma~\ref{lem:beta-bounds}, which we present next.

\begin{proof}[Proof of Lemma~\ref{lem:beta-bounds}]

Consider the following sequence of relations:
{\allowdisplaybreaks
\begin{align}
\|\vb\|_2^2 &= \sum_{j=1}^L \inabs{\beta_j}^2\cdot \|\vA_1\|_2^2 +\sum_{i\neq j\in [L]} \beta_i\beta_j \inner{\vA_i,\vA_j} \notag\\
& \ge \sum_{j=1}^L \inabs{\beta_j}^2\cdot \|\vA_1\|_2^2 -\sum_{i\neq j\in [L]} \inabs{\beta_i}\inabs{\beta_j} \inabs{\inner{\vA_i,\vA_j}}\notag\\
& = \|\vbeta\|_2^2 -\sum_{i\neq j\in [L]} \inabs{\beta_i}\inabs{\beta_j} \inabs{\inner{\vA_i,\vA_j}}\notag\\
\label{eq:beta-norm-s1}
& \ge \|\vbeta\|_2^2 -\mu\sum_{i, j\in [L]} \inabs{\beta_i}\inabs{\beta_j} \\
& = \|\vbeta\|_2^2 -\mu\left(\sum_{i\in [L]} \inabs{\beta_i}\right)^2 \notag\\
\label{eq:beta-norm-s2}
&\ge \|\vbeta\|_2^2 - \mu L\|\vbeta\|_2^2.
\end{align}
}
In the above \eqref{eq:beta-norm-s1} follows from the fact that $\vA$ has coherence $\mu$. \eqref{eq:beta-norm-s2} follows from Cauchy-Schwarz.
\eqref{eq:beta-norm-s2} along with the fact that $\|\vb\|_2=1$ proves the bound on $\|\vbeta\|_2^2$. 

\end{proof}

\subsection{Proof of Proposition~\ref{prop:av-p2-k}}

The proof of Proposition~\ref{prop:av-p2-k} follows the structure of the proof of Proposition~\ref{prop:av-p2-k=1} so we will omit details for arguments that are similar to the ones we made earlier.

We begin with our choice of $L$:

\begin{equation}
\label{eq:k-L-def}
L= \left\lceil \left(\frac{11}{1-\eps^2}\right)^{1/(1-\gamma)}\right\rceil.
\end{equation}
We claim that the above along with \eqref{eq:k-mu-lb-org} implies that:
\begin{equation}
\label{eq:k-mu-lb}
\mu \le \frac{1}{2kL}.
\end{equation}
Indeed \eqref{eq:k-mu-lb-org} implies that
\[1-\eps^2 \ge 12(2k\mu)^{1-\gamma},\]
which along with the definition of $L$ in \eqref{eq:k-L-def} implies \eqref{eq:k-mu-lb}.

Let $S=\{\Lambda_1,\dots,\Lambda_L\}$ be an arbitrary collection of $L$ subsets of $[N]$ of size exactly $k$ such that every column in $[N]$ occurs in at most $L^{\gamma}$ of the sets in $S$. To prove the claimed result, it suffices to show that for every $\mv{b}\in\R^m$, we have
\[\max_{\Lambda\in S} \min_{\mv{x}\in \R^k} \|\mv{A}_{\Lambda}\mv{x}-\mv{b}\|_2^2 >\eps^2.\]
The above condition is the same as showing
\[\min_{\Lambda\in S} \|\mv{b}_{\Lambda}\|_2^2 < 1-\eps^2,\]
where $\mv{b}_{\Lambda}$ is the projection of $\mv{b}$ to $\mspn{\Lambda}$. A sufficient condition for the above to be satisfied is
\begin{equation}
\label{eq:k-suff}
\sum_{\Lambda\in S} \|\mv{b}_{\Lambda}\|_2^2 < L(1-\eps^2).
\end{equation}
For the rest of the proof, we will show that if \eqref{eq:k-mu-lb} and \eqref{eq:k-L-def} are true then the above condition is satisfied.

For notational convenience, define
\[\Lambda'=\cup_{\Lambda\in S} \Lambda.\]
Given this definition, we can assume WLOG that $\mv{b}$ is in $\mspn{\Lambda'}$. (This is the generalization of Claim~\ref{clm:max-support} to general $k$.) In other words, WLOG assume that 
\begin{equation}
\label{eq:b-decompose}
\vb=\sum_{i\in\Lambda'} \beta_i\cdot \mv{A}_i.
\end{equation}
Since \eqref{eq:k-mu-lb} satisfies the condition on $\mu$ in Lemma~\ref{lem:beta-bounds}, we have that
\[\|\vbeta\|_2^2\le \frac{1}{1-\mu kL},\]
since $|\Lambda'|\le kL$.

In our notation, \cite[Lemma 3]{approx-parseval} implies that
\begin{equation}
\label{eq:approx-parseval}
 \|\vb_{\Lambda}\|_2^2 \le \frac{1}{1-\mu k}\cdot \sum_{\ell\in\Lambda} \langle \mv A_{\ell},\vb\rangle^2.
\end{equation}

Now consider the following sequence of relations:

{\allowdisplaybreaks
\begin{align}
\label{eq:bl-step2}
\|\vb_{\Lambda}\|_2^2&\le \frac{1}{1-\mu k}\cdot \sum_{\ell\in\Lambda} \langle \mv A_{\ell},\sum_{j\in {\Lambda'}} \beta_j \cdot \mv A_j\rangle^2 \\
&= \frac{1}{1-\mu k} \cdot \left( \sum_{\ell\in\Lambda}\left(\beta_{\ell}+ \sum_{j\in\Lambda'\setminus\{\ell\}} \beta_j\langle \mv{A}_{\ell},\mv{A}_j\rangle\right)^2\right) \notag\\
\label{eq:bl-step4}
&\le2\cdot \frac{1}{1-\mu k} \cdot \left( \sum_{\ell\in\Lambda}\left(\left(\beta_{\ell}\right)^2+ \left(\sum_{j\in\Lambda'\setminus\{\ell\}} \beta_j\langle \mv{A}_{\ell},\mv{A}_j\rangle\right)^2\right)\right)\\
\label{eq:bl-step5}
&\le2\cdot \frac{1}{1-\mu k} \cdot \left( \sum_{\ell\in\Lambda}\left(\left(\beta_{\ell}\right)^2+ \left(\sum_{j\in\Lambda'\setminus\{\ell\}} |\beta_j|\cdot |\langle \mv{A}_{\ell},\mv{A}_j\rangle|\right)^2\right)\right)\\
\label{eq:bl-step6}
&\le 2\cdot \frac{1}{1-\mu k} \cdot \left( \sum_{\ell\in\Lambda}\left(\left(\beta_{\ell}\right)^2+ \mu^2\cdot\left(\sum_{j\in\Lambda'\setminus\{\ell\}} |\beta_j|\right)^2\right)\right)\\
\label{eq:bl-step7}
&\le 2\cdot \frac{1}{1-\mu k} \cdot \left( \sum_{\ell\in\Lambda}\left(\left(\beta_{\ell}\right)^2+ \mu^2kL\cdot\sum_{j\in\Lambda'} \left(|\beta_j|\right)^2\right)\right)\\
&=2\cdot \frac{1}{1-\mu k} \cdot \left( \sum_{\ell\in\Lambda}\left(\left(\beta_{\ell}\right)^2+ \mu^2kL\cdot\|\vbeta\|_2^2\right)\right) \notag\\
\label{eq:bl-step9}
&\le2\cdot \frac{1}{1-\mu k} \cdot \left( \sum_{\ell\in\Lambda}\left(\left(\beta_{\ell}\right)^2+ \mu^2kL\cdot\frac{1}{1-\mu kL}\right)\right)\\
\label{eq:bl-step10}
&\le 4 \cdot \left( \sum_{\ell\in\Lambda}\left(\beta_{\ell}\right)^2+ 2\mu^2k^2L\right)
\end{align}
}

In the above, \eqref{eq:bl-step2} follows from \eqref{eq:b-decompose} and \eqref{eq:approx-parseval}. \eqref{eq:bl-step4} follows from Cauchy-Schwartz. \eqref{eq:bl-step5} follows since we replaced some potentially negative terms with positive terms, while \eqref{eq:bl-step6} follows from the definition of $\mu$. \eqref{eq:bl-step7} follows from Cauchy-Schwartz. \eqref{eq:bl-step9} follows from Lemma~\ref{lem:beta-bounds} and the bound on $\mu$ from \eqref{eq:k-mu-lb}. Finally, \eqref{eq:k-mu-lb} implies \eqref{eq:bl-step10}.

\eqref{eq:bl-step10} implies that
{\allowdisplaybreaks
\begin{align}
\sum_{\Lambda\in S}\|\vb_{\Lambda}\|_2^2 &\le 4\sum_{\Lambda\in S}\sum_{\ell\in\Lambda} \beta_{\ell}^2 +8\mu^2 k^2 L^2 \notag\\
\label{eq:k-final-s1}
 &\le 4L^{\gamma}\cdot \sum_{\ell\in\Lambda'} \beta_{\ell}^2 +8\mu^2 k^2 L^2\\
\label{eq:k-final-s2}
 &\le 4L^{\gamma}\cdot \frac{1}{1-\mu kL} +8\mu^2 k^2 L^2\\
\label{eq:k-final-s3}
 &\le 8L^{\gamma} +2 \\
\label{eq:k-final-s4}
 &\le 8L^{\gamma} +\frac{1}{4}\cdot L(1-\eps^2)\\
\label{eq:k-final-s5}
&< L(1-\eps^2).
\end{align}
In the above, \eqref{eq:k-final-s1}  follows from the fact that each atom appears in at most $L^{\gamma}$ sets in $S$. \eqref{eq:k-final-s2} follows from Lemma~\ref{lem:beta-bounds}, while \eqref{eq:k-final-s3} follows from \eqref{eq:k-mu-lb}. \eqref{eq:k-final-s4} follows from the subsequent argument. \eqref{eq:k-L-def} implies that
\[L\ge \frac{8}{1-\eps^2},\]
which in turn implies \eqref{eq:k-final-s4}.
\eqref{eq:k-final-s5} follows from \eqref{eq:k-L-def}.

\eqref{eq:k-final-s5} completes the proof.
}

\subsection{Euclidean codes}
\label{app:euclid}

\bdefn[Euclidean code]
In a Euclidean code, the codewords are points in a Euclidean space. The distance $\delta$ of the code is the minimum Euclidean distance between any pair of points.
\edefn

Given a Euclidean code of distance $\delta$, a received word $\mv w$, and an error bound $\epsilon$, the list decoding problem is to output a list of all codewords that are within Euclidean distance $\epsilon$ from $\mv w$.

\begin{lmm}[Radius of a sphere circumscribed about a regular simplex]
Given a regular simplex of $n$ vertices and unit edges, the radius of the circumscribed sphere is
\begin{equation}
\label{eq:simplex-radius}
\sqrt{\frac{n-1}{2n}}
\end{equation}
\end{lmm}
\bp
Consider the regular simplex whose $n$ vertices are $(1, 0, \ldots, 0)$, $(0, 1, 0, \ldots, 0)$, \ldots, $(0, \ldots, 0, 1)$. The center of the circumscribed sphere is $(1/n, \ldots, 1/n)$. The distance between this center and any vertex (i.e., the radius) is
\[\sqrt{\left(\frac{n-1}{n}\right)^2+(n-1)\left(\frac{1}{n}\right)^2}=\sqrt{\frac{n-1}{n}}\]
The distance between any two vertices (i.e. the edge length) is $\sqrt{2}$. \eqref{eq:simplex-radius} follows.
\ep

\bthm[Bound on list-decoding of Euclidean codes]
\label{thm:list-decoding-euclidean}
Given a Euclidean code of distance $\delta$ and an error bound $\epsilon$; if $\eps<\delta/\sqrt{2}$, then the maximum list size $L$ for any received word is bounded by
\begin{equation}
\label{eq:list-decoding-euclidean}
L\leq \left\lfloor\frac{1}{1-\frac{2\epsilon^2}{\delta^2}}\right\rfloor.
\end{equation}
Moreover, the above bound is tight if the right-hand side is $\leq m+1$, where $m$ is the dimension of the code.
\ethm
\bp
Bounding the list size in Euclidean codes corresponds to packing spheres (corresponding to codewords) whose radii are $\delta/2$ inside one big sphere whose radius is $\eps+\delta/2$ (this sphere corresponds to the received word).
It is a well-known fact that the tightest packing of spheres \cite{conway1999sphere, bor1978sphere} is achieved when the centers of adjacent spheres form a simplex. 
Therefore for every integer $n>1$, if $\epsilon$ is less than the radius of a sphere that circumscribes a regular simplex of $n$ vertices whose edge length is $\delta$, then the list size $L$ is less than $n$. Substituting from \eqref{eq:simplex-radius}, we get
$$\epsilon<\sqrt{\frac{n-1}{2n}}\delta  \quad\Rightarrow\quad L<n$$
\footnote{The special case of $n=2$ is used very often in error-correcting codes: If $\eps<\delta/2$, then we have a unique decoding.}
By rearranging the left-hand side of the above implication under the assumption $\eps<\delta/\sqrt{2}$, we get
$$\frac{1}{1-\frac{2\epsilon^2}{\delta^2}}<n  \quad\Rightarrow\quad L<n$$
for all integers $n>1$. All those implications can be combined in \eqref{eq:list-decoding-euclidean}.
Moreover, a regular simplex of $n$ vertices exists in $m$ dimensions if $n \leq m+1$. The tightness condition follows.
\ep

\subsection{Proof of Theorem~\ref{thm:list-decoding-spherical}}

\bp
In \eqref{eq:spherical-list-decoding}, the point $\mv A_i x$ that minimizes $\| \mv A_i x -\mv b \|_2$ is simply the projection of $\mv b$ onto the line that extends the vector $\mv A_i$. Therefore, the value of $\min_{x}\| \mv A_i x -\mv b \|_2$ corresponds to the sine of the angle between this line and $\mv b$. The column $\mv A_i$ satisfies \eqref{eq:spherical-list-decoding} if and only if this sine is $\leq\epsilon$.

Columns of $\mv A$ are points on the unit sphere. The angle between any two vectors $\mv A_i$ and $\mv A_j$ is no less than $\arccos(\mu(\mv A))$. Therefore, the Euclidean distance between any two points $\mv A_i$ and $\mv A_j$ is no less than $\delta$, where $\delta$ is twice the sine of half the angle $\arccos(\mu(\mv A))$. Using the formula for sine half an angle, we get
\begin{equation}
\label{eq:spherical-delta}
\delta=2\sqrt{\frac{1-\mu(\mv A)}{2}}
\end{equation}

Based on the simplex argument that we used earlier in the proof of Theorem~\ref{thm:list-decoding-euclidean}, we have the following: For every integer $n>1$, if $\epsilon$ is less than the radius of a sphere that circumscribes a regular simplex of $n$ vertices whose edge length is $\delta$, then the list size $L$ is less than $n$. By substitution from \eqref{eq:simplex-radius} and \eqref{eq:spherical-delta}
$$\epsilon<\sqrt{\frac{n-1}{n}(1-\mu(\mv A))} \quad\Rightarrow\quad L(\mv A,1,\eps,1)<n$$
By rearranging the left-hand side of the above implication under the assumption $\eps<\sqrt{1-\mu(\mv A)}$, we get
$$\frac{1}{1-\frac{\epsilon^2}{1-\mu(\mv A)}}<n  \quad\Rightarrow\quad L(\mv A,1,\eps,1)<n$$
for all integers $n>1$. All those implications can be combined in \eqref{eq:list-decoding-spherical}.
Moreover, a regular simplex of $n$ vertices exists on the surface of a sphere in $m$ dimensions if $n \leq m$. The tightness condition follows.
\ep

\subsection{Proof of Theorem~\ref{thm:list-decoding-k>1}}
\bp
Suppose that we have a list of $L$ solutions such that no atom appears in more than one of them.
This means that we have $L$ disjoint subsets $\Lambda_1, \ldots, \Lambda_{L}$ of $\leq k$ columns of $\mv A$; each one of them gives a $k$-sparse representation $\mv A_{\Lambda_1}\mv x_1, \ldots, \mv A_{\Lambda_L}\mv x_L$ of $\mv b$ with an error $\leq \epsilon$. Therefore, the angle between $\mv b$ and any of the vectors $\mv A_{\Lambda_1}\mv x_1, \ldots, \mv A_{\Lambda_L}\mv x_L$ is no more than $\arcsin(\epsilon)$. By definition of generalized coherence $\mu_k(\mv A)$, the angle between any two of the vectors $\mv A_{\Lambda_1}\mv x_1, \ldots, \mv A_{\Lambda_L}\mv x_L$ is no less than $\arccos(\mu_k(\mv A))$. Using the same simplex argument that we used earlier in the proof of Theorem~\ref{thm:list-decoding-spherical}, we get \eqref{eq:list-decoding-k>1}.
\ep

\subsection{Proof of Proposition~\ref{prop:mu-k-gen}}

\bp
The first part is by definition. The second part holds because the generalized coherence is 1 if and only if there exists a single vector that is spanned by two disjoint subsets $I, J$ of $\leq k$ columns. This happens when columns in $I \cup J$ are linearly dependent, which happens when the spark of $\mv A$ is no more than $2k$.
\ep

\subsection{Proof of Proposition~\ref{prop:mu_k-upperbound}}

\bp
Fix a matrix $\mv A$ and for notational convenience define $\mu=\mu(\mv A)$. Fix arbitrary $I,J\subseteq [N]$ such that $|I|=|J|=k$ and the sets are disjoint. Further, let $\mv x$ and $\mv y$ be arbitrary vectors such that $\|\mv A_I \mv x\|_2=\|\mv A_J \mv y\|_2=1$. We will show that
\[\left|\langle \mv A_I \mv x, \mv A_J \mv y\rangle\right|\le \frac{\mu k}{1-\mu (k-1)},\]
which would complete the proof.

Indeed,
\begin{align*}
\left|\langle \mv A_I \mv x, \mv A_J \mv y\rangle\right|&\le \sum_{i\in I}\sum_{j\in J} \left| x_i\cdot y_j\cdot\langle \mv A_i,\mv A_j\rangle\right|\\
&\le \mu\cdot \sum_{i\in I}\sum_{j\in J} |x_i|\cdot|y_j|\\
&= \mu\cdot\left(\sum_{i\in I} |x_i|\right)\cdot\left(\sum_{j\in J} |y_j|\right)\\
&\le \mu k\cdot\|\mv x\|_2\cdot \|\mv y\|_2,
\end{align*}
where the last inequality follows from Cauchy-Schwarz. To complete the proof, we show that
\[\|\mv x\|_2,\|\mv y\|_2\le \frac{1}{\sqrt{1-\mu (k-1)}}.\]
We will bound $\|\mv x\|_2$: the bound on $\|\mv y\|_2$ follows from a similar argument. Consider the following relationships:
\begin{align*}
1&=\left\|\sum_{i\in I} x_i\cdot \mv A_i\right\|_2^2\\
&\ge \sum_{i\in I} x_i^2 -\sum_{i\neq i'\in I} \left| x_i\cdot x_{i'}\langle \mv A_i,\mv A_{i'}\rangle\right|\\
&\ge \|\mv x\|_2^2 -\mu \cdot \sum_{i\neq i'\in I} \left| x_i\cdot x_{i'}\right|\\
&= (1+\mu)\|\mv x\|_2^2 -\mu \cdot \sum_{i, i'\in I} \left| x_i\cdot x_{i'}\right|\\
&\ge (1+\mu)\|\mv x\|_2^2 -\mu \cdot \left(\sum_{i\in I} \left| x_i\right|\right)^2\\
&\ge (1+\mu)\|\mv x\|_2^2 -\mu k\|\mv x\|_2^2,
\end{align*}
where the last inequality follows from Cauchy-Schwarz. Rearranging the final inequality above (under our assumption that $\mu<1/(k-1)$) completes the proof.
\ep

\subsection{Proof of Corollary~\ref{prop:mu_k-upperbound2}}

\bp
If $\mu(\mv A)\geq \frac{1}{k-1}$, then $\frac{\mu_k(\mv A)}{\mu(\mv A)}\leq k-1\leq 2k-1$.

If $\mu(\mv A) < \frac{1}{k-1}$, then from Proposition~\ref{prop:mu_k-upperbound}:
\begin{align*}
\frac{\mu_k(\mv A)}{\mu(\mv A)}&\leq \min\left(\frac{k}{1-(k-1)\cdot\mu(\mv A)}, \frac{1}{\mu(\mv A)}\right)\\
&\leq \max_{x \in [0, 1]} \min\left(\frac{k}{1-(k-1) x}, \frac{1}{x}\right)
\end{align*}
$k/[1-(k-1) x]$ is increasing in $x$, while $1/x$ is decreasing in $x$. The optimal value for $x$ is the one that makes them equal.
$$kx=1-(k-1)x$$
$$x=\frac{1}{2k-1}.$$
\ep

\subsection{Proof of Proposition~\ref{prop:mu_k-lowerbound}}

\bp
If $\mu(\mv A)> \frac{1}{2k-1}$, then $\frac{k\cdot\mu(\mv A)}{1-(k-1)\cdot\mu(\mv A)}>1$.

For every integer $k>1$ and real number $c \geq 2k-1$, we will define a dictionary $\mv A$ whose coherence $\mu(\mv A)\leq \frac{1}{2k-1}$ and whose generalized coherence satisfies
\begin{equation}
\mu_k(\mv A) \geq \frac{k\cdot\mu(\mv A)}{1-(k-1)\cdot\mu(\mv A)}.
\label{eq:u_k-LB-condition}
\end{equation}
Let $\mv I_{k}$ denote the identity matrix of size $k \times k$, and $\mv J_{k}$ denote the all-ones matrix of size $k \times k$.
$$\mv A':=
\frac{1}{\sqrt{c^2+2k-1}}
\begin{bmatrix}
(c+1)\cdot\mv I_k-\mv J_k\\
\mv J_k
\end{bmatrix}
$$
$$\mv A'':=
\frac{1}{\sqrt{c^2+2k-1}}
\begin{bmatrix}
\mv J_k\\
(c+1)\cdot\mv I_k-\mv J_k
\end{bmatrix}
$$
$$\mv A:=
\begin{bmatrix}
\mv A' & \mv A''
\end{bmatrix}
$$
$\mv A$ is chosen so that the absolute value of the dot-product of any pair of columns is the same, no matter whether both columns come from $\mv A'$, both from $\mv A''$, or one from $\mv A'$ and the other from $\mv A''$.
\begin{equation}
\mu(\mv A)\quad=\quad\frac{2(c-k+1)}{c^2+2k-1}
\label{eq:u_k-LB-u}
\end{equation}
Notice that when $c$ ranges from $2k-1$ to $\infty$, $\mu(\mv A)$ ranges from $\frac{1}{2k-1}$ down to 0.

Let $\mv v$ be an all-ones vector of length $k$.
From \eqref{eq:defn-u_k} by choosing $I=\{1, \ldots, k\}$, $J=\{k+1, \ldots, 2k\}$, $\mv x=\frac{\mv v}{\|\mv A' \mv v\|}$, and $\mv y=\frac{\mv v}{\|\mv A'' \mv v\|}$, we get
\begin{align}
\mu_k(\mv A)\quad&\geq\quad\frac{\left|\langle\mv A' \mv v, \mv A'' \mv v\rangle\right|}
{\|\mv A' \mv v\|\cdot \|\mv A'' \mv v\|}\nonumber\\
&=\quad\frac{2k(c-k+1)}{(c-k+1)^2+k^2}
\label{eq:u_k-LB-u_k}
\end{align}
\begin{align}
1-(k-1)\mu(\mv A)\quad
&=\quad\frac{c^2+2k-1-2(k-1)(c-k+1)}{c^2+2k-1}\nonumber\\
&=\quad\frac{(c-k+1)^2+k^2}{c^2+2k-1}
\label{eq:u_k-LB-1-(k-1)u}
\end{align}
Combining \eqref{eq:u_k-LB-u}, \eqref{eq:u_k-LB-u_k}, and \eqref{eq:u_k-LB-1-(k-1)u}, we get \eqref{eq:u_k-LB-condition}.
\ep

\section{Missing Proofs from Section~\ref{sec:listsparse}}

\subsection{Proof of Proposition~\ref{prop:finite_list}}

\begin{proof}
        If $k<\spark{\mv A}$, then every $k$ columns of $\mv A$ are independent. Therefore, every subspace of $k$ columns gives a unique solution. The only way in which we can have multiple optimal solutions is having multiple subspaces returning equally-good solutions. However, the total number of subspaces is finite ${N \choose k}$.

If $k\geq\spark{\mv A}$, then there exist $k$ dependent columns. Any
vector $\mv b$ in the span of those $k$ columns can be represented with zero
error in infinitely many ways.
\end{proof}

\subsection{Proof of Lemma~\ref{lem:bad_b}}

\begin{proof}
Noting that the matrix $\mv A$ has $N \geq k+1$ columns, we consider
three cases.

First, if $\mv A$ has a set $T$ of $k$ linearly {\em dependent} columns,
then any vector $\mv b$ in the span of $T$ can be represented by columns in $T$
in infinitely many ways.

Second, suppose every subset of $k$ columns of $\mv A$ are independent,
but there is a subset $S$ of $k+1$ columns that are dependent.
In this case, every $k$-subset $T$ of $S$ spans {\em the same} $k$-dimensional
subspace, which is the span of $S$.
Let $\mv b$ be a vector in the span of $S$ but not in the span of any
$(k-1)$-subset of $S$. Such a vector $\mv b$ must exist because there are only
finitely many $(k-1)$-subsets of $S$, each of whose spans is of dimensionality
$k-1$, which is less than the dimensionality of $S$.
Now, let $T$ and $T'$ be two different $k$-subsets of $S$. It follows
that the representations of $\mv b$ using $T$ and $T'$ are different
because $|T\cap T'|=k-1$ and thus $\mv b$ cannot be represented using only
vectors in $T\cap T'$.
Consequently, $\mv b$ can be represented in at least $\binom{k+1}{k}$ different
ways.

Third, suppose every $k+1$ columns are independent. Consider the first $k+1$ columns of $\mv A$. Those columns (i.e., $\mv A_1, \ldots, \mv A_{k+1}$) can be viewed as points on the surface of a $(k+1)$-dimensional unit sphere. In spherical geometry, they form a simplex. Let $\mv b'$ be the center of the sphere inscribed in that simplex. $\mv b'$ is within the same distance from the span of any $k$ columns out of $\{\mv A_1, \ldots, \mv A_{k+1}\}$. Start with an infinitely small sphere that corresponds to the point $\mv A_1$. Expand this sphere while it remains contained in the simplex and tangent to $k$ of its facets (the $k$ facets that contain $\mv A_1$). Keep expanding this sphere until it hits the span of any subset $T$ of $k$ columns of $\mv A$. When it does, the center of the sphere $\mv b$ must have $k+1$ optimal representations. The sphere cannot keep expanding forever because eventually it will have to hit the facet opposite to $\mv A_1$, in which case $\mv b=\mv b'$. The idea is depicted in Figure \ref{fig balls} for $k=2$.
\end{proof}

\begin{figure}[ht!]
\begin{center}
\includegraphics[width=0.8\textwidth]{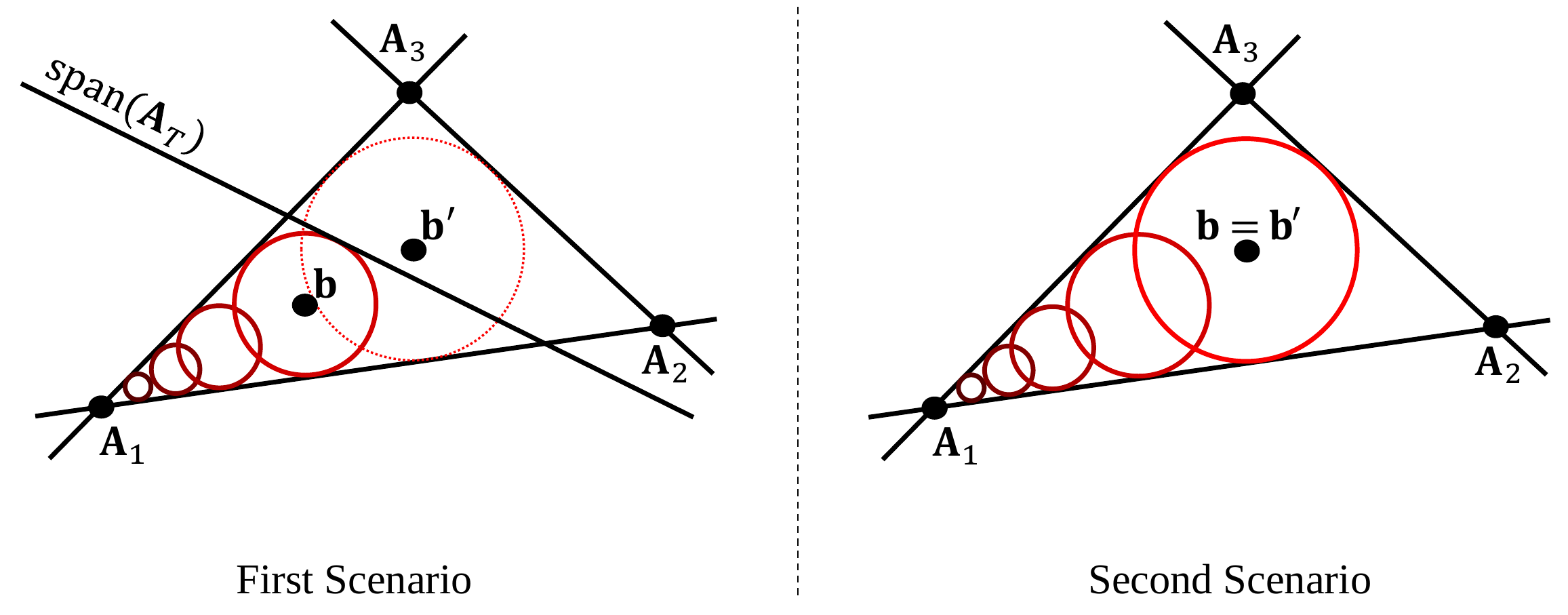}
\end{center}
\caption{The expanding sphere argument for $k=2$.}
\label{fig balls}
\end{figure}

\subsection{Proof of Proposition~\ref{prop:nec}}

\begin{proof}
Since $L$ is finite, from Proposition~\ref{prop:finite_list} we know
$k<\spark{\mv A}$. If $k=N$, then $\rank{\mv A} = N$.
If $k<N$, then from Lemma~\ref{lem:bad_b}, we have
\[ L \geq \max_{\mv b \neq \mv 0}s(\mv A, \mv b, k) \geq k+1. \]
 \end{proof}

\subsection{Proof of Proposition~\ref{prop:list-1}}

\begin{proof}
For the forward direction, suppose $\max_{\mv{\mv b \neq 0}} s(\mv A, \mv b,
k) = 1$. Then, from Proposition~\ref{prop:nec} either $k=N=\rank{\mv A}$ or
$k < \min\{1, \spark{\mv A}\}$. But $k \in [N]$; so it must hold that
$k=N=\rank{\mv A}$.

The backward direction is straightforward.
\end{proof}

\subsection{Proof of Proposition~\ref{prop:list-2}}

\begin{proof}
 When $L=2$, condition (\ref{eq:suf}) becomes
 $$k<\spark{\mv A} \wand \left[k=N \wor N=2\right],$$
 while condition (\ref{eq:nec}) becomes
 $$k<\spark{\mv A} \wand \left[k=N \wor k=1\right].$$
 They are still equivalent when $k\geq 2$.
 However, when $k=1$ and $N>2$, a gap emerges between the two conditions.

 When $k=1$ and $N>2$, the necessary and sufficient condition for having at
 most two solutions is that every two columns of $\mv A$ must be independent
 and every three columns must be dependent. This is equivalent to
 \begin{equation}\label{eq k=1}
 \rank{\mv A}=2 \wand \spark{\mv A}=3
 \end{equation}
 Combining (\ref{eq:suf}) with (\ref{eq k=1}), we get (\ref{eq:l<=2}).
 \end{proof}

\subsection{Proof of Proposition~\ref{prop:tight-k+1}}

\begin{proof}
From Lemma~\ref{lem:bad_b}, we have
$$\min_{\mv A \in \R^{m \times N}}\max_{\mv b\neq 0} s(\mv A, \mv b, k) \geq
k+1.$$
Hence, it is sufficient to construct a matrix $\mv A$ of dimension
$(k+1)\times N$ such that
\begin{equation}\label{eq:ub2}
\max_{\mv b\neq 0} s(\mv A, \mv b, k) \leq k+1.
\end{equation}
(For $m>k+1$, we simply pad the matrix with zeros.)
If $N=k+1$, we can construct $\mv A$ by choosing any arbitrary $k+1$ linearly
independent columns. The chosen $\mv A$ satisfies \eqref{eq:ub2} and has spark
$>k+1$. If $N>k+1$, then inductively we will assume that we have already
constructed a matrix $\mv A'$ of size $(k+1)\times(N-1)$ that satisfies
\eqref{eq:ub2} and has spark $>k+1$. We will construct $\mv A$ by adding an
$N$-th column to $\mv A'$ while maintaining \eqref{eq:ub2} and a spark $>k+1$.
We will start with choosing any arbitrary vector $\mv a_N$ (to be added to $\mv
A'$ to form $\mv A$) that does not belong to the span of any $k$ columns of $\mv A'$. (Such a vector $\mv a_N$ must exist because there is only a finite number of subspaces spanned by $k$ columns of $\mv A'$; each one of those subspaces is $k$-dimensional while the space has $k+1$ dimensions). By our initial choice of $\mv a_N$, $\spark{\mv A}>k+1$. We apply successive perturbation on $\mv a_N$ until $\mv A$ satisfies \eqref{eq:ub2}. Perturbations are infinitely small so that they don't interfere with $\spark{\mv A}>k+1$. Moreover, each perturbation is infinitely smaller than than the previous one such that it does not interfere with results of previous perturbations.

Because $\spark{\mv A}>k+1$, for every two subsets $S$ and $T$ of $k$ columns of $\mv A$, the spans of $S$ and $T$ are not identical unless $S$ and $T$ are identical. Given $k+1$ different subsets $S_1, \ldots, S_{k+1}$ of $k$ columns of $\mv A$, there are exactly $2^{k+1}$ unit vectors $\mv b$ that have the same distance to $\mspn{\mv A_{S_1}}, \ldots, \mspn{\mv A_{S_{k+1}}}$; one unit vector in each quadrant.

The successive perturbation of $\mv a_N$ goes as follows: Loop through all
possible choices of $k+1$ different subsets $S_1, \ldots, S_{k+1}$ of $k$
columns of $\mv A$ such that $\mv a_N$ appears in at least one subset (There are
finitely many such choices). For each such choice, loop through all unit vectors
$\mv b$ that have the same distance $d$ to $\mspn{\mv A_{S_1}}, \ldots,
\mspn{\mv A_{S_{k+1}}}$ (There are also finitely many such vectors). For each
such vector, loop through all possible choices of $k+1$ different subsets $T_1,
\ldots, T_{k+1}$ of $k$ columns of $\mv A$ such that each choice is different
from our earlier choice of $S_1, \ldots, S_{k+1}$. For each such choice, loop
through all unit vectors $\mv b'$ that have the same distance $d'$ to $\mspn{\mv
A_{T_1}}, \ldots, \mspn{\mv A_{T_{k+1}}}$. For each such vector, perturb $\mv
a_N$ such that either $d$ is different from $d'$ or $\mv b$ is different from
$\mv b'$. Each perturbation must be much smaller than the previous one such that
it does not interfere with effects of previous perturbations. After perturbation, no vector $\mv b$ has more than $k+1$ optimal solutions. Therefore, the perturbed $\mv A$ satisfies (\ref{eq:ub2}).
\end{proof}

\subsection{Proof of Proposition~\ref{prop:random-b}}

\begin{proof}
First, suppose $k<\spark{\mv A}-1$, i.e. every $k+1$ columns of $\mv
A$ are independent. This is equivalent to the fact that every set of
$k$ columns span a $k$-dimensional subspace that is different from the span of
any other choice of $k$ columns.
Because every set $S$ of $k$ columns are independent, if $\mv b$
is strictly closer to $\spn(S)$ than to $\spn(T)$ for every other subset $T$
of $k$ columns, then \sparse will have a unique solution.
Consider the Voronoi tessellation with the subspaces spanned by $k$-subsets
of columns of $\mv A$ as generators.
Because no two subspaces are identical, the boundary between the
Voronoi cells of any two subspaces has less dimensionality than the 
cells themselves. Therefore, the probability of $\mv b$ falling onto the
boundary is 0.

Conversely, consider the case when $k\geq\spark{\mv A}-1$. This means that
there exist $k+1$ dependent columns. Out of those, the span of some choice of
$k$ columns is identical to the span of some other choice of $k$ columns. The
two subspaces are identical. Therefore, they share an entire Voronoi cell that
has the same dimensionality as the whole space. The probability of $\mv b$
falling into this cell is not zero.
\end{proof}

\subsection{Proof of Proposition~\ref{prop:random-b-rank}}

\begin{proof}
A vector $\mv b$ has an infinite number of solutions if and only if it is within the Voronoi cell of some $k$ dependent columns of $\mv A$. The Voronoi cells of all choices of $k$ dependent columns are the only regions of the space associated with infinite numbers of solutions. Assume that $k\leq\rank{\mv A}$. This means that there exist $k$ independent columns of $\mv A$. If there exist $k$ dependent columns as well, then the span of those columns is properly contained in the span of some $k$ independent columns. Hence, the dimensionality of the Voronoi cell of the $k$ dependent columns is strictly less than the dimensionality of the Voronoi cell of the $k$ independent ones. Therefore, the probability of $\mv b$ falling into the former cell is zero.

Assume that $k>\rank{\mv A}$. This means that every $k$ columns of $\mv A$ are dependent. The Voronoi cells of all choices of $k$ dependent columns occupy the whole space. Therefore, with a probability of 1, the number of solutions would be infinite.
\end{proof}

\section{Missing Proofs from Section~\ref{sec:examples}}

\subsection{Proof of Lemma~\ref{lem:tight-eg}}

\bp
Let $m\ge 2$ be a large enough integer. We will define $\mv A^*$ to be the following $m\times m$ matrix. We will define the columns $\mv A_1^*,\dots,\mv A_m^*$ as follows: 
\[ \mv A_1^* =\frac{1}{\sqrt{m-1}}\cdot\sum_{j=2}^m \mv e_j,\]
and
for every $2\le i\le m$,
\[\mv A_i = \frac{1}{\sqrt{1+\eps^2 k}}\cdot \mv e_1 + \frac{\eps\sqrt{k}}{\sqrt{1+\eps^2k}}\cdot \mv e_i.\]

First, note that the coherence of $\mv A^*$ is given by
\[\max_{i\neq j>1} \langle\mv A_i^*,\mv e_1\rangle\cdot \langle\mv A_j^*,\mv e_1\rangle= \frac{1}{1+\eps^2k},\]
as required.

Finally, let $\vb=\mv e_1$. Then note that for every $\Lambda\subset \{2,\dots,m\}$ such that $|\Lambda|=k$, we have
\begin{align*}
\min_{\mv x\in\R^k}\|\mv A_{\Lambda}^*\mv x-\vb\|_2^2 &\le \left\|\sum_{j\in \Lambda} \frac{\sqrt{1+\eps^2k}}{k}\cdot \mv A_j^* -\mv e_1\right\|_2^2\\
&=\left\|\sum_{j\in \Lambda}\frac{1}{k}\cdot \mv e_1 + \sum_{j\in \Lambda} \frac{\eps}{\sqrt{k}}\cdot \mv e_j -\mv e_1\right\|_2^2 \\
& =\eps^2,
\end{align*}
as desired. Since there are at least $(m-1)/k$ choices of such disjoint $\Lambda$, the proof follows.
\ep

\subsection{$\Z_4$ Kerdock codes}
\label{app:kerdock}

Let $m > 3$ be an even integer and set $n = 2^m$. Let $K(m) \in \Z_4^n$ be the $\Z_4$ Kerdock code of length $n$. Consider the redundant dictionary $\mv A \in \C^n$ of size $n \times n^2$ given by ``exponentiating'' $K(m)$ codewords that are unique up to a scalar multiple of $\iconst = \sqrt{-1}$; i.e., each column of $\mv A$ corresponds to a (normalized) codeword $c_k \in K(m)$ and the $\ell$th row in that column corresponds to $\iconst$ raised to the power of the $\ell$the entry of $c_k$,
\[
	A_{\ell,k} = \frac{1}{\sqrt{n}} \iconst^{c_{\ell,k}}.
\]
From~\cite{CCKSKerdock1997}, we know that $\mv A$ is the union of $n$ orthonormal bases, $\mv A = [\mathcal{A}_1, \ldots, \mathcal{A}_n]$, with coherence $\mu(\mv A) = 1/\sqrt{n}$. For the fixed sparsity value $k = \sqrt{n}$, we create a set of vectors with a large list of different, disjoint sparse approximations. Let $\mv e_1$ be the first canonical basis vector in $\C^k$ (i.e., $\mv e_1$ has a `1' in the first coordinate, followed by $k-1$ `0's) and let $\mathbbm{1}$ be the vector of all `1's of length $n/k$. Construct the vector $\mv z = (\mathbbm{1} \otimes \mv e_1)$. Set
\[
	\mv x_i = [0, \ldots, 0, \mathcal{A}_i^{-1} \mv z, 0, \ldots, 0] \quad\text{for $i = 1,\ldots, n$.}
\]
It follows from~\cite{CalderbankSympleticKerdock:2007} that each of $\mv x_i$ has sparsity exactly $\sqrt{n}$.

We will argue the following:
\blmm
\label{lem:kerdock-lb}
For every integer $s\le n$, there exists a vector $\mv b$ such that
\begin{enumerate}
\item \label{item:num-solns} There are $\binom{n}{s}$ vectors $\mv x$ with distinct supports, each with sparsity $s\cdot \sqrt{n}$ such that $\mv A\mv x=\mv b$.
\item \label{item:overlap} Further in the set of solutions above, each atom in $\mv A$ appears in exactly $\frac{s}{n}$ fraction of the solutions.
\end{enumerate}
\elmm
\begin{proof}

Fix an $s\le n$ and define 
\[\mv b= s\cdot \mv z.\]
Now for every subset $S\subseteq [n]$ such that $|S|=s$, consider the vector
\[\mv x_S=\sum_{i\in S} \mv x_i.\]
Note that by definition of $\mv x_i$ and $\mv z$, we have that $\mv A\mv x_S=\mv b$. Further by definition of the vectors $\{\mv x_i\}_{i=1}^n$, we have that for distinct subsets $S$ and $T$, the vectors $\mv x_S$ and $\mv x_T$ have distinct supports. This implies that there are $\binom{n}{s}$ such vectors, which proves item~\ref{item:num-solns}.
Further, note that again by construction each column in $\mv A$ appears in exactly $\binom{n-1}{s-1}$ of these solutions, which in turn proves item~\ref{item:overlap}.
\end{proof}

Note that the above implies that for sparsity $k=s\cdot \sqrt{n}$, the coherence of the matrix satisfies $\mu(\mv A)=\frac{s}{k}$. We now use this example to illustrate the tightness of the bounds on $\mu$ in our upper bounds.

We begin with Corollary~\ref{cor:list-decoding-k>1}, which states that as long as $\mu<\frac{1}{2k-1}$, $L(\mv A,k,0,1)\le 1$. Note that Lemma~\ref{lem:kerdock-lb} (with $s=1$) implies that if we are allowed $\mu=\frac{1}{k}$, then we can have $L(\mv A,k,0,1)\ge n$. This implies that the bound of $\mu<\frac{1}{2k-1}$ in Corollary~\ref{cor:list-decoding-k>1} is necessary (and tight to within a factor of $2$).

Now consider Theorem~\ref{thm:gen-listapprox-ub}. Recall that we needed the condition $\mu\le O(1/k)$ to show that $L(\mv A,k,0,o(L))$ is $O(1)$. Now note that Lemma~\ref{lem:kerdock-lb} shows that this is necessary. In particular, as long as $s$ is $\omega(1)$ and at most $\sqrt{n}$, we have that $L(\mv A, k,0,r)$ can be super-polynomially large in $n$ under the condition that $\mu=\omega(1/k)$ even if we allow $r=o(L)$.\footnote{We only consider $s<\sqrt{n}$ since otherwise sparsity $k\ge n$, which is not an interesting regime.}


The Kerdock example above subsumes the next example of spikes and sines but we include it for completeness.

\subsection{Spikes and sines}
\label{app:spikes-and-sines}

Set $d \geq 2$, $n = 2^{2d}$, and consider the redundant dictionary $\mv A = [\mathcal{F},I]$  of size $n \times 2n$ that is the union of the Fourier basis and the canonical basis. Finally, set $k = 2^d = \sqrt{n}$. For a fixed sparsity value of $k$, we create a set of vectors with a list of different $k$-sparse approximations. Let $\mv e_1$ be the first canonical basis vector in $\R^k$ (i.e., $\mv e_1$ has a `1' in the first coordinate, followed by $k-1$ `0's) and let $\mathbbm{1}$ be the vector of all `1's of length $n/k$. Construct a vector $\mv v = (\mathbbm{1} \otimes \mv e_1)$. Set 
\[
	\mv z = \begin{bmatrix} \mv v \\ -{\mathcal F} \mv v \end{bmatrix}.
\]
It is clear that $\mv z$ is of length $2n$, has $2k$ nonzeros, and that $\mv A \mv z = \mv 0$ by construction. 

In an attempt to generate a long list of $k$-term representations, we set $\mv v^{(i)}(t) = \mv v(t-i)$ to be the $i$th circular shift of the vector $\mv v$ defined above. (Note that there can be at most $k$ circular shifts.) Then, set
\[
	\mv z^{(i)} = \begin{bmatrix} \mv v^{(i)} \\ -{\mathcal F} \mv v^{(i)} \end{bmatrix}.
\]
For each $i$, choose a set of size $k = \sqrt{n}$ from the support of $\mv z^{(i)}$ such that it has support exactly $k/2$ in $\mv v^{(i)}$ and call that set $\Omega_i$. Define $\mv x_1^{(i)} = \mv z|_{\Omega_i}$ and $\mv x_2^{(i)} = -\mv z + \mv x_1^{(i)}$ and construct
\[
	\mv x = \sum_{i}^k \mv x_1^{(i)}.
\]

Finally, define
\[\vb = \mv{A}\cdot\mv{x}.\]
We will argue the following:
\begin{lmm}
\label{lem:ss-lb}
There are $2^k$ solutions to the equation $\mv{A}\mv{x}=\vb$ each with sparsity $\Theta(k^2)$.
\end{lmm}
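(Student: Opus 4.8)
The plan is to exploit the fact that each shifted vector $\mv z^{(i)}$ lies in the kernel of $\mv A$, which produces for every shift $i$ two distinct $k$-sparse representations of the same partial target, and then to combine the $k$ shifts independently to manufacture $2^k$ global solutions. First I would record the Fourier-analytic fact underlying the construction: since $\mv v = \mathbbm{1}\otimes\mv e_1$ is a Dirac comb with spacing $k$, its DFT $\mathcal F\mv v$ is again a comb of spacing $n/k = k$, so $\mathcal F\mv v$ has exactly $k$ nonzeros; moreover a circular shift of $\mv v$ only modulates $\mathcal F\mv v$ by a pointwise unit-modulus phase $\iconst^{\,c\cdot i}$, hence $\mathcal F\mv v^{(i)}$ has the \emph{same} support $F$ (a fixed set of $k$ coordinates in the canonical block) for every $i$. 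From this, $\mv A\mv z^{(i)} = \mathcal F\mv v^{(i)} - \mathcal F\mv v^{(i)} = \mv 0$, so with $\mv b^{(i)} := \mv A\mv x_1^{(i)}$ we get $\mv A\mv x_1^{(i)} = \mv A\mv x_2^{(i)} = \mv b^{(i)}$, where $\mv x_1^{(i)} = \mv z^{(i)}|_{\Omega_i}$ and $\mv x_2^{(i)} = -\mv z^{(i)}|_{\,\msupp{\mv z^{(i)}}\setminus\Omega_i}$ are each $k$-sparse (one half of the $2k$-sized support of $\mv z^{(i)}$ each).

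Next I would set up the product structure. For a choice vector $\mv c\in\{1,2\}^k$ define $\mv x^{(\mv c)} = \sum_{i=1}^{k}\mv x_{c_i}^{(i)}$. Since $\mv A\mv x_{c_i}^{(i)} = \mv b^{(i)}$ regardless of the value of $c_i$, linearity gives $\mv A\mv x^{(\mv c)} = \sum_{i=1}^{k}\mv b^{(i)} = \mv b$ for all $2^k$ choices, so each $\mv x^{(\mv c)}$ is a valid solution to $\mv A\mv x = \mv b$.

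The sparsity count and the distinctness both reduce to support bookkeeping. The Fourier-block support of $\mv x_{c_i}^{(i)}$ sits inside the residue class $\{t : t\equiv 1+i \pmod{k}\}$, and these $k$ classes are pairwise disjoint; hence in the Fourier block there is no cancellation across the summands and $\mv x^{(\mv c)}$ has exactly $k\cdot(k/2) = k^2/2$ nonzeros there, while the canonical block (all supported on $F$, with $|F| = k$) contributes at most $k$, giving total sparsity in $[\,k^2/2,\; k^2/2 + k\,] = \Theta(k^2)$. For distinctness, if $\mv c\neq \mv c'$ differ at some coordinate $i$, then restricted to the residue class $1+i$ the Fourier block of $\mv x^{(\mv c)}$ equals that of $\mv x_1^{(i)}$ in one case and of $\mv x_2^{(i)}$ in the other; these two have disjoint supports (the chosen $k/2$ comb positions versus the complementary $k/2$), so the $2^k$ vectors have pairwise distinct supports and are in particular distinct.

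The main obstacle I expect is the careful verification of the Fourier-support claims: that the transform of a spacing-$k$ comb is a spacing-$k$ comb (so $\mathcal F\mv v$ has exactly $k$ nonzeros), that circular shifts act as support-preserving modulations (so all $\mathcal F\mv v^{(i)}$ share the support $F$), and consequently that the Fourier-block supports are disjoint across the $k$ shifts. Once that structural fact is in place, everything else—linearity producing $2^k$ solutions, the $\Theta(k^2)$ sparsity, and distinctness—is elementary support counting, using only that $k = 2^d$ is even so that $k/2$ is an integer.
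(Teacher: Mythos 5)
Your proposal is correct and follows essentially the same route as the paper's proof: exploit that each $\mv z^{(i)}$ lies in the kernel of $\mv A$ so the two halves $\mv x_1^{(i)},\mv x_2^{(i)}$ can be swapped independently across the $k$ shifts, then use the pairwise-disjoint comb supports (the coefficients on the Fourier atoms) for distinctness and the sparsity count. The only differences are cosmetic refinements: you make explicit the Fourier facts the paper leaves implicit (comb maps to comb, shifts act as support-preserving modulations), and you obtain the slightly tighter sparsity window $[k^2/2,\,k^2/2+k]$ in place of the paper's $[k^2/2,\,k^2]$.
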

\begin{proof} We will construct $2^k$ solutions with the required sparsity. Indeed, we claim that any solution of the form $\sum_i \mv y^{(i)}$, where $\mv y^{(i)}\in \{x_1^{(i)}, x_2^{(i)}\}$ satisfies the required properties. Note that there are $2^k$ total such vectors. Next we formally define the vectors and argue their properties.

For every $\mv{c}\in \{0,1\}^k$, consider the vector
\[\mv y_{\mv{c}}= \sum_{i=1}^k \mv y^i_{\mv{c}},\]
where $\mv y^i_{\mv{c}}=\mv x_1^{(i)}$ if $\mv c_i=1$ and $\mv y^i_{\mv{c}}=\mv x_2^{(i)}$ otherwise.

First note that since by construction, $\mv{A} \mv x_1^{(i)}=\mv{A} \mv x_2^{(i)}$ for every $i\in [k]$, we have that for every $\mv c\in\{0,1\}^k$, $\mv{A} \mv y_{\mv c}=\vb$ as required.

Next, we argue that all $\mv y_{\mv{c}}$ are distinct and that they have the required sparsity. We note two properties:
\begin{enumerate}
\item[(i)] First note that for every $i\in [k]$, by definition, $\msupp{\mv v^{(i)}}$ are disjoint. 
\item[(ii)] Further, for any fixed $i\in [k]$, we have $\msupp{\mv x_1^{(i)}}\cap \msupp{\mv v^{(i)}}$ and $\msupp{\mv x_2^{(i)}}\cap \msupp{\mv v^{(i)}}$ are disjoint and have size exactly $k/2$. 
\end{enumerate}

These two facts immediately imply that all the vectors $\mv y_{\mv c}$ are distinct (and hence there are $2^k$ of them). Finally, since all of the $2k$ vectors $\mv x_1^{(i)}$ and $\mv x_1^{(i)}$ have sparsity exactly $k$, every $\mv y_{\mv c}$ has sparsity at most $k^2$. Further, the above two facts imply that each $\mv y_{\mv c}$ has sparsity at least $k^2/2$ (since it gets ``fresh" $k/2$ ones in $\msupp{\mv v^{(i)}}$ from each $\mv y^i_{\mv c}$).

\end{proof}

\end{document}